\newtheorem{theorem}{Theorem}[section]
\newtheorem{definition}{Definition}[section]
\newtheorem{proposition}{Proposition}
\newtheorem{lemma}[theorem]{Lemma}
\newtheorem*{remark}{Remark}
\begin{document}

\title[Paperfolding Structures as Templates for Horseshoes]{Paperfolding Structures as Templates for Horseshoes in Multidimensional Hénon Maps}

\author{Jizhou Li$^1$, Keisuke Fujioka$^2$, and Akira Shudo$^2$}

\address{$^1$Research Institute for Electronic Science, Hokkaido University, Sapporo, Hokkaido 001-0020, Japan}
\address{$^2$Department of Physics, Tokyo Metropolitan University, Tokyo 192-0397, Japan}

\ead{jizhouli@es.hokudai.ac.jp \quad shudo@tmu.ac.jp}
\vspace{10pt}
\begin{indented}
\item[]September 2025
\end{indented}

\begin{abstract}
We propose a novel framework for analyzing the geometric structure of horseshoes arising in three- and four-dimensional Hénon-type maps by introducing paperfolding structures as geometric templates. These structures capture the folding and stacking mechanisms characteristic of high-dimensional chaotic dynamics, offering a combinatorial and visual language to describe complex horseshoe formations. By systematically relating iterated paperfolding patterns to the observed geometries in multidimensional maps, our approach provides a concrete method for visualizing and classifying the topological features of chaotic sets in dimensions higher than two. This framework offers new insight into the organization of chaos in higher-dimensional discrete dynamical systems.
\end{abstract}

%
%
%
%
%

\section{Introduction}

In a recent work~\cite{Fujioka25}, we derived a sufficient condition for the existence of a topological horseshoe and uniform hyperbolicity in a four-dimensional symplectic map obtained by coupling two classical two-dimensional H\'{e}non maps through linear terms. This coupled system served as a minimal model for higher-dimensional horseshoe dynamics. In the present paper, we extend this approach to explore new classes of horseshoe topologies in three- and four-dimensional settings by constructing and analyzing several variants of coupled H\'{e}non maps. The resulting configurations, which cannot occur in two or lower dimensions, are intended as geometric \emph{templates} for future studies of multidimensional chaotic structures.

The two essential mechanisms underlying the classical Smale horseshoe~\cite{Smale65}—stretching and folding of phase-space volumes—lead to the divergence of nearby trajectories and to mixing via re-injection. In two dimensions, the H\'{e}non map~\cite{Henon76} is the canonical polynomial example exhibiting this mechanism. Its topological horseshoe has been established through both elementary~\cite{Devaney79} and more advanced~\cite{bedford2004real} arguments. Higher-dimensional extensions, particularly coupled H\'{e}non maps~\cite{mao1988standard,baier1990maximum,ding1990algebraic,todesco1994analysis,astakhov2001multistability,anastassiou2018recent,backer2020elliptic}, have been studied to realize hyperchaos~\cite{rossler1979equation} and to reveal dynamical phenomena absent in two dimensions.

In physics and chemistry, interest in higher-dimensional Hamiltonian dynamics has grown steadily. Scenarios with coexisting regular and chaotic orbits arise in diverse contexts such as celestial mechanics, chemical reaction dynamics~\cite{contopoulos2002order,laskar1994large,wiggins1994normally}, and Lagrangian fluid dynamics~\cite{wiggins2005dynamical}. While much attention has been paid to invariant structures in regular regions and their bifurcations, the possible variety of horseshoe configurations in chaotic regions has received far less exploration.

Although Smale originally formulated the horseshoe in arbitrary dimensions~\cite{Smale67}, the classical picture effectively reduces to one-dimensional unstable and stable directions, even in higher-dimensional settings. This restricts the topology to singly folded or multiply folded configurations with creases in the same direction. A natural generalization is to allow foldings along independent crease directions in a multidimensional hypercube—an arrangement that has been qualitatively illustrated~\cite{Wiggins88} or modeled via piecewise linear maps~\cite{zhang2019chaotic}, but not realized in an explicit minimal form.

Here we propose explicit H\'{e}non-type maps that generate such multidimensional foldings. Depending on parameters, these systems produce five distinct horseshoe configurations, each associated with a \emph{paperfolding template} that encodes the dimensionality of the folded ``sheet'' and the choices of crease and stacking directions. These examples are not exhaustive; rather, they serve as representative cases toward building a broader catalogue of multidimensional horseshoe geometries and understanding their dynamical consequences.

The remainder of the paper is organized as follows. Section~\ref{Topological horseshoes via horizontal and vertical slabs} reviews the fundamentals of topological horseshoes, focusing on the geometric framework of horizontal and vertical slabs. Section~\ref{Paperfolding structures} introduces the notion of paperfolding structures and establishes the notation used throughout. Section~\ref{Horseshoe structures in Henon-type maps} presents four H\'{e}non-type maps in three and four dimensions, each exhibiting horseshoe configurations described by specific templates. Finally, Section~\ref{Conclusion} summarizes the findings and outlines directions for future research.

\section{Topological horseshoes via horizontal and vertical slabs}
\label{Topological horseshoes via horizontal and vertical slabs}

This section introduces the geometric and topological structures—namely, horizontal slices, horizontal slabs, and vertical slabs—that underpin our construction of symbolic dynamics. The concepts and results presented here are not new; they are adapted from the classical theory of symbolic dynamics associated with chaotic invariant sets, particularly the Moser-Conley framework. Standard references for these ideas include the monographs of Moser~\cite{Moser01} and Wiggins~\cite{Wiggins88}, where such constructions are rigorously developed in the context of uniformly hyperbolic systems.

Our treatment differs from these classical works in a fundamental way: while Moser and Wiggins impose additional metric conditions—such as Lipschitz conditions of the slices, cone conditions, or the shrinking of diameters of slabs under iteration—to establish conjugacy to a full shift and uniform hyperbolicity, we do not. In this paper, we are primarily concerned with the topological aspects of phase-space geometry. Our goal is to establish the existence of \emph{topological horseshoes}, which suffices to guarantee a \emph{semi-conjugacy} between a subset of the invariant set and the full shift on a finite number of symbols. The absence of metric conditions limits us to topological conclusions, but is better suited to the settings we wish to study, including higher-dimensional systems where uniform hyperbolicity may not hold.

Let \( \mathbb{R}^n = \mathbb{R}^{n_u} \oplus \mathbb{R}^{n_s} \) denote a decomposition of phase space into two complementary linear subspaces: the subspace \( \mathbb{R}^{n_u} \) represents the directions of dominant expansion (referred to as the \emph{horizontal} directions), while \( \mathbb{R}^{n_s} \) represents the directions of dominant contraction (referred to as the \emph{vertical} directions). This splitting is not unique, but is typically chosen to align with the local dynamics of a given map. For instance, when analyzing the behavior of a homeomorphism near a hyperbolic fixed point in \( \mathbb{R}^n \), one may take \( \mathbb{R}^{n_u} \) to be the unstable subspace (tangent to the unstable manifold) and \( \mathbb{R}^{n_s} \) to be the stable subspace (tangent to the stable manifold).

Let \( I^u \subset \mathbb{R}^{n_u} \) and \( I^s \subset \mathbb{R}^{n_s} \) be compact \( n_u \)- and \( n_s \)-dimensional disks, respectively. We define the domain of interest as the product set
\begin{equation}\label{eq:R definition}
R = I^u \times I^s \subset \mathbb{R}^n,
\end{equation}
which is topologically a \emph{hyper-cylinder}: a direct product of two compact Euclidean disks. While the general theory allows \( R \) to be any such product domain, in later sections we shall, for concreteness and simplicity of analysis, often take \( R \) to be a \emph{hypercube}—a rectangular box aligned with the coordinate axes.

\begin{definition}[Horizontal Disk]\label{Horizontal disk definition}
For an arbitrary \(s_0 \in I^s\), define the \emph{horizontal disk at level \(s_0\)} by
\[
d(s_0) = I^u \times \{s_0\}
= \bigl\{ (u,s) \in I^u \times I^s \,\big|\, s = s_0 \bigr\}.
\]
Equivalently, the family \(\{d(s_0)\}_{s_0 \in I^s}\) consists of all horizontal cross-sections of \(R\).
\end{definition}

\begin{definition}[Horizontal Slice]\label{Horizontal slice definition}
A set \( h \subset R \) is called a \emph{horizontal slice of \( R \)} if there exists a continuous function
\[
\psi : I^u \to I^s
\]
such that
\[
h = \left\{ (u, s) \in I^u \times I^s \,\middle|\, s = \psi(u) \right\}.
\]
That is, the slice consists of exactly one point in the vertical direction for each horizontal coordinate.
\end{definition}

\begin{definition}[Horizontal Slab]
A set \( H \subset R \) is called a \emph{horizontal slab of \( R \)} if there exists a continuous set-valued map
\[
\Psi : I^u \to \mathcal{C}(I^s),
\]
such that:
\begin{enumerate}
    \item For each \( u \in I^u \), the fiber \( \Psi(u) \subset I^s \) is a nonempty closed set;
    \item The map \( u \mapsto \Psi(u) \) is continuous with respect to the Hausdorff metric topology on \( \mathcal{C}(I^s) \).
\end{enumerate}
Then the slab \( H \) is defined as
\[
H = \left\{ (u, s) \in I^u \times I^s \,\middle|\, s \in \Psi(u) \right\}.
\]

Here, \( \mathcal{C}(I^s) \) denotes the space of all nonempty closed subsets of \( I^s \), equipped with the \emph{Hausdorff metric}. Given two sets \( A, B \in \mathcal{C}(I^s) \), their Hausdorff distance is defined by
\[
d_H(A, B) = \max\left\{ \sup_{a \in A} \inf_{b \in B} \|a - b\|,\ \sup_{b \in B} \inf_{a \in A} \|b - a\| \right\},
\]
which metrizes the convergence of sets. Continuity of \( \Psi \) means that small changes in \( u \) lead to small (Hausdorff) changes in the fiber \( \Psi(u) \).
\end{definition}

\begin{definition}[Vertical Slab]
A set \( V \subset R \) is called a \emph{vertical slab of \( R \)} if there exists a continuous set-valued map
\[
\Phi : I^s \to \mathcal{C}(I^u),
\]
such that:
\begin{enumerate}
    \item For each \( s \in I^s \), the fiber \( \Phi(s) \subset I^u \) is a nonempty closed set;
    \item The map \( s \mapsto \Phi(s) \) is continuous with respect to the Hausdorff metric on \( \mathcal{C}(I^u) \).
\end{enumerate}
Then the slab \( V \) is defined as
\[
V = \left\{ (u, s) \in I^u \times I^s \,\middle|\, u \in \Phi(s) \right\}.
\]
\end{definition}

\begin{proposition}[Semi-conjugacy to the full shift on \( N \) symbols]\label{Semi-conjugacy to the full shift on N symbols}
Let \( f : \mathbb{R}^n \to \mathbb{R}^n \) be a homeomorphism, and let \( R = I^u \times I^s \subset \mathbb{R}^{n_u} \times \mathbb{R}^{n_s} \) be the compact hypercylinder defined in Eq.~(\ref{eq:R definition}). Suppose that for every \( s \in I^s \), the horizontal disk
\[
d(s) := I^u \times \{s\} \subset R
\]
satisfies
\[
f(d(s)) \cap R = h_1(s) \sqcup h_2(s) \sqcup \dots \sqcup h_N(s),
\]
where each \( h_i(s) \subset R \) is a horizontal slice of \( R \).

Then:
\begin{enumerate}
    \item There exists a nonempty compact invariant set \( \Lambda \subset \bigcap_{n \in \mathbb{Z}} f^n(R) \).
    \item The restricted map \( f|_\Lambda \) is semi-conjugate to the full shift on \( N \) symbols.
\end{enumerate}
\end{proposition}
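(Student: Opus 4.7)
The plan is to build horseshoe-style slabs out of the hypothesis and then read off the symbolic semi-conjugacy by a nested-intersection argument. From the hypothesis, each image $f(d(s))\cap R$ is a disjoint union of $N$ horizontal slices $h_1(s),\ldots,h_N(s)$; continuity of $f$ together with connectedness of $I^s$ permits a globally consistent labelling, and the unions $H_i := \bigcup_{s\in I^s} h_i(s)$ form $N$ pairwise disjoint horizontal slabs of $R$ with $f(R)\cap R = \bigsqcup_i H_i$. Since $f$ is a homeomorphism and $h_i(s)\subset f(d(s))$, each preimage $f^{-1}(h_i(s))$ lies in $d(s)$, so the sets $V_i := f^{-1}(H_i)\cap R$ are $N$ pairwise disjoint vertical slabs of $R$ satisfying $f(V_i)=H_i$ and $f^{-1}(R)\cap R = \bigsqcup_i V_i$.

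Next I would define the coding map. Let $\Lambda := \bigcap_{n\in\mathbb{Z}} f^n(R)$, the set of points whose full orbit stays in $R$; this set is compact and $f$-invariant by construction, and its non-emptiness will fall out of the surjectivity argument below. For $p\in\Lambda$ each iterate $f^n(p)$ lies in $R\cap f^{-1}(R)=\bigsqcup_i V_i$, hence in a unique $V_i$, so setting $\pi_n(p)=i\iff f^n(p)\in V_i$ defines a map $\pi:\Lambda\to\Sigma_N$, where $\Sigma_N=\{1,\ldots,N\}^{\mathbb{Z}}$ is the full shift space. Equivariance $\pi\circ f=\sigma\circ\pi$ is immediate, and continuity of $\pi$ follows from the mutual disjointness of the closed sets $V_i$ together with continuity of $f$.

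The heart of the argument is surjectivity of $\pi$. Fix a sequence $\mathbf{s}\in\Sigma_N$ and for each $k\ge 0$ set
\[
\Lambda_k^{\mathbf{s}} := \bigcap_{n=-k}^{k} f^{-n}(V_{s_n}).
\]
I would prove by induction on $k$ that $\Lambda_k^{\mathbf{s}}$ is non-empty. The induction step reduces to a single topological intersection claim: every horizontal slab of $R$ meets every vertical slab of $R$. Because $f$ sends vertical slabs to horizontal slabs while horizontal slabs restricted to any $V_j$ remain horizontal slabs of $V_j$, this claim lets one refine $\Lambda_{k-1}^{\mathbf{s}}$ to a non-empty $\Lambda_k^{\mathbf{s}}$. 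Once each $\Lambda_k^{\mathbf{s}}$ is known to be non-empty, $(\Lambda_k^{\mathbf{s}})_{k\ge 0}$ is a nested family of non-empty compact sets and the finite intersection property produces $p\in\bigcap_k\Lambda_k^{\mathbf{s}}\subset\Lambda$ with $\pi(p)=\mathbf{s}$; non-emptiness, compactness and invariance of $\Lambda$ then follow immediately.

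The hard part will be the topological intersection claim. Without Lipschitz, cone, or diameter-shrinking conditions, the fibers $\Phi(s)$ and $\Psi(u)$ of the slabs can be arbitrary closed sets, so the standard graph-transform and contraction arguments used by Moser and Wiggins are unavailable. One must instead rely on a purely topological, degree-theoretic argument — essentially a Poincar\'{e}--Miranda-type principle recording that a horizontal slab and a vertical slab in $R$ link along $\partial R$ — which is precisely where the present setting departs from the classical metric theory and which also explains why only a semi-conjugacy, rather than a full conjugacy, is obtained.
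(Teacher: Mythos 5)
Your proposal follows essentially the same route as the paper's proof: form the horizontal slabs \(H_i=\bigcup_{s}h_i(s)\), pull them back to vertical slabs \(V_i=f^{-1}(H_i)\) (using \(f^{-1}(h_i(s))\subset d(s)\)), build cylinder sets inductively, extract a point of each cylinder by nested compactness, and read off the itinerary map \(\pi\). Your observations that \(R\cap f^{-1}(R)=\bigsqcup_i V_i\), that continuity of \(\pi\) follows from the \(V_i\) being disjoint compact sets, and that defining \(\pi\) directly on \(\bigcap_n f^n(R)\) is equivalent to the paper's \(\Lambda=\bigcup_{\mathbf i}\Lambda_{\mathbf i}\), are all correct and match the paper's construction up to packaging.

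The genuine gap is the step you yourself flag as ``the hard part'': the claim that every vertical slab meets every horizontal slice (and slab). Your proof of non-emptiness of the cylinder sets, and hence surjectivity of \(\pi\), rests entirely on this claim, and you do not prove it --- you only gesture at a Poincar\'e--Miranda/degree argument. Note that with the paper's very weak definition of a slab (fibers are arbitrary nonempty closed sets, Hausdorff-continuous in the base point), the claim amounts to a generalized fixed-point statement: given \(\Phi:I^s\to\mathcal C(I^u)\) and a slice graph \(\psi:I^u\to I^s\), find \(u\) with \(u\in\Phi(\psi(u))\). For \(n_u=1\) this follows from the intermediate value theorem applied to the continuous selection \(s\mapsto\min\Phi(s)\), but for \(n_u\ge 2\) (the case actually used for \(f_{\rm II}\), \(f_{\rm III}\), \(f_{\rm IV}\)) no continuous selection need exist and the statement is not automatic; some additional structure of the specific slabs produced by \(f\), or a genuine linking argument, is required. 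To be fair, the paper's own proof does not supply this argument either --- it is hidden in the unproved assertion that \(K_a(s)=f^{-1}(V\cap h_a(s))\) has nonempty fibers and that each \(\Lambda_k\) is nonempty --- so you have correctly located where the real mathematical content of the proposition lies; but as written your proposal, like the paper's proof, is incomplete at exactly that point, and you should either prove the intersection lemma for the slabs that actually arise or add a hypothesis (e.g.\ connected or full-width fibers) under which the degree argument goes through.
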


\begin{proof}
For each \( i = 1, \dots, N \), define the horizontal slab
\[
H_i := \bigcup_{s \in I^s} h_i(s) \subset R.
\]
Each \( H_i \) is a horizontal slab of \( R \), as it consists of horizontal slices varying continuously in \( s \), and the collection \( \{H_i\}_{i=1}^N \) is pairwise disjoint.

We now show that the preimage \( V_i := f^{-1}(H_i) \subset R \) is a vertical slab. For each \( s \in I^s \), since \( h_i(s) \subset f(d(s)) \), we have \( f^{-1}(h_i(s)) \subset d(s) \). Each preimage is compact because \( f \) is a homeomorphism and \( h_i(s) \) is compact. As \( s \) varies, the slices \( h_i(s) \) vary continuously in the Hausdorff metric, and so do their preimages. Define
\[
V_i := \bigcup_{s \in I^s} f^{-1}(h_i(s)).
\]
Then \( V_i \) is a vertical slab of \( R \).

Let \( \Sigma_N := \{1, \dots, N\}^\mathbb{Z} \) be the full shift space. For each finite forward sequence \( (i_0, i_1, \dots, i_k) \), define
\[
V_{i_0, i_1, \dots, i_k} := V_{i_0} \cap f^{-1}(V_{i_1}) \cap \dots \cap f^{-k}(V_{i_k}).
\]
We claim that each \( V_{i_0, \dots, i_k} \) is a vertical slab and nonempty. This follows inductively from the following lemma.

\begin{lemma}
Let \( V \subset V_i \) be a vertical slab and \( a \in \{1, \dots, N\} \). Then \( f^{-1}(V) \cap V_a \) is a vertical slab of \( R \).
\end{lemma}

\begin{proof}
Define
\[
K_a(s) := f^{-1}(V \cap h_a(s)).
\]
Since \( h_a(s) \subset f(d(s)) \), the preimage \( f^{-1}(V \cap h_a(s)) \subset d(s) \). Each \( K_a(s) \) is closed, and the map \( s \mapsto K_a(s) \) varies continuously in the Hausdorff topology. Define
\[
\tilde{V}_a := \bigcup_{s \in I^s} K_a(s) = f^{-1}(V) \cap V_a.
\]
Then \( \tilde{V}_a \subset R \) is a vertical slab.
\end{proof}

Using this lemma inductively, all forward intersections \( V_{i_0, \dots, i_k} \) are nonempty vertical slabs.

Now define, for any backward sequence \( (i_{-1}, \dots, i_{-k}) \),
\[
H_{i_{-1}, \dots, i_{-k}} := H_{i_{-1}} \cap f(H_{i_{-2}}) \cap \dots \cap f^{k-1}(H_{i_{-k}}).
\]
Similar derivations show that all such sets are nonempty horizontal slabs of \( R \).

Fix any bi-infinite sequence \( \mathbf{i} = (\dots, i_{-1}, i_0, i_1, \dots) \in \Sigma_N \). For each \( k \geq 0 \), define
\[
\Lambda_k := V_{i_0, \dots, i_k} \cap H_{i_{-1}, \dots, i_{-k}}.
\]
Each \( \Lambda_k \subset R \) is compact and nonempty, and the sequence \( \{\Lambda_k\} \) is nested. Define
\[
\Lambda_{\mathbf{i}} := \bigcap_{k \geq 0} \Lambda_k.
\]
Then \( \Lambda_{\mathbf{i}} \subset R \) is a nonempty compact set. In general, \( \Lambda_{\mathbf{i}} \) may contain more than one point, since no assumptions are made about the widths of the slabs.

Define the invariant set
\[
\Lambda := \bigcup_{\mathbf{i} \in \Sigma_N} \Lambda_{\mathbf{i}} \subset \bigcap_{n \in \mathbb{Z}} f^n(R).
\]
The map
\[
\pi : \Lambda \to \Sigma_N, \quad \pi(x) = \mathbf{i}\quad \mathrm{ if }\quad x \in \Lambda_{\mathbf{i}}
\]
is well-defined and continuous. Moreover,
\[
\pi(f(x)) = \sigma(\pi(x)),
\]
where \( \sigma \) is the left shift on \( \Sigma_N \). Thus, \( f|_\Lambda \) is semi-conjugate to the full shift on \( N \) symbols.
\end{proof}

\section{Paperfolding structures}
\label{Paperfolding structures}

In this section, we introduce the concept of \emph{paperfolding structures}, which will later serve as geometric templates for describing the configuration of horseshoes in higher-dimensional dynamical systems. Paperfolding structures \cite{Dekking82a,Dekking82b,Dekking82c,Dekking12,Ben-Abraham13,Gahler14}, which have appeared in diverse contexts ranging from origami mathematics to symbolic dynamics and discrete geometry, provide a convenient and intuitive language for encoding recursive folding operations in Euclidean space. 

A classical example arises from the process of repeatedly folding an infinitely long paper strip in half, always folding it in the middle, and then unfolding it to observe the resulting pattern of creases. When viewed from a direction perpendicular to the paper, each crease can be classified as either a valley or a crest, depending on its local geometry. This simple procedure generates a rich combinatorial structure, forming the basis of the one-dimensional theory of paperfolding, as developed in detail in \cite{Dekking82a,Dekking82b,Dekking82c}. Natural extensions of this theory consider the recursive folding of two- or higher-dimensional sheets and the resulting crease patterns that emerge upon unfolding, as explored in \cite{Dekking12,Ben-Abraham13,Gahler14}. 

Our focus here is not on the combinatorial or symbolic aspects of paperfolding, but rather on its geometric realization as a sequence of folding transformations applied to an initial domain.

In the classical study of horseshoe dynamics, one typically begins by specifying a fundamental region \( R \subset \mathbb{R}^n \) and examining its deformation under an iterated map \( f: \mathbb{R}^n \to \mathbb{R}^n \). A generic hyperbolic map \( f \) contracts \( R \) along the stable (vertical) directions, expands it along the unstable (horizontal) directions, thereby flattening it into a quasi-\( n_u \)-dimensional sheet, and then folds this sheet along one or more creases before stacking the folded segments along the stable (vertical) directions. These folding operations are the geometric mechanisms underlying the iconic U-shaped configuration of the Smale horseshoe \cite{Smale65,Smale67}, first introduced in the 1960s.

In higher-dimensional systems, the number of available folding and stacking directions increases, allowing for a rich variety of novel horseshoe configurations beyond the classical planar case. To systematically study and classify such configurations, we propose to use higher-dimensional paperfolding structures as geometric templates. These templates capture the essential features of the deformation process—flattening, folding, and stacking—and will serve as the basis for constructing and analyzing generalized horseshoe maps in the subsequent sections.

Rather than presenting a comprehensive and abstract theory of paperfolding structures, which would require substantial formal development, we adopt an example-based approach to introduce the key geometric ideas. In particular, we will describe a sequence of paperfolding scenarios, each corresponding to a specific dimensional setting that naturally arises in the study of higher-dimensional dynamical systems.

We begin with the simplest and most intuitive case: the folding of a two-dimensional sheet embedded in three-dimensional space. This scenario is both familiar and easy to visualize, as it corresponds to common experiences such as folding a piece of paper along a crease. Despite its simplicity, this configuration already captures the essential operations—flattening, folding, and stacking—that underlie the geometric mechanism of horseshoe formation.

Subsequent examples will build upon this foundation by exploring how a two-dimensional sheet may be folded in four-dimensional space. These higher-dimensional folding structures are no longer directly accessible to visual intuition, but they can still be described analytically and serve as rigorous geometric templates for more complex horseshoe dynamics. Through these examples, we aim to convey how paperfolding structures can be systematically generalized to characterize novel horseshoe configurations in phase spaces of dimension three and higher.

\subsection{General assumptions}

Throughout this section, we impose the following simplifying assumptions on all folding scenarios, regardless of dimension:

\begin{enumerate}
    \item \textbf{Canonical crease directions:} Each folding operation occurs along a crease aligned with one of the coordinate axes—specifically, the \( x \)-, \( y \)-, or \( z \)-axis (or higher-dimensional analogs). Mixed or oblique crease directions are excluded from the present analysis and will be reported elsewhere \cite{Fujioka2026}.

    \item \textbf{Canonical stacking directions:} The stacking of folded segments also occurs along one of the coordinate axes. Arbitrary stacking orientations are not considered.

    \item \textbf{Single-fold constraint:} Each folding operation is applied exactly once along a given crease. Repeated foldings along the same crease are not allowed in the present analysis.

    \item \textbf{Consistent stacking orientation:} For each folding operation, the portion of the domain on the positive side of the crease is stacked above the negative side. For example, if a sheet in the \( (x,y) \)-plane is folded along the crease \( x = 0 \) (i.e., the \( y \)-axis) and stacked along the \( z \)-direction, then the \( x > 0 \) half is stacked on top of the \( x < 0 \) half. “On top” is defined with respect to increasing values in the stacking direction.

    \item \textbf{Elasticity and renormalization:} The domain is assumed to be elastic, and its dimensions are renormalized after each folding operation to avoid metric complications. In particular, the resulting object is rescaled to preserve the original dimensions of the unfolded domain, and its center is assumed to remain at the origin. This allows us to focus purely on the geometric configuration of the folded structure, independent of length scales or coordinate shifts.
\end{enumerate}

\subsection{Folding a one-dimensional sheet in two dimensions}
\label{Folding a one-dimensional sheet in two dimensions}

We begin our discussion of paperfolding structures with the simplest possible case: folding a one-dimensional ``sheet'' (i.e., a line segment) embedded in two-dimensional space. Although this example is elementary, it serves to introduce the notational conventions and geometric intuition that will be extended to higher dimensions throughout this manuscript.

\begin{figure}[thbp]
    \centering
    \includegraphics[width=0.8\linewidth]{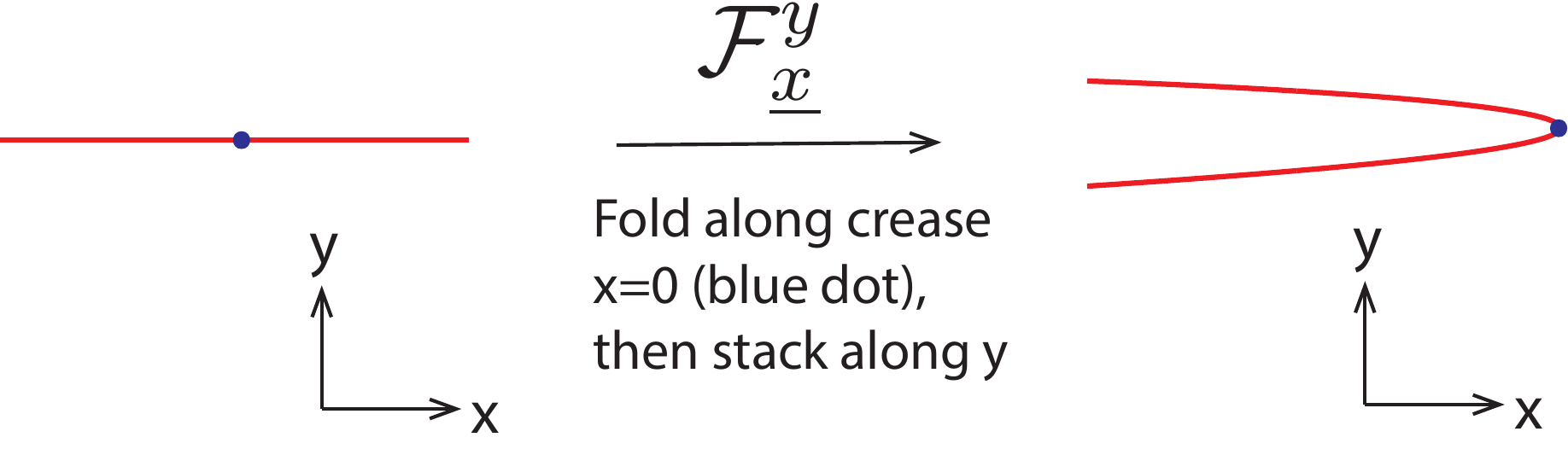} 
   \caption{(Schematic, color online) Folding a one-dimensional ``sheet'' in two dimensions. The sheet (red segment) lies along the \( x \)-axis and is folded about its midpoint (blue dot), which serves as a degenerate form of a crease in higher-dimensional settings. The right half (\( x > 0 \)) is reflected and stacked above the left half (\( x < 0 \)) along the \( y \)-direction.}
    \label{fig:Paperfolding_1D_in_2D}
\end{figure}

In this setting, the sheet is a line segment initially lying along the \( x \)-axis in the \( (x,y) \)-plane. As illustrated in Fig.~\ref{fig:Paperfolding_1D_in_2D}, the folding process consists of folding this segment in half about its midpoint, which we take without loss of generality to be the origin \( x = 0 \). This point represents a degenerate form of a crease in higher-dimensional paperfolding structures. The segment to the right of the crease (\( x > 0 \)) is reflected and stacked on top of the segment to the left (\( x < 0 \)) along the vertical \( y \)-direction.

We denote this folding operation by \( \mathcal{F}^y_{\underline{x}} \). Here, the subscript \( \underline{x} \) indicates that the sheet lies along the \( x \)-axis and that the folding crease is at \( x = 0 \), while the superscript \( y \) specifies that the stacking occurs in the \( y \)-direction. This notational convention—where the subscript indicates the subspace occupied by the sheet and the superscript indicates the stacking direction—will be used consistently throughout the manuscript to describe more general folding operations in higher dimensions.

\subsection{Folding a two-dimensional sheet in three dimensions}
\label{Folding a two-dimensional sheet in three dimensions}

We now turn to the next simplest paperfolding structure: the folding of a two-dimensional sheet embedded in three-dimensional space. This setting captures the essential geometric operations—flattening, folding, and stacking—that underlie the formation of horseshoes in dynamical systems.

Figure~\ref{fig:Paperfolding_2D} illustrates a two-step folding process of a sheet initially lying in the \((x,y)\)-plane, with the \(z\)-axis perpendicular to the sheet. The coordinate frame \((x,y,z)\) is fixed and does not transform with the sheet.

\begin{figure}[thbp]
    \centering
    \includegraphics[width=\linewidth]{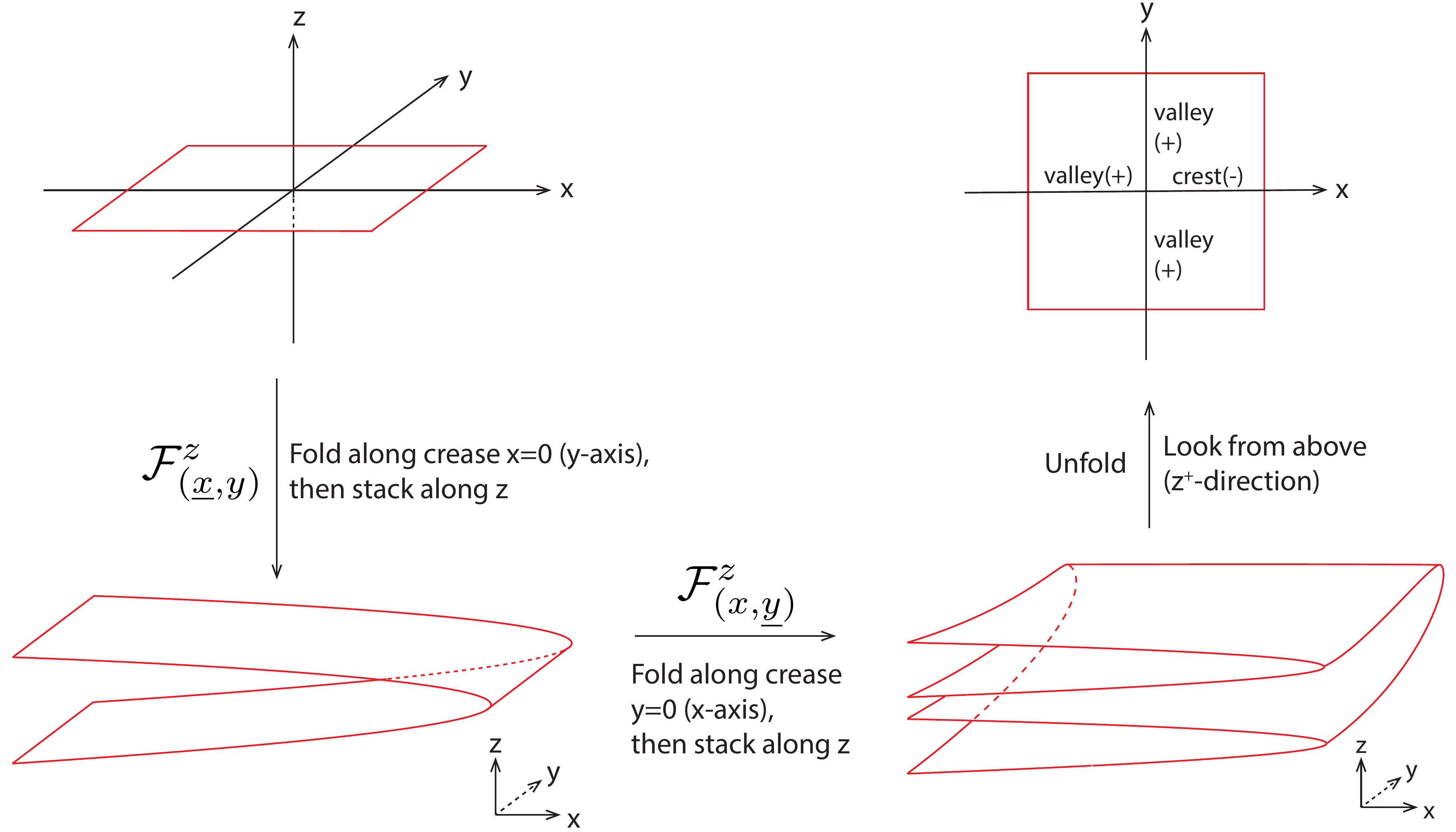} 
    \caption{(Schematic, color online) Folding a two-dimensional sheet in three dimensions. The process consists of two sequential foldings. First, \(\mathcal{F}^z_{(\underline{x},y)}\) folds the sheet with a crease along \(x = 0\) (i.e., the \(y\)-axis), stacking the \(x > 0\) half on top of the \(x < 0\) half along the \(z\)-axis. Second, \(\mathcal{F}^z_{(x,\underline{y})}\) folds the sheet with a crease along \(y = 0\) (i.e., the \(x\)-axis), stacking the \(y > 0\) half above the \(y < 0\) half along \(z\). The resulting configuration, \(\mathcal{F}^z_{(x,\underline{y})} \circ \mathcal{F}^z_{(\underline{x},y)}\), is shown in the lower-right. For visualization purposes, the crease pattern observed upon unfolding and viewed from the \(+z\)-direction is depicted in the upper-right.}
    \label{fig:Paperfolding_2D}
\end{figure}

In the first step, we fold the sheet with a crease along \(x = 0\) (i.e., the \(y\)-axis), stacking the right half (\(x > 0\)) on top of the left half (\(x < 0\)) along the \(z\)-direction. This operation is denoted by \(\mathcal{F}^z_{(\underline{x},y)}\), where the subscript specifies that the sheet lies in the \((x,y)\)-plane and that the folding crease is along the \(x\)-direction, while the superscript \(z\) indicates the stacking direction.

In the second step, the folded configuration is further folded with a crease along \(y = 0\) (i.e., the \(x\)-axis), stacking the top half (\(y > 0\)) over the bottom half (\(y < 0\)) along the same \(z\)-axis. This operation is denoted \(\mathcal{F}^z_{(x,\underline{y})}\). The resulting configuration is given by the composition \(\mathcal{F}^z_{(x,\underline{y})} \circ \mathcal{F}^z_{(\underline{x},y)}\), as shown in the lower-right of Fig.~\ref{fig:Paperfolding_2D}.

Although classical studies of paperfolding (e.g., \cite{Dekking82a,Dekking82b,Dekking82c,Dekking12,Ben-Abraham13,Gahler14}) emphasize the combinatorics of crease patterns obtained upon unfolding, our use of such patterns here is purely illustrative. The crease diagram in the upper-right of Fig.~\ref{fig:Paperfolding_2D}, with valleys marked \(+\) and crests marked \(-\), is included solely to aid visualization and is not the primary object of study in this work.

\subsection{Folding a two-dimensional sheet in four dimensions}
\label{Folding a two-dimensional sheet in four dimensions}

We now introduce a novel paperfolding structure arising in four-dimensional space, namely, a two-dimensional sheet that is folded along two independent crease directions and stacked along two independent stacking directions. To the authors' knowledge, such a configuration—featuring fully independent crease and stacking axes—has not previously appeared in the literature.

\begin{figure}[thbp]
    \centering
    \includegraphics[width=\linewidth]{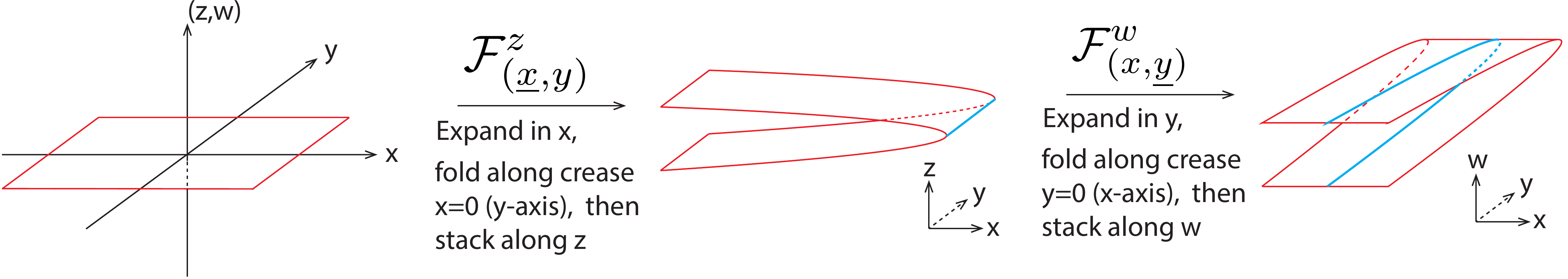} 
    \caption{(Schematic, color online) Folding a two-dimensional sheet in four dimensions. The sheet initially lies in the \( (x,y) \)-plane (the horizontal plane), and the vertical reference plane is taken to be the \( (z,w) \)-plane. The folding process consists of two sequential operations. First, \( \mathcal{F}^z_{(\underline{x},y)} \) folds the sheet with a crease along \( x = 0 \) (i.e., the \( y \)-axis), stacking the \( x > 0 \) half above the \( x < 0 \) half along the \( z \)-axis. Second, \( \mathcal{F}^w_{(x,\underline{y})} \) folds the resulting object with a crease along \( y = 0 \) (i.e., the \( x \)-axis), stacking the \( y > 0 \) half above the \( y < 0 \) half along the \( w \)-axis. The final configuration, denoted by \( \mathcal{F}^w_{(x,\underline{y})} \circ \mathcal{F}^z_{(\underline{x},y)} \), is a doubly folded sheet in four-dimensional space, with independent crease directions and independent stacking directions. In both the middle and right panels, the cyan curve indicates the original crease associated with the first folding operation \( \mathcal{F}^z_{(\underline{x},y)} \).}
    \label{fig:Paperfolding_2D_in_4D}
\end{figure}

Consider a two-dimensional sheet embedded in \(\mathbb{R}^4\), lying in the \((x,y)\)-plane. The horizontal directions are taken to be the \((x,y)\)-plane, while the vertical directions are associated with the \((y,w)\)-plane. The total folding operation consists of two sequential steps, as shown in Fig.~\ref{fig:Paperfolding_2D_in_4D}.

In the first step, the sheet is folded with a crease along \(x = 0\) (i.e., the \(y\)-axis), stacking the right half (\(x > 0\)) above the left half (\(x < 0\)) along the \(z\)-axis. This operation is denoted by \(\mathcal{F}^z_{(\underline{x},y)}\), where the subscript indicates that the sheet lies in the \((x,y)\)-plane and the crease is taken along the \(x\)-direction, and the superscript \(z\) specifies the stacking direction.

In the second step, the resulting configuration is further folded with a crease along \(y = 0\) (i.e., the \(x\)-axis), stacking the upper half (\(y > 0\)) above the lower half (\(y < 0\)) along the \(w\)-axis. This operation is denoted by \(\mathcal{F}^w_{(x,\underline{y})}\).

The final folded configuration is given by the composition \(\mathcal{F}^w_{(x,\underline{y})} \circ \mathcal{F}^z_{(\underline{x},y)}\), as illustrated in Fig.~\ref{fig:Paperfolding_2D_in_4D}. The presence of the additional dimension \(w\) introduces a second, independent stacking direction, thereby enabling a paperfolding structure with two orthogonal crease directions (\(x\) and \(y\)) and two orthogonal stacking axes (\(z\) and \(w\)). Such a configuration is inherently four-dimensional and cannot be realized in spaces of dimension three or lower.

In Sec.~\ref{Type A} we will demonstrate that this type of folding arises naturally as a geometric template for the generalized horseshoe structure of a four-dimensional H\'{e}non-type map.

This four-dimensional folding structure can also be understood from a fiberwise perspective, by decomposing the two-dimensional sheet into one-dimensional fibers and examining how each folding operation acts within lower-dimensional subspaces of $\mathbb{R}^4$. 

More specifically, the first folding operation $\mathcal{F}^z_{(\underline{x},y)}$ acts nontrivially only in the $(x,z)$-subspace, leaving the orthogonal $(y,w)$-subspace unchanged. Therefore, $\mathcal{F}^z_{(\underline{x},y)}$ can be written as a direct product:
\begin{equation}\label{eq:F z xy decomposition}
\mathcal{F}^z_{(\underline{x},y)} = \mathcal{F}^z_{\underline{x}} \times \mathrm{Id}_{(y,w)}\ ,
\end{equation}
where $\mathcal{F}^z_{\underline{x}}$ acts solely on the $(x,z)$-subspace and represents the foloding of each one-dimensional fiber aligned along the $x$-direction, and ${\rm Id}_{(y,w)}$ denotes the identity operator (i.e., no deformation) in the $(y,w)$-subspace. 

Similarly, the second folding operation $\mathcal{F}^w_{(x,\underline{y})}$ acts only in the $(y,w)$-subspace and leaves the $(x,z)$-subspace unaffected. Hence, it can also be decomposed as 
\begin{equation}\label{eq:F w xy decomposition}
\mathcal{F}^w_{(x,\underline{y})} = \mathrm{Id}_{(x,z)} \times \mathcal{F}^w_{\underline{y}}\ ,
\end{equation}
where $\mathcal{F}^w_{\underline{y}}$ acts solely on the $(y,w)$-subspace and represents the folding of each one-dimensional fiber aligned along the $y$-direction, and ${\rm Id}_{(x,z)}$ is the identity on the $(x,z)$-subspace. 

Consequently, the full two-dimensional folding operation can be expressed as the direct product of two one-dimensional foldings:
\begin{equation}\label{eq:folding decomposition into direct products}
\mathcal{F}^w_{(x,\underline{y})} \circ \mathcal{F}^z_{(\underline{x},y)}
= \left( {\rm Id}_{(x,z)} \times \mathcal{F}^w_{\underline{y}} \right) \circ \left( \mathcal{F}^z_{\underline{x}} \times {\rm Id}_{(y,w)} \right)
= \mathcal{F}^z_{\underline{x}} \times \mathcal{F}^w_{\underline{y}} \ .
\end{equation}
In other words, the doubly folded two-dimensional sheet in \( \mathbb{R}^4 \) is equivalent to the Cartesian product of two one-dimensional singly folded sheets: one folded in the \( (x,z) \)-subspace, and the other in the \( (y,w) \)-subspace. 

This direct product decomposition also implies that the two foldings commute:
\begin{equation}\label{eq:foldings commute}
\mathcal{F}^w_{(x,\underline{y})} \circ \mathcal{F}^z_{(\underline{x},y)} = \mathcal{F}^z_{(\underline{x},y)} \circ \mathcal{F}^w_{(x,\underline{y})}\ .
\end{equation}
This is unsurprising, as the two operations act independently on orthogonal subspaces—\( \mathcal{F}^z_{(\underline{x},y)} \) on the \( (x,z) \)-subspace and \( \mathcal{F}^w_{(x,\underline{y})} \) on the \( (y,w) \)-subspace. This perspective will prove useful when analyzing the generalized horseshoe structures of the four-dimensional Hénon-type map in Sec.~\ref{Type A}.

\subsection{Folding a three-dimensional sheet in four dimensions}
\label{Folding a three-dimensional sheet in four dimensions}

\begin{figure}[htbp]
  \centering
  \begin{subfigure}[b]{0.3\textwidth}
    \includegraphics[width=\textwidth]{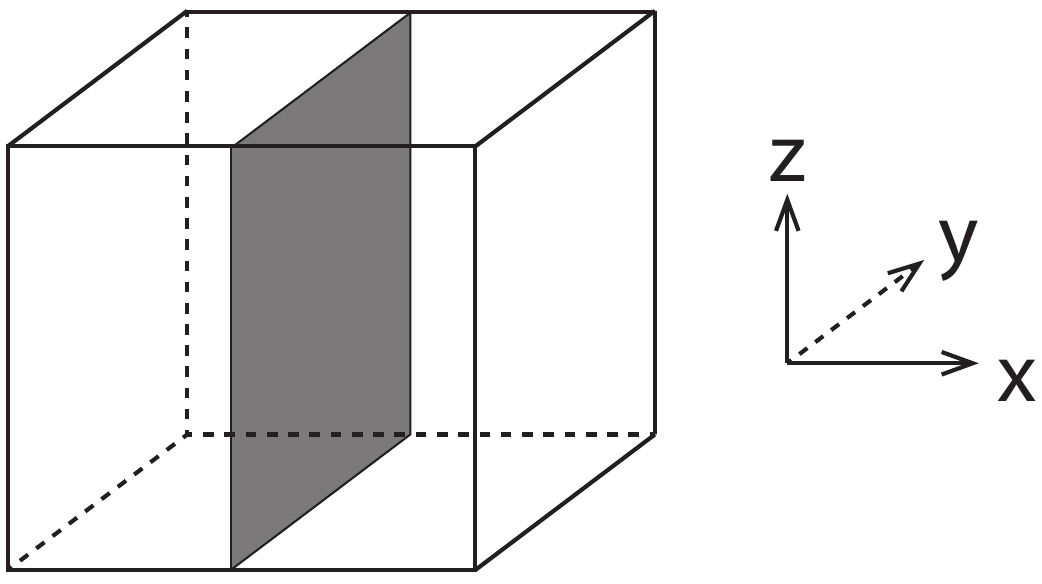}
    \caption{$\mathcal{F}^w_{(\underline{x},y,z)}$}
    \label{fig:3D_Paper_in_4D_1st}
  \end{subfigure}
  \hfill
  \begin{subfigure}[b]{0.3\textwidth}
    \includegraphics[width=\textwidth]{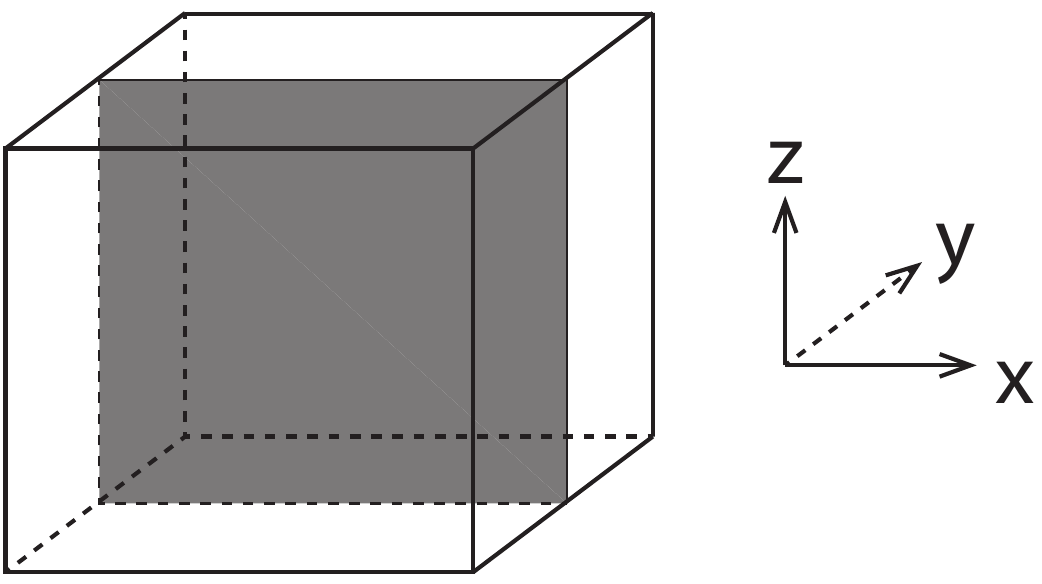}
    \caption{$\mathcal{F}^w_{(x,\underline{y},z)}$}
    \label{fig:3D_Paper_in_4D_2nd}
  \end{subfigure}
  \hfill
  \begin{subfigure}[b]{0.3\textwidth}
    \includegraphics[width=\textwidth]{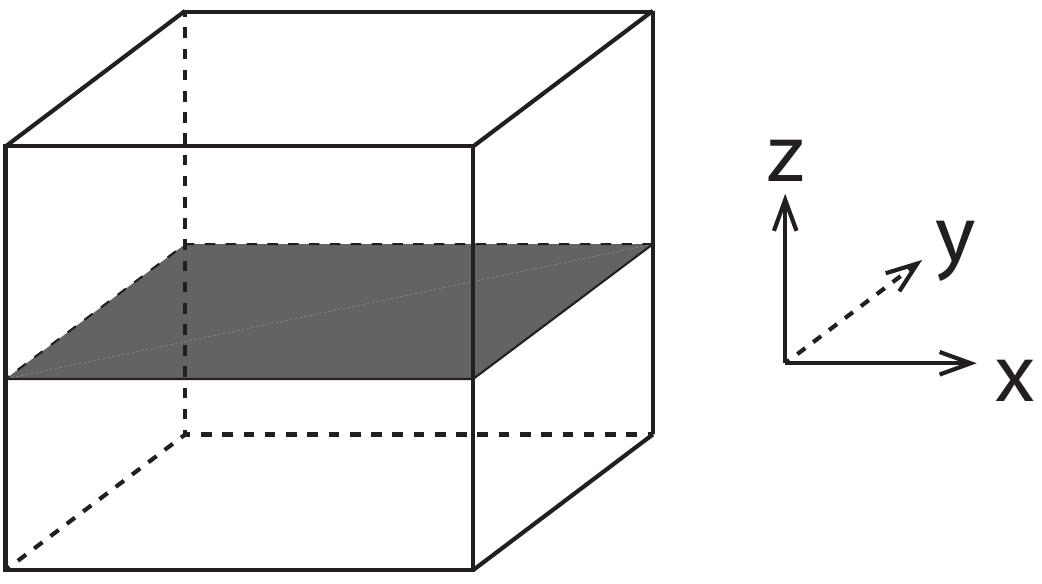}
    \caption{$\mathcal{F}^w_{(x,y,\underline{z})}$}
    \label{fig:3D_Paper_in_4D_3rd}
  \end{subfigure}
  \caption{Folding a three-dimensional sheet (represented as a cube in the $(x,y,z)$-subspace) embedded in four dimensions. The fourth coordinate axis, $w$, is not shown. (a): The operation $\mathcal{F}^w_{(\underline{x},y,z)}$ folds the sheet along the crease $x=0$ (shaded) and stacks the resulting halves along the $w$-direction (not shown). (b): $\mathcal{F}^w_{(x,\underline{y},z)}$ folds the sheet along the crease $y=0$ (shaded) and stacks the resulting halves along $w$. (c): $\mathcal{F}^w_{(x,y,\underline{z})}$ folds the sheet along the crease $z=0$ (shaded) and stacks the resulting halves along $w$. }
  \label{fig:3D_Paper_in_4D}
\end{figure}

In this subsection, we consider a three-dimensional “sheet” lying in the \((x,y,z)\)-subspace and embedded in four-dimensional space \((x,y,z,w)\). In this setting, the only available stacking direction is the \(w\)-axis. We examine a triply folded configuration, illustrated in Fig.~\ref{fig:3D_Paper_in_4D}, consisting of three consecutive folding operations.

In the first step (Fig.~\ref{fig:3D_Paper_in_4D_1st}), the sheet is folded with a crease along \(x = 0\) (i.e., the \((y,z)\)-plane), stacking the \(x > 0\) half above the \(x < 0\) half along the \(w\)-direction. This operation is denoted by \( \mathcal{F}^w_{(\underline{x},y,z)} \), where the subscript indicates that the sheet lies in the \((x,y,z)\)-hyperplane and the folding crease is along the \(x = 0\) plane, while the superscript \(w\) specifies the stacking direction.

In the second step (Fig.~\ref{fig:3D_Paper_in_4D_2nd}), the folded configuration is further folded with a crease along \(y = 0\) (i.e., the \((x,z)\)-plane), stacking the \(y > 0\) half above the \(y < 0\) half along the \(w\)-axis. This operation is denoted by \( \mathcal{F}^w_{(x,\underline{y},z)} \).

In the third step (Fig.~\ref{fig:3D_Paper_in_4D_3rd}), the configuration is folded again, this time with a crease along \(z = 0\) (i.e., the \((x,y)\)-plane), stacking the \(z > 0\) half above the \(z < 0\) half along the \(w\)-axis. This final operation is denoted by \( \mathcal{F}^w_{(x,y,\underline{z})} \).

The complete triply folded configuration is thus given by the composition:
\[
\mathcal{F}^w_{(x,y,\underline{z})} \circ \mathcal{F}^w_{(x,\underline{y},z)} \circ \mathcal{F}^w_{(\underline{x},y,z)}.
\]

Due to the limitations of visualizing four-dimensional structures using three-dimensional projections, we depict only the crease planes for each folding operation in Fig.~\ref{fig:3D_Paper_in_4D}, omitting the stacked layers. This construction yields a triply folded paperfolding structure in four dimensions, characterized by three independent crease directions (the three shaded planes) and a common stacking direction along \(w\). Notably, each crease is now a two-dimensional surface in four-dimensional space—an inherently high-dimensional feature not possible in dimensions less than four.

In Sec.~\ref{f IV}, we will present a concrete example of a four-dimensional Hénon-type map that exhibits a horseshoe structure describable by this triply folded template.

\section{Horseshoe structures in H\'{e}non-type maps}
\label{Horseshoe structures in Henon-type maps}

Before introducing specific Hénon-type maps and their associated invariant sets, we briefly clarify our use of the term \emph{horseshoe} throughout this section. In the classical literature, the term “Smale horseshoe” may refer either to a topologically defined invariant set semi-conjugate to a full shift, or to the geometric image of a square region under a stretching-and-folding map. In this work, we adopt the latter, more geometric perspective: we use the term \emph{horseshoe} to refer to the qualitative shape of \( f(R) \cap R \), where \( R \subset \mathbb{R}^n \) is a chosen fundamental domain and \( f \) is the dynamical map. Specifically, we are interested in whether \( f(R) \) resembles a singly, doubly, or triply folded sheet relative to \( R \), as described by the paperfolding structures introduced in Section~\ref{Paperfolding structures}. The emphasis is thus placed on the configuration of folds—that is, the geometric arrangement of \( f(R) \) with respect to \( R \). The symbolic dynamics and invariant set structure are understood as direct consequences of this underlying geometric structure, which is captured by the corresponding paperfolding templates.

In the following subsections, we introduce a series of H\'{e}non-type maps defined in three- and four-dimensional spaces, each exhibiting a horseshoe structure that corresponds to one of the paperfolding templates developed in Section~\ref{Paperfolding structures}. The simplest three-dimensional example is a trivial extension of the original two-dimensional H\'{e}non map proposed in \cite{Henon76}, while the remaining three- and four-dimensional examples are constructed as compositions or coupled systems derived from the two-dimensional case. These higher-dimensional constructions are designed to realize folding configurations that align with the geometric templates introduced earlier, thereby establishing a direct correspondence between the algebraic structure of the map and the underlying phase-space geometry.

\subsection{Map \( f_{\mathrm{I}} \): singly folded sheet in three dimensions}
\label{f I}

We begin with the simplest higher-dimensional extension of the classical Hénon map: a three-dimensional map \( f_{\mathrm{I}} \) obtained by trivially embedding the original two-dimensional system into \(\mathbb{R}^3\). Despite its simplicity, this map already exhibits a nontrivial horseshoe structure under suitable parameter regimes. We show that the resulting geometry of \( f_{\mathrm{I}}(R) \cap R \) corresponds to a singly folded paperfolding configuration, specifically of the form \( \mathcal{F}^z_{(\underline{x},y)} \), as introduced in Section~\ref{Folding a two-dimensional sheet in three dimensions}.

\begin{figure}[th]
        \centering
        \includegraphics[width=0.6\linewidth]{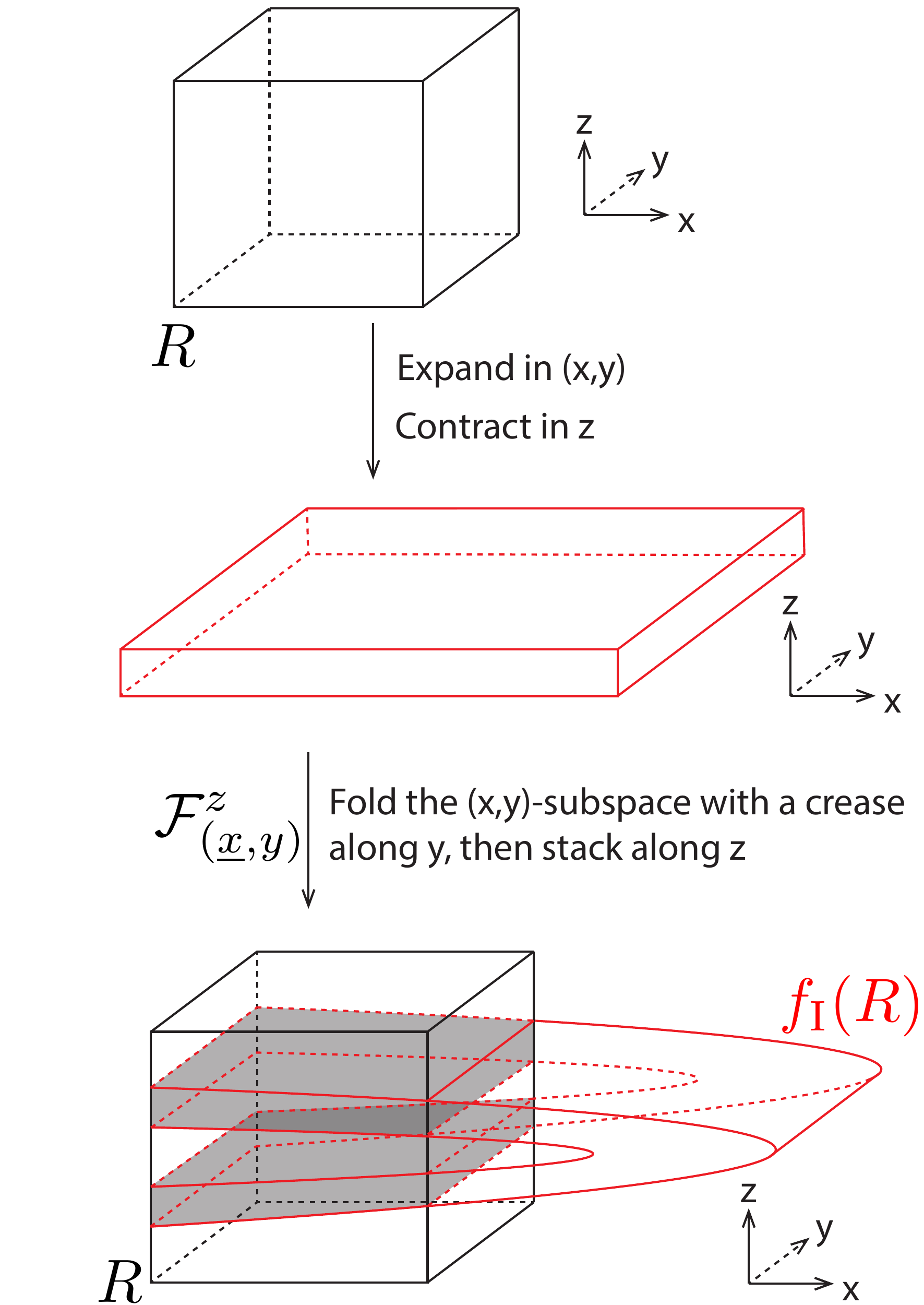} 
        \caption{(Schematic, color online) Horseshoe generated by \( f_{\mathrm{I}} \). Starting from the cube \( R \), the map expands \( R \) in the unstable directions (the \( (x,y) \)-plane) and contracts it in the stable direction (the \( z \)-axis), effectively flattening it into a quasi-two-dimensional sheet lying in the \( (x,y) \)-plane. The sheet is then folded with a crease along \( x = 0 \) (i.e., the \( y \)-direction) and stacked along the \( z \)-axis. The resulting intersection \( R \cap f(R) \) consists of two disjoint horizontal slabs (shaded region).}
 \label{fig:3D_singly_folded_horseshoe}
\end{figure}

Consider the H\'{e}non-type map of the form 
\begin{equation}\label{eq:3D singly folded Henon}
\left( \begin{array}{ccc}
x_{n+1}\\
y_{n+1} \\
z_{n+1} \end{array} \right) = 
f_{\rm I} \left( \begin{array}{ccc}
x_{n}\\
y_{n} \\
z_{n} \end{array} \right) =
\left( \begin{array}{ccc}
a_0 - x^2_n - z_n\\
b y_n \\
x_n \end{array} \right)
\end{equation}
where $(x_n,y_n,z_n)^\mathsf{T}$ denotes the position of the $n$th iteration, and the parameters $a_0$, $b$ satisfy  
\begin{eqnarray}
& a_0  > 5+2\sqrt{5} \nonumber \\
& b > 1   \nonumber
\end{eqnarray}
where the bound on $a_0$ is obtained by Devaney and Nitecki in \cite{Devaney79} to realize the horseshoe in the two-dimensional H\'{e}non map, and the bound on $b$ guarantees uniform expansion in $y$. It is trivial to see that the dynamics in $y$ is a constant uniform expansion uncoupled from $(x,z)$. Therefore, on every $(x,z)$-slice we have the two-dimensional H\'{e}non map
\begin{equation}\label{eq:2D Henon}
\left( \begin{array}{ccc}
x_{n+1}\\
z_{n+1} \end{array} \right) = 
\left( \begin{array}{ccc}
a_0 - x^2_n - z_n\\
x_n
 \end{array} \right)\ ,
\end{equation}
which is the one originally proposed in \cite{Henon76} and studied in detail in \cite{Devaney79}. Generalizing the results established by \cite{Devaney79} to the three-dimensional map $f_{\rm I}$, it is straightforward to see that given $a_0>5+2\sqrt{5}$, a cube $R$ can be identified as
\begin{equation}\label{eq:Devaney V definition}
R = \left\lbrace (x,y,z) \middle|\  |x|,|y|,|z|\leq r \right\rbrace\ , ~~~~{\rm where\ }r=1+\sqrt{1+a_0}
\end{equation}
such that $f_{\rm I}(R)$ gives rise to a horseshoe. As illustrated in Fig.~\ref{fig:3D_singly_folded_horseshoe}, the map expands the region \( R \) in the unstable directions (the \( (x,y) \)-plane), contracts it in the stable direction (the \( z \)-axis), and thereby flattens it into a quasi-two-dimensional sheet lying in the \( (x,y) \)-plane. The sheet is then folded with a crease in the \( y \)-direction and stacked along the \( z \)-axis. The resulting intersection \( R \cap f_{\mathrm{I}}(R) \) consists of two disjoint horizontal slabs. This horseshoe is geometrically described by the paperfolding template \( \mathcal{F}^z_{(\underline{x},y)} \), corresponding to a single fold along the \( x = 0 \) crease and stacking in the \( z \)-direction.

The fact that the horseshoe is templated by \( \mathcal{F}^z_{(\underline{x},y)} \) implies that the invariant set is conjugate to a full shift on two symbols. More generally, once the folding configuration is identified and matched to a known paperfolding template, the symbolic dynamics and topological type of the invariant set follow directly—at least up to topological conjugacy or semi-conjugacy. Thus, the determination of the template type is sufficient to characterize the symbolic structure of the underlying dynamics.

\subsection{Map \( f_{\mathrm{II}} \): doubly folded sheet in three dimensions}
\label{f II}

We now consider a nontrivial three-dimensional generalization of the Hénon map, denoted \( f_{\mathrm{II}} \), designed to exhibit more intricate folding behavior. In particular, we demonstrate that for appropriate parameter choices, the image \( f_{\mathrm{II}}(R) \) undergoes two sequential folding operations, leading to a doubly folded horseshoe structure. This configuration is well-described by the paperfolding template \( \mathcal{F}^z_{(x,\underline{y})} \circ \mathcal{F}^z_{(\underline{x},y)} \), representing a two-step folding process in three-dimensional space.

\begin{figure}
        \centering
        \includegraphics[width=0.7\linewidth]{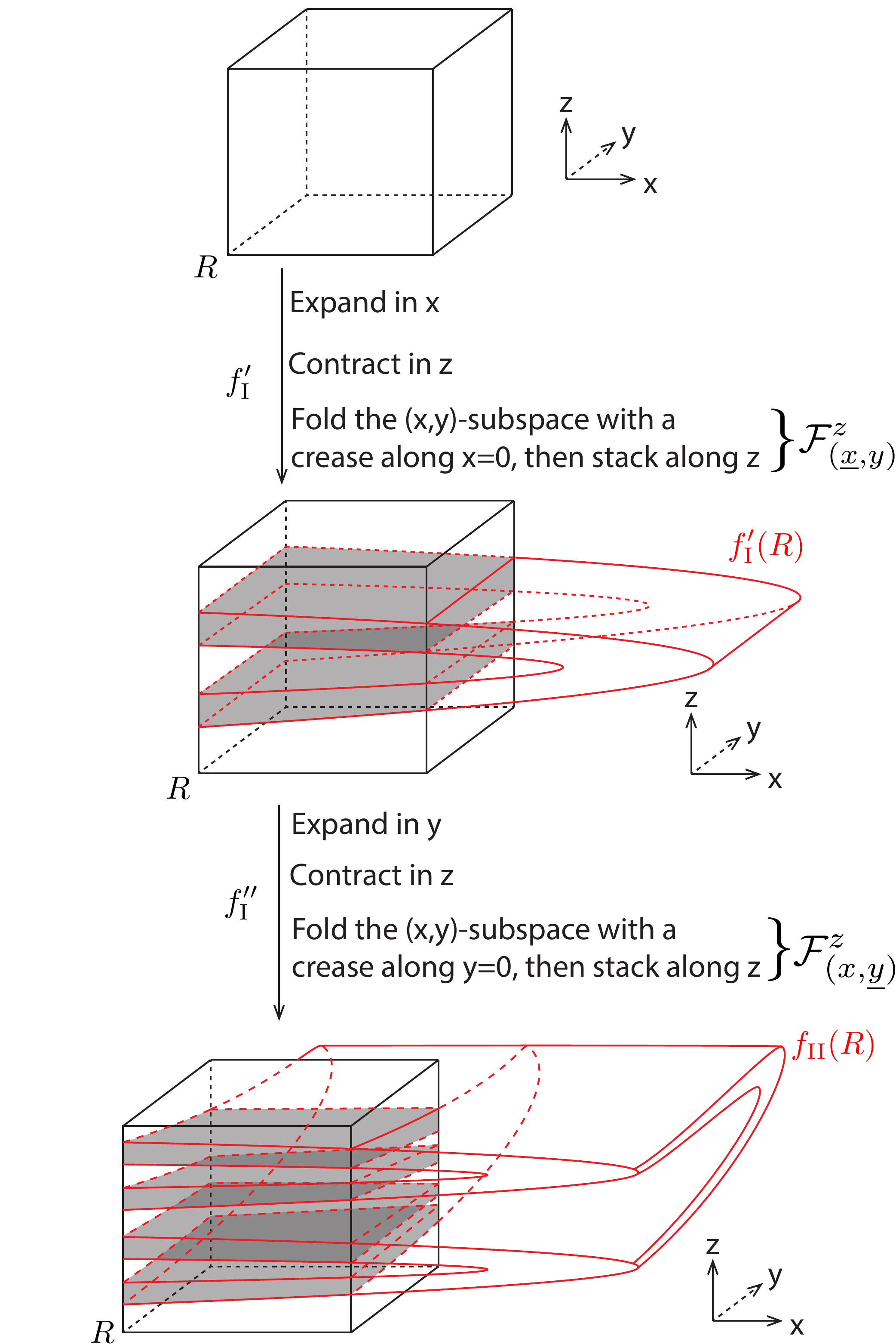} 
       \caption{
(Schematic, color online) Horseshoe generated by \( f_{\rm II} \). Starting from the cube \( R \), the map \( f'_{\rm I} \) expands \( R \) in the \( x \)-direction and contracts it in the \( z \)-direction, effectively flattening it into a quasi-two-dimensional sheet lying in the \( (x,y) \)-plane. It then folds the sheet with a crease along \( x = 0 \) (i.e., the \( y \)-axis) and stacks along the \( z \)-axis. This operation is modeled by \( \mathcal{F}^z_{(\underline{x},y)} \). Subsequently, the map \( f''_{\rm I} \) expands the resulting structure in the \( y \)-direction, contracts it in \( z \), folds it with a crease along \( y = 0 \) (i.e., the \( x \)-axis), and stacks along \( z \) again. This second operation is modeled by \( \mathcal{F}^z_{(x,\underline{y})} \). The total transformation \( f_{\rm II} = f''_{\rm I} \circ f'_{\rm I} \) is thus described by the composition \( \mathcal{F}^z_{(x,\underline{y})} \circ \mathcal{F}^z_{(\underline{x},y)} \), forming a doubly folded structure. The intersection \( R \cap f_{\rm II}(R) \) consists of four disjoint horizontal slabs, as shown in the bottom part of the figure.
}
 \label{fig:3D_doubly_folded_horseshoe}
\end{figure}

Consider the H\'{e}non-type map
\begin{equation}\label{eq:3D doubly folded Henon}
\left( \begin{array}{ccc}
x_{n+1}\\
y_{n+1} \\
z_{n+1} \end{array} \right) = 
f_{\rm II} \left( \begin{array}{ccc}
x_{n}\\
y_{n} \\
z_{n} \end{array} \right) =
\left( \begin{array}{ccc}
a_0 - x^2_n - z_n\\
a_1 - y^2_n - x_n\\
y_n \end{array} \right)
\end{equation}
with parameters $a_0,a_1 > 5+2\sqrt{5}$. The inverse map $f^{-1}_{\rm II}$ is slightly complicated as it involves quartic terms:
\begin{eqnarray}
 \left( \begin{array}{ccc}
x_{n-1}\\
y_{n-1} \\
z_{n-1} \end{array} \right) & = f^{-1}_{\rm II} \left( \begin{array}{ccc}
x_{n}\\
y_{n} \\
z_{n} \end{array} \right) \nonumber \\
& =
\left( \begin{array}{ccc}
-y_n - z^2_n + a_1\\
z_n\\
-x_n - y^2_n -2 y_n z^2_n - z^4_n + a_0 + 2 a_1 y_n + 2 a_1 z^2_n - a^2_1 \end{array} \right). \label{eq:f II inverse}
\end{eqnarray}

In this setting, the horizontal directions are defined by the \( (x,y) \)-subspace, and the vertical direction is the \( z \)-axis. Accordingly, the unstable and stable coordinate disks used in the definition of \( R \) in Eq.~(\ref{eq:R definition}) are given by \( I^u = [-r,r] \times [-r,r] \) and \( I^s = [-r,r] \), respectively, where $r$ is a suitably chosen length parameter to be specified later. The resulting domain $R$ is the three-dimensional cube $[-r,r]^3$. 

The map \( f_{\mathrm{II}} \) can be written as the composition of two mappings, denoted \( f'_{\mathrm{I}} \) and \( f''_{\mathrm{I}} \), 
\begin{equation}\label{eq:3D doubly folded Henon compound}
f_{\rm II} = f^{\prime\prime}_{\rm I} \circ f^{\prime}_{\rm I}\ ,
\end{equation}
where $f^{\prime}_{\rm I}$ is identical to \( f_{\mathrm{I}} \) with the parameter choice \( b = 1 \),
\begin{equation}\label{eq:f prime I}
\left( \begin{array}{ccc}
x^{\prime}\\
y^{\prime} \\
z^{\prime} \end{array} \right) = 
f^{\prime}_{\rm I} \left( \begin{array}{ccc}
x\\
y \\
z \end{array} \right) =
\left( \begin{array}{ccc}
a_0 - x^2 - z\\
y \\
x \end{array} \right)\ .
\end{equation}
Thus $f^{\prime}_{\rm I}$ produces a qualitatively similar deformation of the domain \( R \). 

As illustrated in the middle panel of Fig.~\ref{fig:3D_doubly_folded_horseshoe}, the action of \( f'_{\mathrm{I}} \) is to expand \( R \) along the \( x \)-direction, contract it along the \( z \)-axis, flattening it into a quasi-two-dimensional sheet in the \( (x,y) \)-plane, then fold this sheet with a crease along \( x = 0 \) (i.e., the \( y \)-axis), and stack it along the \( z \)-direction. This transformation is modeled by the paperfolding operation \( \mathcal{F}^z_{(\underline{x},y)} \).

The map \( f''_{\mathrm{I}} \) is obtained from \( f'_{\mathrm{I}} \) by interchanging the roles of the \( x \)- and \( y \)-axes:
\begin{equation}\label{eq:f prime prime I}
\left( \begin{array}{ccc}
x^{\prime\prime}\\
y^{\prime\prime} \\
z^{\prime\prime} \end{array} \right) = 
f^{\prime\prime}_{\rm I} \left( \begin{array}{ccc}
x^{\prime}\\
y^{\prime} \\
z^{\prime} \end{array} \right) =
\left( \begin{array}{ccc}
x^{\prime}\\
a_1 - (y^{\prime})^2 - z^{\prime} \\
y^{\prime} \end{array} \right)\ .
\end{equation}
That is, \( f''_{\mathrm{I}} \) expands along \( y \), contracts along \( z \), folds with a crease along \( y = 0 \) (i.e., the \( x \)-axis), and stacks along \( z \). See the bottom part of Fig.~\ref{fig:3D_doubly_folded_horseshoe} for an illustration. This action corresponds to the paperfolding operation \( \mathcal{F}^z_{(x,\underline{y})} \).

Consequently, the composition \( f_{\mathrm{II}} = f''_{\mathrm{I}} \circ f'_{\mathrm{I}} \) performs two successive folds along orthogonal directions, and is described by the template \( \mathcal{F}^z_{(x,\underline{y})} \circ \mathcal{F}^z_{(\underline{x},y)} \), as illustrated in the lower-right part of Fig.~\ref{fig:Paperfolding_2D}.

Moreover, since every horizontal disk \( d(s) \subset R \) (with \( s \in I^s \)) is parallel to the \( (x,y) \)-plane, its image \( f_{\mathrm{II}}(d(s)) \) undergoes the same qualitative deformation and is likewise modeled by the template \( \mathcal{F}^z_{(x,\underline{y})} \circ \mathcal{F}^z_{(\underline{x},y)} \). This has the important consequence that \( f_{\mathrm{II}}(d(s)) \) must intersect \( R \) along four disjoint horizontal slices, each arising from one of the four vertically stacked layers produced by the doubly folded configuration. These layers correspond to those illustrated in the lower-right part of Fig.~\ref{fig:Paperfolding_2D}. Since \( f_{\mathrm{II}}(R) \) can be viewed as \( f_{\mathrm{II}}(d(s)) \) thickened along the vertical direction, it must likewise intersect \( R \) along four disjoint horizontal slabs, as shown in Fig.~\ref{fig:3D_doubly_folded_horseshoe}.

To place the above qualitative description on a more rigorous foundation, we now proceed to show that \( f_{\mathrm{II}}(d(s)) \) indeed intersects \( R \) in four disjoint horizontal slices. By Proposition~\ref{Semi-conjugacy to the full shift on N symbols}, this guarantees the existence of a topological horseshoe and establishes a semi-conjugacy between the invariant set and the full shift on four symbols.

\begin{theorem}\label{3D Doubly folded horseshoe topology}
Let the parameters of \( f_{\rm II} \) be \( a_0 = a_1 = a > 5 + 2\sqrt{5} \), and define \( r = 1 + \sqrt{1 + a} \). Let \( I^u = [-r, r] \times [-r, r] \) be the horizontal disk in the \( (x, y) \)-subspace, and let \( I^s = [-r, r] \) be the vertical interval in the \( z \)-direction. Define the fundamental region \( R = I^u \times I^s \subset \mathbb{R}^3 \).

Then, for every horizontal disk \( d(s) = I^u \times \{ s \} \), where \( s \in I^s \), the intersection \( f_{\rm II}(d(s)) \cap R \) consists of four disjoint horizontal slices of \( R \), that is,
\begin{equation}
f_{\rm II}(d(s)) \cap R = h_1(s) \sqcup h_2(s) \sqcup h_3(s) \sqcup h_4(s)\ ,
\end{equation}
where each \( h_i(s) \) is a horizontal slice of \( R \).
\end{theorem}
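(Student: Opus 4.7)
My plan is to reduce this to a twofold application of the classical Devaney--Nitecki horseshoe argument for the two-dimensional H\'{e}non map, mirroring the decomposition of the template $\mathcal{F}^z_{(x,\underline{y})} \circ \mathcal{F}^z_{(\underline{x},y)}$. Rather than composing the two folds step by step---which would force me to track points that leave $R$ between the two stages---I would parameterize $f_{\mathrm{II}}(d(s))$ directly by $(x,y) \in I^u$ via
\[
(X, Y, Z) = \bigl(a - x^2 - s,\; a - y^2 - x,\; y\bigr),
\]
and translate the membership condition $(X, Y, Z) \in R$ into the two quadratic inequalities $x^2 \in [a - s - r,\; a - s + r]$ and $y^2 \in [a - x - r,\; a - x + r]$; the third condition $|Z|\le r$ is automatic because $Z = y$.

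First I would use the identity $r^2 = 2r + a$, together with $a > 5 + 2\sqrt{5}$, to verify that for all $s, x \in [-r, r]$ the endpoints $a - s \pm r$ and $a - x \pm r$ lie strictly inside $(0, r^2)$. Each quadratic inequality then splits its admissible locus into two disjoint sub-intervals distinguished by the sign of $x$ (respectively $y$), so the preimage of $f_{\mathrm{II}}(d(s)) \cap R$ inside $d(s)$ decomposes into four disjoint closed cells $\Omega_{\epsilon_1 \epsilon_2}$ indexed by $(\epsilon_1, \epsilon_2) \in \{+, -\}^2$. Next I would check that on each cell the restricted map $(x, y) \mapsto (X, Y)$ is a homeomorphism onto all of $I^u$: the map $x \mapsto a - x^2 - s$ is strictly monotone on the relevant sub-interval and surjects onto $[-r, r]$, and likewise $y \mapsto a - y^2 - x$. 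Inverting explicitly yields
\[
Z = \psi_{\epsilon_1 \epsilon_2}(X, Y) = \epsilon_2 \sqrt{\,a - Y - \epsilon_1 \sqrt{a - s - X}\,},
\]
a continuous function on the whole of $I^u$, which by Definition~\ref{Horizontal slice definition} identifies each of the four image pieces as a horizontal slice of $R$. Disjointness is then immediate: opposite $\epsilon_2$ produces $Z$-values of opposite sign (and strictly away from zero by the bookkeeping), while same $\epsilon_2$ with opposite $\epsilon_1$ produces distinct absolute values of $Z$.

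The main obstacle will be that bookkeeping step: the four slices separate cleanly---rather than touching or overlapping---only if $0 < a - s - r$ and $a - s + r < r^2$ hold strictly and uniformly in $s \in [-r, r]$, and analogously with $s$ replaced by $x \in [-r, r]$ for the second fold. These are the sharp Devaney--Nitecki inequalities, now needed twice, and crucially the second pair inherits an $x$-dependence from the first, so the admissible $x$-range produced by the first fold must already be compatible with the quadratic condition governing the second. The threshold $a > 5 + 2\sqrt{5}$ is precisely what makes all of these strict inequalities hold simultaneously, and it is what allows $f''_{\mathrm{I}}$ to refold the already-folded image of $f'_{\mathrm{I}}$ along an orthogonal crease without any pair of sheets merging back together.
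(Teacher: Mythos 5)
Your argument is correct, and it reaches the conclusion by a genuinely different route from the paper. The paper works in the image coordinates: it fixes the first coordinate of the image point, restricts to the two\-/dimensional sections \( \Sigma^2(x) \), and shows that the constraint set is squeezed between two nested families of bounding parabolas (\( \Gamma_y^{\min/\max} \) and \( \Lambda_y^{\pm,\min/\max} \)), whose positions relative to \( R \) are controlled Devaney--Nitecki style; the four-slice structure is then read off from containment in four disjoint horizontal strips. You instead parameterize \( f_{\rm II}(d(s)) \) by the domain coordinates \( (x,y)\in I^u \), split the admissible domain into the four sign cells \( \Omega_{\epsilon_1\epsilon_2} \), verify that the triangular map \( (x,y)\mapsto(X,Y) \) restricted to each cell is a monotone-in-each-stage bijection onto all of \( I^u \), and exhibit each image piece explicitly as the graph of \( \psi_{\epsilon_1\epsilon_2}(X,Y)=\epsilon_2\sqrt{a-Y-\epsilon_1\sqrt{a-s-X}} \). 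This buys you a direct verification of Definition~\ref{Horizontal slice definition} (each piece is literally a continuous graph \( I^u\to I^s \) spanning the whole horizontal disk), something the paper's containment argument leaves slightly implicit, and your pointwise disjointness check (same \((X,Y)\), squares differing by \( 2\sqrt{a-s-X}\ge 2\sqrt{a-2r}>0 \)) is exactly what disjointness of graphs requires. The underlying inequalities are the same in both proofs: \( a-2r>0 \) and the identity \( r^2=a+2r \). One small correction: the endpoints \( a-s+r \) and \( a-x+r \) are not \emph{strictly} below \( r^2 \); at \( s=-r \) (respectively \( x=-r \)) they equal \( r^2 \) exactly, so the containment is in \( (0,r^2] \) rather than the open interval. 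This is harmless---only the strict positivity of the lower endpoints \( a-s-r \) and \( a-x-r \) is needed for the cells to separate, and the upper endpoint touching \( r^2 \) merely means the admissible \( |x| \) (or \( |y| \)) interval reaches \( r \), which does not obstruct surjectivity onto \( [-r,r] \) or the disjointness of the four slices.
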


\begin{proof}
Let $d(c)$ denote the horizontal disk of $R$ at height $z=c$:
\begin{equation}\label{eq:horizontal disk f II}
d(c) = \left\lbrace\ (x, y, z) \in \mathbb{R}^3\ \middle|\ |x|, |y| \leq r,\ z = c\ \right\rbrace,
\end{equation}
where $|c|\leq r$. To prove the theorem, it suffices to show that for every $|c|\leq r$, the image $f_{\rm II}(d(c))$ intersects $R$ in the horizontal direction along four disjoint slice; that is, the intersection $f_{\rm II}(d(c))\cap R$ consists of four disjoint horizontal slices of $R$. 

Using the identity $f^{-1}_{\rm II}(f_{\rm II}(d(c)))=d(c)$ we obtain an analytic expression for $f_{\rm II}(d(c))$
\begin{eqnarray}
&f_{\rm II}(d(c)) \nonumber \\ 
&=\left\lbrace (x,y,z)\in \mathbb{R}^3 \middle| \begin{array}{ccc}
|z| \leq r \\
|-y - z^2 + a| \leq r \\
-x - y^2 -2 y z^2 - z^4 + a + 2 a (y + z^2) - a^2 =c  \end{array} \right\rbrace\ . \nonumber \\
\label{eq:f II D} 
\end{eqnarray}

The expression for $R \cap f_{\rm II}(d(c))$ is then obtained trivially by imposing the additional bounds on $x$ and $y$
\begin{eqnarray}
&R\cap f_{\rm II}(d(c)) \nonumber \\ 
&=\left\lbrace (x,y,z)\in \mathbb{R}^3 \middle| \begin{array}{ccc}
|x|,|y|,|z| \leq r \\
|-y - z^2 + a| \leq r \\
-x - y^2 -2 y z^2 - z^4 + a + 2 a (y + z^2) - a^2 =c  \end{array} \right\rbrace\ . \nonumber \\
\label{eq:V intersect f II D} 
\end{eqnarray}

Let $\Sigma^2(x)$ be the $(y,z)$-plane at fixed $x$ where the superscript ``$2$'' indicates the dimensionality of the plane. That is,
\begin{equation}\label{eq: y z slice for f_II}
\Sigma^2(x) = \left\lbrace (x',y',z')\in\mathbb{R}^3 \middle| y',z'\in \mathbb{R},\ x'=x \right\rbrace\ .
\end{equation}

Moreover, define the line segments
\begin{equation}
S^{\pm}_y = \left\lbrace (y,z)\in \mathbb{R}^2 \middle| y=\pm r, |z| \leq r \right\rbrace
\end{equation}
so that $S^+_y$ and $S^-_y$ correspond to the right and left boundaries of $R$, respectively, within each $\Sigma^2(x)$ slice, as illustrated in Fig.~\ref{fig:3D_dooubly_folded_horseshoe_y_z_slice_Gammas}.

To prove that for every \( |c| \leq r \), the intersection \( R \cap f_{\rm II}(d(c)) \) consists of four disjoint horizontal slices of \( R \), it suffices to verify that, for all \( |c|, |x| \leq r \), the restriction \( R \cap f_{\rm II}(d(c))\big|_{\Sigma^2(x)} \) consists of four disjoint horizontal slices of \( R\big|_{\Sigma^2(x)} \). Here, the horizontal direction within \( \Sigma^2(x) \) is the \( y \)-direction, which is inherited from the global horizontal direction—namely, the \((x,y)\)-plane—of the ambient space \( \mathbb{R}^3 \).

\begin{figure}
        \centering
        \includegraphics[width=0.6\linewidth]{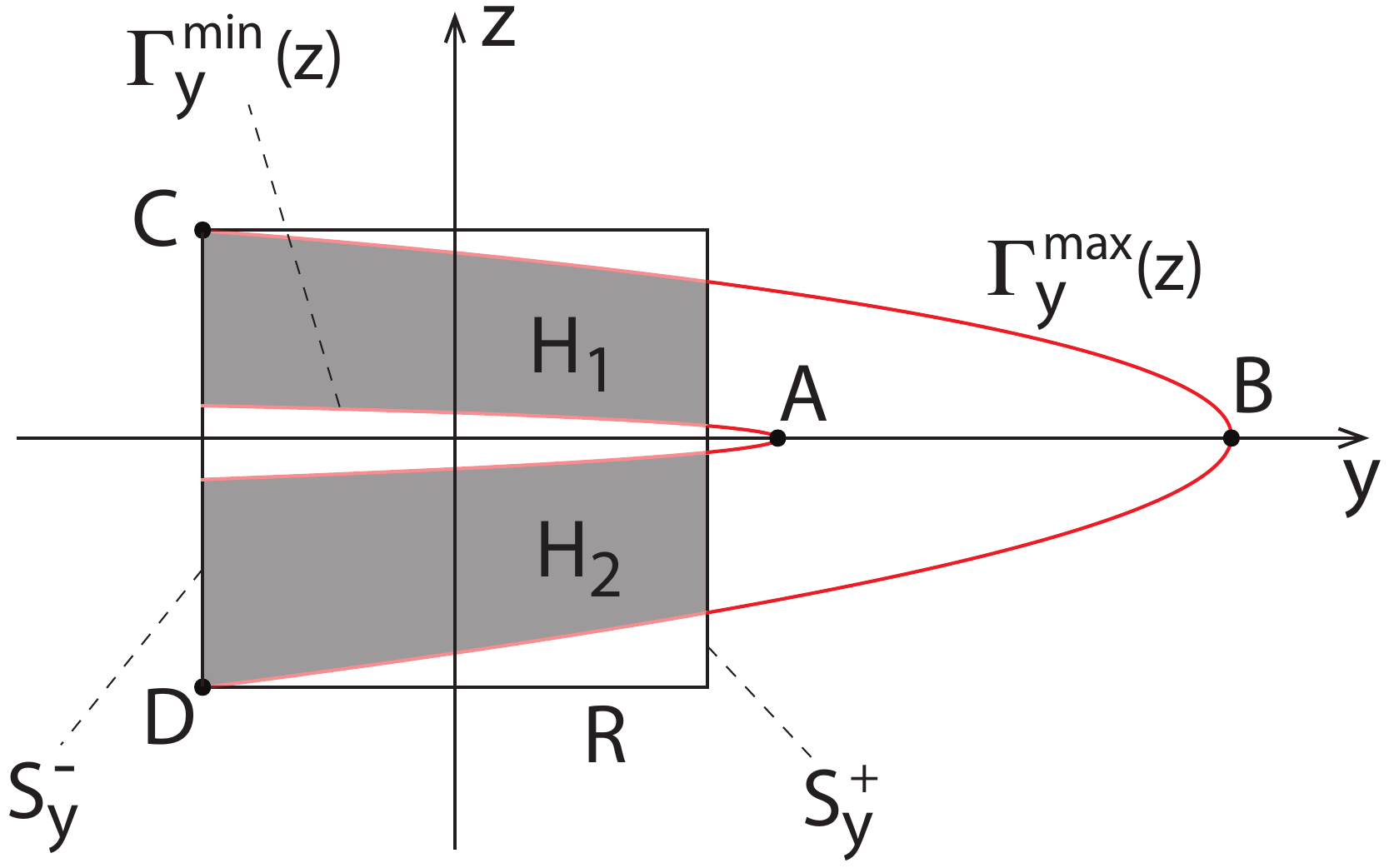} 
        \caption{The region between the graphs \(y=\Gamma_y^{\max}(z)\) and \(y=\Gamma_y^{\min}(z)\) (red) intersects the box \(R\) in two horizontal strips \(H_1\) and \(H_2\) (shaded).}   \label{fig:3D_dooubly_folded_horseshoe_y_z_slice_Gammas}
\end{figure}

The second row of the right-hand side (RHS) of Eq.~(\ref{eq:V intersect f II D}) can be rewritten in the parameterized form
\begin{equation}\label{eq:y z parabola 1 f_II}
y = -z^2 + a + s, ~~~~{\rm where}~ |s| \leq r \ .
\end{equation}
Let $\Gamma_y(z,s)$ be the family of parabolas in $\Sigma^2(x)$
\begin{equation}\label{eq:Gamma y in terms of z s f_II}
\Gamma_y(z,s) = -z^2 + a + s
\end{equation}
where $s$ is viewed as a parameter within range $|s| \leq r$. It is obvious that $\Gamma_y(z,s)$ is bounded by
\begin{equation}\label{eq:Gamma y in terms of z s bounds}
\Gamma^{\min}_y(z) \leq \Gamma_y(z,s) \leq \Gamma^{\max}_y(z)
\end{equation}
with lower and upper bounds
\begin{eqnarray}
& \Gamma^{\min}_y(z) = \Gamma_y(z,s)|_{s=-r} = -z^2 + a -r  \label{eq:Gamma y z min} \\
& \Gamma^{\max}_y(z) =  \Gamma_y(z,s)|_{s=r} = -z^2 + a + r  \label{eq:Gamma y z max}\ .
\end{eqnarray}

The possible location of $R\cap f_{\rm II}(d(c))$ can be narrowed down by establishing the following facts:
\begin{itemize}
\item[(a)] The vertex of $\Gamma^{\min}_y(z)$ is located on the right-hand side of $S^+_y$, as labeled by $A$ in Fig.~\ref{fig:3D_dooubly_folded_horseshoe_y_z_slice_Gammas};
\item[(b)] $\Gamma^{\max}_y(z)$ intersects $S^-_y$ at two points, as labeled by $C$ and $D$ in Fig.~\ref{fig:3D_dooubly_folded_horseshoe_y_z_slice_Gammas}. 
\end{itemize} 

To establish (a), let $A=(y_A,z_A)$. It can be solved easily that
\begin{equation}
z_A=0, ~~~~ y_A = \Gamma^{\min}_y(z_A=0)=a-r. 
\end{equation}
Using the assumption that $a>5+2\sqrt{5}$, it is straightforward to verify that $a-2r>0$, thus $y_A >r$, i.e., $A$ is on the right-hand side of $S^{+}_y$. 

To establish (b), notice that
\begin{equation}
\Gamma^{\max}_y(z=\pm r) = -r^2 +a +r = -r\ ,
\end{equation}
thus $C$ and $D$ are located at
\begin{equation}
C = (-r,r), ~~~~ D=(-r,-r)\ ,
\end{equation}
i.e., $C$ and $D$ are the upper-left and lower-left corners of $R$, respectively, as labeled in Fig.~\ref{fig:3D_dooubly_folded_horseshoe_y_z_slice_Gammas}. Therefore, $\Gamma^{\max}_y(z)$ intersects $S^-_y$ at its two endpoints. 

Combining (a) and (b), we know that the region bounded by $\Gamma^{\max}_y(z)$, $\Gamma^{\min}_y(z)$, and $S^{\pm}_y$ consists of two disjoint horizontal strips of $R|_{\Sigma^2(x)}$, labeled by $H_1$ and $H_2$ in Fig.~\ref{fig:3D_dooubly_folded_horseshoe_y_z_slice_Gammas}. Strictly speaking, both $H_1$ and $H_2$ depend on $x$, i.e., the position of $\Sigma^2(x)$-slice along the $x$-axis, therefore should be written as $H_1(x)$ and $H_2(x)$. However, since the $x$-dependence will not be used for the rest of the proof, we simply omit it and write the horizontal strips without explicit $x$-dependence. When viewed in each $\Sigma^2(x)$, $R \cap f_{\rm II}(d(c))$ can only exist inside $H_1$ and $H_2$:
\begin{equation}\label{eq:V intersects f_II D y z slice first bound}
R \cap f_{\rm II}(d(c)) \Big|_{\Sigma^2(x)} \subset H_1 \cup H_2 \ .
\end{equation}

At this point, let us notice that Eq.~(\ref{eq:V intersects f_II D y z slice first bound}) only makes use of the second row of the RHS of Eq.~(\ref{eq:V intersect f II D}), thus only provides a crude bound for $R \cap f_{\rm II}(d(c)) \Big|_{\Sigma^2(x)}$. Based upon Eq.~(\ref{eq:V intersects f_II D y z slice first bound}), we now further refine the bound for $R \cap f_{\rm II}(d(c)) \Big|_{\Sigma^2(x)}$ by imposing the third row of the RHS of Eq.~(\ref{eq:V intersect f II D}). 

The third row of the RHS of Eq.~(\ref{eq:V intersect f II D}) can be rewritten into the parameterized form
\begin{equation}\label{eq:y z parabola 2 f_II}
-x - y^2 -2 y z^2 - z^4 + a + 2 a (y + z^2) - a^2 = -s, ~~~~{\rm where}~ s=-c\ ,
\end{equation}
from which we solve for $y$ and obtain two branches of solutions:
\begin{equation}\label{eq: y z parabola 2 f_II two branches}
y_{\pm}(z,x,s) = -z^2 + a \pm \sqrt{s-x+a}\ .
\end{equation}
Accordingly, let us define two families of parabolas in $\Sigma^2(x)$, denoted by $\Lambda^{\pm}_y(z,x,s)$, where
\begin{equation}\label{eq:Lambda y in terms of z x s}
\Lambda^{\pm}_y(z,x,s) = -z^2 + a \pm \sqrt{s-x+a}
\end{equation}
where $x$ and $s$ are viewed as parameters with bounds $|x|,|s| \leq r$. When viewed in each $\Sigma^2(x)$ plane, $\Lambda^{+}_y(z,x,s)$ is a family of parabolas parameterized by $s$, bounded by
\begin{equation}\label{eq:Lambda y + lower and upper bounds on each slice}
\Lambda^{+,1}_y(z,x) \leq \Lambda^{+}_y(z,x,s) \leq \Lambda^{+,2}_y(z,x)
\end{equation} 
where the lower and upper bounds are attained at
\begin{eqnarray}
& \Lambda^{+,1}_y(z,x) = \Lambda^+_y(z,x,s)|_{s=-r} = -z^2 + a + \sqrt{a-x-r}  \label{eq:Lambda y + 1} \\
& \Lambda^{+,2}_y(z,x) = \Lambda^+_y(z,x,s)|_{s=r} = -z^2 + a + \sqrt{a-x+r}  \label{eq:Lambda y + 2}\ .
\end{eqnarray}
Similarly, when viewed in each $\Sigma^2(x)$ plane, $\Lambda^{-}_y(z,x,s)$ is a family of parabolas parameterized by $s$, bounded by
\begin{equation}\label{eq:Lambda y - lower and upper bounds on each slice}
\Lambda^{-,1}_y(z,x) \leq \Lambda^{-}_y(z,x,s) \leq \Lambda^{-,2}_y(z,x)
\end{equation} 
where the lower and upper bounds are attained at
\begin{eqnarray}
& \Lambda^{-,1}_y(z,x) = \Lambda^-_y(z,x,s)|_{s=r} = -z^2 + a - \sqrt{a-x+r}  \label{eq:Lambda y - 1} \\
& \Lambda^{-,2}_y(z,x) = \Lambda^-_y(z,x,s)|_{s=-r} = -z^2 + a - \sqrt{a-x-r}  \label{eq:Lambda y - 2}\ .
\end{eqnarray}

\begin{figure}
        \centering
        \includegraphics[width=0.7\linewidth]{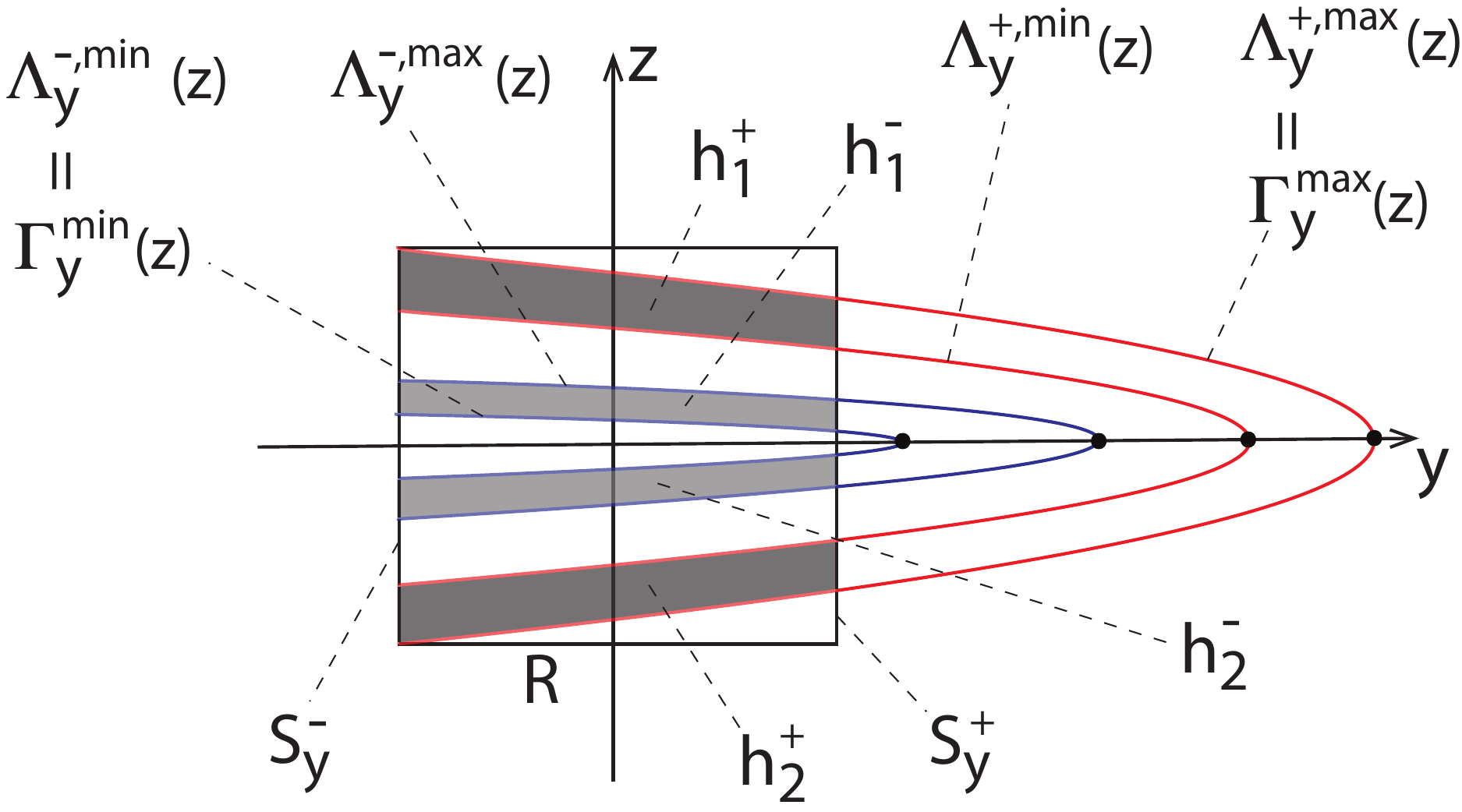} 
	\caption{(Schematic, color online) Visualization of the intersection between \( R \) and the regions bounded by the parabolas \( \Lambda^{\pm,\max}_y(z) \) and \( \Lambda^{\pm,\min}_y(z) \). The region between \( \Lambda^{+,\max}_y(z) \) and \( \Lambda^{+,\min}_y(z) \) intersects \( R \) in two disjoint horizontal slabs, labeled \( h^+_1 \) and \( h^+_2 \) (darker-shaded). Similarly, the region between \( \Lambda^{-,\max}_y(z) \) and \( \Lambda^{-,\min}_y(z) \) intersects \( R \) in two additional disjoint slabs, labeled \( h^-_1 \) and \( h^-_2 \) (lighter-shaded). Note the two pairs of identical parabolas: \( \Lambda^{-,\min}_y(z) = \Gamma^{\min}_y(z) \) and \( \Lambda^{+,\max}_y(z) = \Gamma^{\max}_y(z) \).}
           \label{fig:3D_doubly_folded_horseshoe_y_z_slice_Lambdas}
\end{figure}

It is desirable to get rid of the $x$-dependence in Eqs.~(\ref{eq:Lambda y + lower and upper bounds on each slice}) and (\ref{eq:Lambda y - lower and upper bounds on each slice}). This can be done by obtaining uniform lower and upper bounds for $\Lambda^{\pm}_y(z,x,s)$ with respect to change in $(x,s)$. A simple calculation shows:
\begin{equation}\label{eq:Gamma y +- uniform bounds}
\Lambda^{\pm,\min}_y (z) \leq \Lambda^{\pm}_y(z,x,s) \leq \Lambda^{\pm,\max}_y (z)
\end{equation}
where the bounds are attained at
\begin{eqnarray}
& \Lambda^{+,\min}_y(z) = \Lambda^+_y(z,x,s)|_{(x,s)=(r,-r)} = -z^2 + a + \sqrt{a-2r}  \label{eq:Lambda y + min} \\
& \Lambda^{+,\max}_y(z) = \Lambda^+_y(z,x,s)|_{(x,s)=(-r,r)} = -z^2 + a + \sqrt{a+2r}  \label{eq:Lambda y + max} \\
& \Lambda^{-,\min}_y(z) = \Lambda^-_y(z,x,s)|_{(x,s)=(-r,r)} = -z^2 + a - \sqrt{a+2r}  \label{eq:Lambda y - min} \\
& \Lambda^{-,\max}_y(z) = \Lambda^-_y(z,x,s)|_{(x,s)=(r,-r)} = -z^2 + a - \sqrt{a-2r}  \label{eq:Lambda y - max}\ .
\end{eqnarray}
A schematic illustration of the four parabolas is given in Fig.~\ref{fig:3D_doubly_folded_horseshoe_y_z_slice_Lambdas}. At this point, it is worthwhile checking that since $a > 5 + 2\sqrt{5}$, we have 
\begin{equation}\label{eq:a-2r}
a-2r >0\ ,
\end{equation}
i.e., the square roots in Eqs.~(\ref{eq:Lambda y + min}) and (\ref{eq:Lambda y - max}) are real-valued. Also, it is easy to check that $r = \sqrt{a+2r}$, therefore we obtain the important relations
\begin{eqnarray}\label{eq:Identical parabola bounds}
& \Lambda^{+,\max}_y(z) = \Gamma^{\max}_y(z) \label{eq:Identical parabola bounds max} \\
& \Lambda^{-,\min}_y(z) = \Gamma^{\min}_y(z) \ , \label{eq:Identical parabola bounds min} 
\end{eqnarray}
as indicated by Fig.~\ref{fig:3D_doubly_folded_horseshoe_y_z_slice_Lambdas}. Therefore, conditions (a) and (b) immediately apply to $\Lambda^{-,\min}_y(z)$ and $\Lambda^{+,\max}_y(z)$, respectively. This guarantees that the region bounded between the parabolas \( \Lambda^{+,\max}_y(z) \) and \( \Lambda^{+,\min}_y(z) \) intersects \( R \) in two disjoint horizontal slabs, labeled \( h^+_1 \) and \( h^+_2 \) in Fig.~\ref{fig:3D_doubly_folded_horseshoe_y_z_slice_Lambdas} (darker-shaded regions). Similarly, the region bounded between \( \Lambda^{-,\max}_y(z) \) and \( \Lambda^{-,\min}_y(z) \) intersects \( R \) in two additional disjoint horizontal slabs, labeled \( h^-_1 \) and \( h^-_2 \) in the same figure (lighter-shaded regions).

Furthermore, Eqs.~(\ref{eq:Identical parabola bounds max}) and (\ref{eq:Identical parabola bounds min}) also guarantee that $h^{\pm}_1 \subset H_1$ and $h^{\pm}_2 \subset H_2$. Hence when viewed in each $\Sigma^2(x)$ plane (see Fig.~\ref{fig:3D_doubly_folded_horseshoe_y_z_slice_Lambdas}), $R \cap f_{\rm II}(d(c))\big|_{\Sigma^2(x)}$ consists of four disjoint horizontal slices of $R\big|_{\Sigma^2(x)}$ that lies within the four disjoint horizontal strips 
\begin{equation}\label{eq:V intersects f_II D y z slice second bound}
R \cap f_{\rm II}(d(c)) \Big|_{\Sigma^2(x)} \subset h^{+}_1\cup h^{-}_1 \cup h^{+}_2 \cup h^{-}_2 \subset H_1 \cup H_2 \ . 
\end{equation}
The theorem is thus proved.
\end{proof}

Having established that \( f_{\rm II}(d(s)) \cap R \) consists of four disjoint horizontal slices for every \( s \in I^s \), we may now invoke Proposition~\ref{Semi-conjugacy to the full shift on N symbols}. This immediately implies the existence of a nonempty compact invariant set \( \Omega \subset \bigcap_{n \in \mathbb{Z}} f_{\rm II}^n(R) \), on which the restricted map \( f_{\rm II}|_\Omega \) is semi-conjugate to the full shift on four symbols.

\subsection{Map \( f_{\mathrm{III}} \): doubly and singly folded sheet in four dimensions}
\label{f III}

We now turn to a four-dimensional symplectic H\'{e}non-type map \( f_{\mathrm{III}} \), constructed by coupling two classical H\'{e}non maps—one acting on the \( (x,z) \)-plane and the other on the \( (y,w) \)-plane. This construction introduces a novel degree of geometric flexibility: unlike in three dimensions, where all foldings must stack along a common axis, the four-dimensional setting allows distinct stacking directions, enabling new classes of paperfolding structures. In particular, we identify two parameter regimes that give rise to qualitatively different horseshoe configurations.

In the first regime, the resulting horseshoe is described by the doubly folded template \( \mathcal{F}^w_{(x,\underline{y})} \circ \mathcal{F}^z_{(\underline{x},y)} \), representing a two-dimensional sheet in four-dimensional space with independent crease and stacking directions, as introduced in Section~\ref{Folding a two-dimensional sheet in four dimensions}. Such a configuration is inherently four-dimensional and has no analogue in lower dimensions. In the second regime, the system exhibits a singly folded horseshoe modeled by the template \( \mathcal{F}^z_{(x,\underline{y})} \), in which both folding and stacking occur within the same three-dimensional subspace. The transition between these two regimes implies the existence of a bifurcation—or sequence of bifurcations—in which the doubly folded configuration unfolds into a singly folded one, corresponding to a reduction in the system's topological complexity.

Consider the H\'{e}non-type map in four dimensions, denoted by $f_{\rm III}$
\begin{equation}\label{eq:4D doubly folded Henon}
\left( \begin{array}{ccc}
x_{n+1}\\
y_{n+1} \\
z_{n+1} \\
w_{n+1} \end{array} \right) = 
f_{\rm III} \left( \begin{array}{ccc}
x_{n}\\
y_{n} \\
z_{n}\\
w_{n} \end{array} \right) =
\left( \begin{array}{ccc}
a_0 - x^2_n - z_n + c(x_n - y_n)\\
a_1 - y^2_n - w_n - c(x_n - y_n)\\
x_n \\
y_n \end{array} \right)
\end{equation}
which is the coupled H\'{e}non map studied in \cite{Fujioka25}. The parameters in this system are $a_0$, $a_1$, and $c$. The parameters $a_0$ and $a_1$ govern the rates of expansion in the $x$- and $y$-directions, respectively, while $c$ controls the coupling strength between the dynamics in the $(x,z)$-plane and those in the $(y,w)$-plane. Here we remark that the parameters $a_0$ and $a_1$ are inherited from the original two-dimensional H\'enon map, and the coupling strength is taken to be linear and thus governed by a single proportionality factor $c$. In principle, one could also introduce additional parameters—for example, the “$b$” parameter in \cite{Henon76} that controls dissipation—or allow nonlinear coupling terms between the two two-dimensional subspaces. However, since the aim of this paper is to demonstrate the formation of new horseshoe configurations in higher dimensions, we restrict attention to the simplest case that conserves phase-space area ($b=1$) with linear coupling only.

The inverse of $f_{\rm III}$ is 
\begin{equation}\label{eq:4D doubly folded Henon inverse}
\left( \begin{array}{ccc}
x_{n-1}\\
y_{n-1} \\
z_{n-1} \\
w_{n-1} \end{array} \right) = 
f^{-1}_{\rm III} \left( \begin{array}{ccc}
x_{n}\\
y_{n} \\
z_{n}\\
w_{n} \end{array} \right) =
\left( \begin{array}{ccc}
z_n \\
w_n \\
a_0 - z^2_n - x_n + c(z_n - w_n) \\
a_1 - w^2_n - y_n - c(z_n - w_n)\end{array} \right)\ .
\end{equation}
Notice that by replacement of variables $(x,y,z,w) \mapsto (z,w,x,y)$, $f_{\rm III}$ is transformed into $f^{-1}_{\rm III}$. 

To motivate the emergence of distinct horseshoe geometries in the four-dimensional map \( f_{\rm III} \), we consider two types of Anti-Integrable (AI) limits~\cite{Aubry90} that were proposed and analyzed in detail in our previous work~\cite{Fujioka25}. These limiting scenarios, referred to as \emph{Type A} and \emph{Type B}, provide asymptotic regimes in which the topological structure of the invariant set becomes particularly transparent.

\emph{Type A} arises by taking \( a_0 = a_1 = a \to \infty \) while keeping the coupling parameter \( c \) finite. This leads to infinite expansion within the \( (x,z) \)- and \( (y,w) \)-planes while maintaining finite coupling between them. In contrast, \emph{Type B} is obtained by taking \( a_0 = a_1 = a \to \infty \) while fixing the ratio \( c/\sqrt{a} = \gamma \) fixed with $\gamma>1$, which again induces infinite planar expansion but imposes a coupling strength that scales with \( \sqrt{a} \).

These two limits yield qualitatively different phase-space geometries. Type A results in a doubly folded horseshoe modeled by the template \( \mathcal{F}^w_{(x,\underline{y})} \circ \mathcal{F}^z_{(\underline{x},y)} \), which is topologically equivalent to the direct product of two two-dimensional singly folded horseshoes---one in each plane. Type B, on the other hand, gives rise to a singly folded configuration corresponding to the template \( \mathcal{F}^z_{(x,\underline{y})} \).

The existence of these two structurally distinct limits implies a global deformation between them. Along any path in parameter space \((a_0,a_1,c)\) from the Type~A neighborhood to the Type~B one, the doubly folded horseshoe must be transformed into a singly folded configuration; however, this need not occur in a single codimension-one event. The change may proceed gradually through parameter intervals with intricate intermediate states—e.g., partial unfoldings, temporary additional folds, windows of near-integrable behavior, or sequences of (hetero)clinic tangencies and crises—before the Type~B template is attained. During this passage the invariant set is progressively reorganized, typically with a stepwise (rather than strictly monotone) reduction in topological entropy.

\subsubsection{Type A: Doubly Folded Horseshoe Described by Template \( \mathcal{F}^w_{(x,\underline{y})} \circ \mathcal{F}^z_{(\underline{x},y)} \)}
\label{Type A}

In the neighborhood of the Type A anti-integrable limit, where \( a_0 = a_1 = a \to \infty \) and the coupling parameter \( c \) remains finite, the four-dimensional Hénon-type map \( f_{\mathrm{III}} \) generates a geometric configuration that is well described by the doubly folded paperfolding template \( \mathcal{F}^w_{(x,\underline{y})} \circ \mathcal{F}^z_{(\underline{x},y)} \). This structure arises from two independent folding operations applied to a two-dimensional sheet in four-dimensional space, each with its own crease and stacking direction. In this regime, the invariant set exhibits a product-like geometry, corresponding to the direct product of two classical Smale horseshoes in the \((x,z)\)- and \((y,w)\)-planes, respectively.

To formalize this setting, we adopt the following conventions: the \emph{horizontal directions} are defined to be the \( (x,y) \)-subspace, as these directions are approximately aligned with the expanding directions of the map \( f_{\mathrm{III}} \). The \emph{vertical directions} are taken to be the \( (z,w) \)-subspace, which corresponds to the directions of contraction. Accordingly, we fix the fundamental domain to be the hypercube \(R = I^u \times I^s\), where \( I^u \subset \mathbb{R}^2 \) is a square in the horizontal plane and \( I^s \subset \mathbb{R}^2 \) is a square in the vertical plane. The precise bounds and parameter conditions for \( R \) will be specified in Theorem~\ref{4D Doubly folded horseshoe topology}.

\begin{figure}
        \centering
        \includegraphics[width=0.7\linewidth]{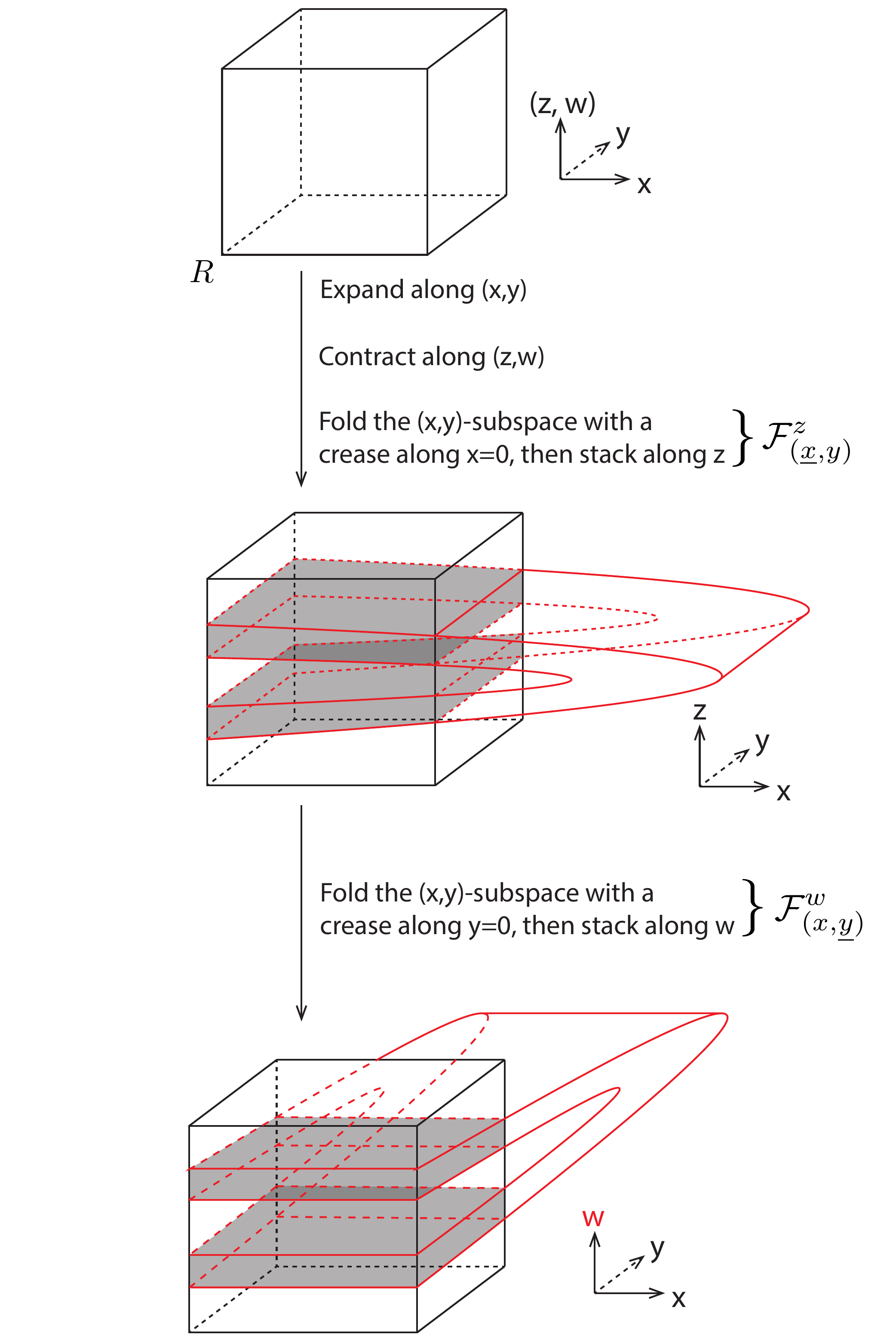} 
        \caption{(Schematic, color online) Horseshoe generated by \( f_{\rm III} \) in a neighborhood of the Type-A AI limit. The qualitative deformation of the cube \( R \) under \( f_{\rm III} \) can be visualized as follows: first, \( R \) is expanded along the \( (x,y) \)-directions and contracted along the \( (z,w) \)-directions, effectively flattening it into a quasi-two-dimensional sheet within the \( (x,y) \)-subspace. This sheet then undergoes two sequential foldings: the first, \( \mathcal{F}^z_{(\underline{x},y)} \), folds along \( x = 0 \) and stacks along the \( z \)-axis; the second, \( \mathcal{F}^w_{(x,\underline{y})} \), folds along \( y = 0 \) and stacks along the \( w \)-axis. The resulting horseshoe is described by the doubly folded template \( \mathcal{F}^w_{(x,\underline{y})} \circ \mathcal{F}^z_{(\underline{x},y)} \). Since the two folding operations act in orthogonal subspaces, their composition is commutative: the structure remains unchanged if the foldings are applied in the reverse order, as expressed in Eq.~(\ref{eq:foldings commute}).
} \label{fig:4D_doubly_folded_horseshoe}
\end{figure}

To better understand the Type-A limit, a natural starting point is the case of zero coupling, \( c = 0 \), in which \( f_{\rm III} \) reduces to the direct product of two decoupled two-dimensional H\'{e}non maps: one acting on the \( (x,z) \)-plane and the other on the \( (y,w) \)-plane.

In particular, when \( c = 0 \) and both parameters \( a_0 \) and \( a_1 \) exceed the classical Devaney--Nitecki threshold---namely, \( a_0, a_1 > 5 + 2\sqrt{5} \)---each two-dimensional component exhibits a Smale horseshoe structure. Consequently, \( f_{\rm III} \) becomes dynamically equivalent to the direct product of two classical Smale horseshoe maps, as illustrated in Fig.~\ref{fig:4D_doubly_folded_horseshoe}.

Geometrically, the action of \( f_{\rm III} \) on the hypercube \( R \) can be understood as follows: the map expands \( R \) along the horizontal directions \( (x,y) \) and contracts it along the vertical directions \( (z,w) \), effectively flattening \( R \) into a quasi-two-dimensional sheet lying in the \( (x,y) \)-subspace. The two-dimensional H\'{e}non map in the \( (x,z) \)-plane acts on each \( x \)-fiber of this sheet (i.e., segments aligned with the \( x \)-axis), folding it along the crease \( x = 0 \) and stacking the two halves along the \( z \)-axis. The resulting configuration of each fiber is described by the one-dimensional folding template \( \mathcal{F}^z_{\underline{x}} \). Similarly, the H\'{e}non map in the \( (y,w) \)-plane folds each \( y \)-fiber of the sheet along the crease \( y = 0 \) and stacks the halves along the \( w \)-axis, yielding the configuration \( \mathcal{F}^w_{\underline{y}} \).

The global image \( f_{\rm III}(R) \) is therefore represented by the Cartesian product \( \mathcal{F}^z_{\underline{x}} \times \mathcal{F}^w_{\underline{y}} \), where each factor corresponds to a one-dimensional folded sheet in a two-dimensional subspace. By Eq.~(\ref{eq:folding decomposition into direct products}), this product is equivalent to the composition
\[
\mathcal{F}^w_{(x,\underline{y})} \circ \mathcal{F}^z_{(\underline{x},y)},
\]
which describes a two-dimensional sheet in four dimensions folded along two orthogonal crease directions and stacked along two independent stacking axes.

A signature observable consequence of the Cartesian product structure is the behavior of individual horizontal disks under the map \( f_{\rm III} \). Specifically, for every horizontal disk \( d(v) \subset R \) with $v\in I^s$, its image \( f_{\rm III}(d(s)) \) intersects \( R \) in four mutually disjoint horizontal slices. This property will be rigorously verified in Theorem~\ref{4D Doubly folded horseshoe topology}, providing strong evidence that the geometry of \( f_{\rm III}(R) \) is accurately described by the doubly folded paperfolding template \( \mathcal{F}^w_{(x,\underline{y})} \circ \mathcal{F}^z_{(\underline{x},y)} \).

\begin{figure}[th]
        \centering
        \includegraphics[width=0.7\linewidth]{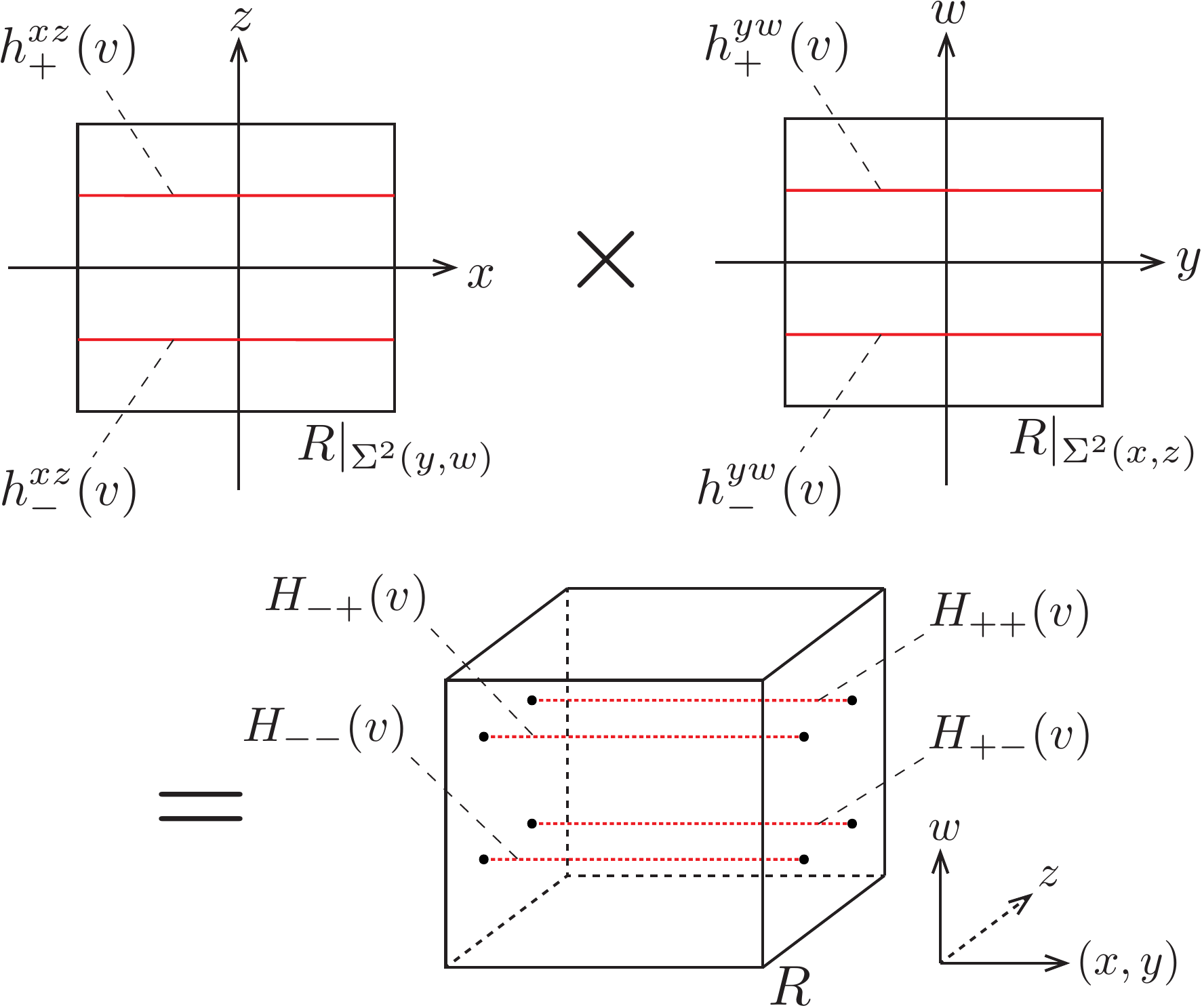} 
        \caption{(Schematic, color online) For \( c = 0 \) and \( a_0, a_1 > 5+2\sqrt{5} \), the intersection \( f_{\rm III}(d(v)) \cap R \) assumes the Cartesian product structure given in Eq.~(\ref{eq:f III four horizontal slices direct prodcut}) for every \( v \in I^s \), consisting of four mutually disjoint horizontal slices of \( R \), labeled \( H_{\pm\pm}(v) \) (dashed red segments). 
} \label{fig:f_III_direct_product}
\end{figure}

This structure arises because the action of \( f_{\rm III} \) decomposes into two independent one-dimensional foldings: every \( x \)-fiber of \( d(v) \) (i.e., a curve obtained by fixing \( y \) in the domain of \( d(v) \)) is folded according to the template \( \mathcal{F}^z_{\underline{x}} \), and every \( y \)-fiber (obtained by fixing \( x \)) is folded according to \( \mathcal{F}^w_{\underline{y}} \).

In the \( \Sigma^2(y,w) \)-plane—defined as the \( (x,z) \)-plane at fixed \( (y,w) \)—the horizontal direction is taken to be the \( x \)-axis, which is approximately aligned with the direction of expansion within the \( (x,z) \)-plane. The image \( f_{\rm III}(d(v)) \cap R \) restricts in this plane to two disjoint horizontal slices:
\[
f_{\rm III}(d(v)) \cap R \big|_{\Sigma^2(y,w)} = h^{xz}_{+}(v) \cup h^{xz}_{-}(v)\ ,
\]
where \( h^{xz}_{\pm}(v) \), as illustrated in Fig.~\ref{fig:f_III_direct_product}, are disjoint horizontal slices of \( R \big|_{\Sigma^2(y,w)} \).

Likewise, in the \( \Sigma^2(x,z) \)-plane—defined as the \( (y,w) \)-plane at fixed \( (x,z) \)—the horizontal direction is taken to be the \( y \)-axis, which is approximately aligned with the direction of expansion within the \( (y,w) \)-plane. In this case, the restriction is
\[
f_{\rm III}(d(v)) \cap R \big|_{\Sigma^2(x,z)} = h^{yw}_{+}(v) \cup h^{yw}_{-}(v)\ ,
\]
where \( h^{yw}_{\pm}(v) \), as illustrated in Fig.~\ref{fig:f_III_direct_product}, are disjoint horizontal slices of \( R \big|_{\Sigma^2(x,z)} \).

Taken together, the total intersection of the image with \( R \) is the Cartesian product
\begin{eqnarray}
f_{\rm III}(d(v)) \cap R &= \left( h^{xz}_{+}(v) \cup h^{xz}_{-}(v) \right) \times \left( h^{yw}_{+}(v) \cup h^{yw}_{-}(v) \right)\nonumber \\
&=H_{++}(v)\cup H_{+-}(v)\cup H_{-+}(v)\cup H_{--}(v), \label{eq:f III four horizontal slices direct prodcut}
\end{eqnarray}
where
\begin{eqnarray}
&H_{++}(v) = h^{xz}_{+}(v)\times h^{yw}_{+}(v),\quad H_{+-}(v) = h^{xz}_{+}(v)\times h^{yw}_{-}(v),\nonumber\\
&H_{-+}(v) = h^{xz}_{-}(v)\times h^{yw}_{+}(v),\quad H_{--}(v) = h^{xz}_{-}(v)\times h^{yw}_{-}(v), \label{eq:f III four horizontal slices direct prodcut definition}
\end{eqnarray}
are mutually disjoint horizontal slices of \( R \) in the full four-dimensional space. This Cartesian product structure is illustrated in Fig.~\ref{fig:f_III_direct_product}. This slicing structure is a direct geometric consequence of the product template \( \mathcal{F}^z_{\underline{x}} \times \mathcal{F}^w_{\underline{y}} \), and hence of the composite template \( \mathcal{F}^w_{(x,\underline{y})} \circ \mathcal{F}^z_{(\underline{x},y)} \).

When \( c > 0 \), it is natural to expect that for sufficiently weak coupling—specifically, when \( a_0, a_1 \gg c > 0 \)—the qualitative geometry of \( f_{\rm III}(R) \cap R \) remains essentially unchanged. In this regime, the dominant expansion within the \( (x,z) \)- and \( (y,w) \)-planes, governed by \( a_0 \) and \( a_1 \), respectively, greatly outweighs the comparatively weak inter-plane coupling controlled by the parameter \( c \). As a result, the leading-order geometry is expected to be unaffected by the coupling, and the configuration of \( f_{\rm III}(R) \cap R \) should closely resemble that of the decoupled case \( c = 0 \), which is well captured by the doubly folded paperfolding template \( \mathcal{F}^w_{(x,\underline{y})} \circ \mathcal{F}^z_{(\underline{x},y)} \).

In particular, we expect that for any horizontal disk \( d(v) \subset R \), although the exact Cartesian product structure in Eq.~(\ref{eq:f III four horizontal slices direct prodcut}) no longer applies, the intersection \( f_{\rm III}(d(v)) \cap R \) still consists of four disjoint horizontal slices of \( R \). This expectation is rigorously confirmed by Theorem~\ref{4D Doubly folded horseshoe topology}, which shows that the local image structure of \( f_{\rm III} \) remains compatible with the paperfolding template \( \mathcal{F}^w_{(x,\underline{y})} \circ \mathcal{F}^z_{(\underline{x},y)} \) even in the presence of weak coupling.

\begin{theorem}\label{4D Doubly folded horseshoe topology}
Let the parameters of \( f_{\rm III} \) satisfy \( A_0 = (a_0 + a_1)/2 \geq -1 \) and define \( r = 2\sqrt{2}\,(1 + \sqrt{1 + A_0}) \). Let \( R = [-r, r]^4 = I^u \times I^s \), where \( I^u = [-r, r]^2 \) is the horizontal square in the \( (x,y) \)-subspace, and \( I^s = [-r, r]^2 \) is the vertical square in the \( (z,w) \)-subspace. Suppose the parameters satisfy the following bounds:
\begin{eqnarray}
& 0 < \frac{1}{4}c^2 + a_i - (c+2) r \ , \quad (i = 0, 1) \label{eq:4D_Doubly_folded_horseshoe_topology_parameter_bound_1} \\
& 0 \leq r^2 - 2(c+1) r - a_i \ . \quad (i = 0, 1) \label{eq:4D_Doubly_folded_horseshoe_topology_parameter_bound_2}
\end{eqnarray}
Then for every horizontal disk $d(v)=I^u \times \lbrace v \rbrace$, where $v\in I^s$, the intersection \( f_{{\rm III}}(d(v))\cap R \) consists of four disjoint horizontal slices of \( R \).
\end{theorem}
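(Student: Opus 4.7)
The plan is to parameterize $f_{\rm III}(d(v))\cap R$ explicitly using the inverse map~(\ref{eq:4D doubly folded Henon inverse}) and, for each horizontal coordinate $(X,Y)\in I^u$, to count the number of vertical coordinates $(Z,W)\in I^s$ producing a point of the intersection. Writing $v=(z_0,w_0)$, a quadruple $(X,Y,Z,W)\in R$ lies in $f_{\rm III}(d(v))\cap R$ if and only if
\begin{align}
a_0 - Z^2 - X + c(Z-W) &= z_0\,,\label{eq:planA}\\
a_1 - W^2 - Y - c(Z-W) &= w_0\,.\label{eq:planB}
\end{align}
Isolating $W$ in~(\ref{eq:planA}) and $Z$ in~(\ref{eq:planB}) rewrites the two conditions as the parabolic graphs $W=-Z^2/c+Z+(a_0-z_0-X)/c$ and $Z=-W^2/c+W+(a_1-w_0-Y)/c$ in the $(Z,W)$-plane; their intersection therefore consists of at most four points by B\'{e}zout's theorem, consistent with the expected number of horizontal slices.

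To show that exactly four intersections sit inside $I^s$, I would follow the geometric strategy of the proof of Theorem~\ref{3D Doubly folded horseshoe topology}. Solving~(\ref{eq:planA}) and~(\ref{eq:planB}) using the quadratic formula gives, for each sign pair $(\epsilon_Z,\epsilon_W)\in\{+,-\}^2$, the branch relations
\[
Z = \frac{c}{2} + \epsilon_Z\sqrt{\frac{c^2}{4}+a_0-z_0-X-cW}\,,\quad W = -\frac{c}{2} + \epsilon_W\sqrt{\frac{c^2}{4}+a_1-w_0-Y+cZ}\,,
\]
and substituting one into the other produces, for each of the four sign choices, a single one-variable fixed-point equation. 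Hypothesis~(\ref{eq:4D_Doubly_folded_horseshoe_topology_parameter_bound_1}) is precisely what guarantees that both discriminants remain strictly positive throughout $R$, so every branch is smoothly defined there; this plays the role of the condition $a-2r>0$ in the proof of Theorem~\ref{3D Doubly folded horseshoe topology}. Existence and uniqueness of the root within each branch then follow from the intermediate-value theorem combined with monotonicity on a suitable subinterval, and the four sign combinations necessarily give rise to pairwise disjoint solutions because they lie on the four distinct combinations of the sides of $Z=c/2$ and $W=-c/2$.

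Hypothesis~(\ref{eq:4D_Doubly_folded_horseshoe_topology_parameter_bound_2}) is then used to verify that all four of these solutions actually lie inside $I^s=[-r,r]^2$: it forces the vertices and turning points of the two parabolic graphs to lie strictly outside $I^s$, so that each branch crosses the boundary $\partial I^s$ transversely and contributes a full horizontal slice rather than a partial arc. Continuous dependence of each branch on $(X,Y)\in I^u$ then follows from the implicit function theorem once transversality has been established, and the resulting four continuous maps $(X,Y)\mapsto(Z,W)$ assemble into exactly the four disjoint horizontal slices asserted in the theorem.

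The main obstacle will be controlling uniqueness within each branch in the presence of the coupling term $c(Z-W)$. For $c=0$ the two equations decouple and the four solutions reduce to the Cartesian product structure~(\ref{eq:f III four horizontal slices direct prodcut}), making existence and disjointness of the four slices immediate; for $c>0$, however, the implicit interdependence between $Z$ and $W$ has to be estimated quantitatively so as to exclude tangential intersections or spurious additional roots inside $I^s$. The precise numerical form of the bounds~(\ref{eq:4D_Doubly_folded_horseshoe_topology_parameter_bound_1}) and~(\ref{eq:4D_Doubly_folded_horseshoe_topology_parameter_bound_2}) is what encodes the persistence of the Type-A doubly folded template from the decoupled limit into the regime of nonzero coupling, and this quantitative step is where the bulk of the technical work will lie.
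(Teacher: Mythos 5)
Your setup coincides with the paper's: fix $(X,Y)\in I^u$, restrict to the $(Z,W)$-plane, and observe that the two defining equations become two parabolic graphs, $W$ as a function of $Z$ and $Z$ as a function of $W$, whose intersection points are the candidate slice points; your reading of hypothesis~(\ref{eq:4D_Doubly_folded_horseshoe_topology_parameter_bound_1}) as a positivity condition on the discriminant $\tfrac{c^2}{4}+a_i-(c+2)r$ is equivalent to the paper's condition that the vertex of each parabola lies outside the square. The gap is in what comes next. You reduce the problem to a one-variable fixed-point equation on each of the four sign branches and then assert that ``existence and uniqueness of the root within each branch follow from the intermediate-value theorem combined with monotonicity on a suitable subinterval,'' before conceding that ``this quantitative step is where the bulk of the technical work will lie.'' That step \emph{is} the theorem: without a concrete monotonicity or contraction estimate for the composed branch map (which would require controlling $\partial W/\partial Z\cdot\partial Z/\partial W$ in terms of $c$, $a_i$, $r$, and is not obviously implied by the two stated hypotheses), you have neither existence nor uniqueness of a root in each branch, and B\'ezout only gives the upper bound of four. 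So the proposal as written does not establish the result.

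The paper closes exactly this gap with a purely geometric argument that avoids branch substitution altogether. Hypothesis~(\ref{eq:4D_Doubly_folded_horseshoe_topology_parameter_bound_1}) places the vertex of the parabola $w(z;x,z_v)$ strictly above the square $R|_{\Sigma^2(x,y)}$, and hypothesis~(\ref{eq:4D_Doubly_folded_horseshoe_topology_parameter_bound_2}) forces $w(z;x,z_v)|_{z=\pm r}\leq -r$, so the parabola enters through the top edge and exits through the bottom edge twice: it meets the square in two disjoint arcs, each spanning the full $w$-extent. (Your description of hypothesis~(\ref{eq:4D_Doubly_folded_horseshoe_topology_parameter_bound_2}) as keeping ``vertices and turning points'' outside $I^s$ misattributes its role; it controls the boundary values at $z=\pm r$, ensuring the arcs exit through the bottom rather than the sides.) Symmetrically, the second parabola meets the square in two arcs spanning the full $z$-extent. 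Each top-to-bottom arc must cross each left-to-right arc at least once, giving at least four intersection points, and B\'ezout caps the count at four, so there are exactly four, transversal and isolated, with no monotonicity estimate needed. Continuity in $(x,y,v)$ then assembles these into the four disjoint horizontal slices. If you want to salvage your route, you must actually carry out the uniform contraction or monotonicity bound you deferred; otherwise the argument should be replaced by the spanning-arc counting above.
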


\begin{proof}

Fix a vertical coordinate \( v = (z_v, w_v) \in I^s \). Using the identity relation \( f_{\rm III}^{-1}(f_{\rm III}(d(v))) = d(v) \), we obtain the following analytic expression for the image \( f_{\rm III}(d(v)) \):
\begin{eqnarray} 
 f_{\rm III}(d(v))&= 
\left\lbrace (x,y,z,w) \,\middle|\, \begin{array}{l}
|z| \leq r\ , \\
|w| \leq r\ , \\
a_0 - z^2 - x + c(z - w) = z_v\ , \\
a_1 - w^2 - y - c(z - w) = w_v
\end{array} \right\rbrace .\label{eq:f III V}
\end{eqnarray}

The intersection of \( f_{\rm III}(d(v)) \) with \( R \) is obtained by further imposing the bounds on \( x \) and \( y \):
\begin{eqnarray}
f_{\rm III}(d(v)) \cap R
&=&
\left\lbrace (x,y,z,w) \,\middle|\,
\begin{array}{l}
|x|,\,|y|,\,|z|,\,|w| \leq r, \\[4pt]
a_0 - z^{2} - x + c(z - w) = z_v, \\[4pt]
a_1 - w^{2} - y - c(z - w) = w_v
\end{array}
\right\rbrace .
\label{eq:f III V intersect R}
\end{eqnarray}

From Eq.~(\ref{eq:f III V}), we see that the two-dimensional surface \( f_{\mathrm{III}}(d(v)) \) can be parameterized by \((z,w)\), with the \(x\)- and \(y\)-components of points on this surface uniquely determined by \((z,w)\) through the third and fourth rows on the right hand side of Eq.~(\ref{eq:f III V}), respectively. 

We now examine the intersection of \( f_{\rm III}(d(v)) \) with the hypercube \( R \).  
Because this intersection is a two‑dimensional set embedded in four dimensions, it is difficult to visualize directly.  
To aid visualization, we study its intersection with a two‑dimensional surface of section, defined by  
\begin{equation}\label{eq: z w slice}
\Sigma^2(x,y) = \left\lbrace (x',y',z',w') \,\middle|\, x' = x,\; y' = y,\; z',w' \in \mathbb{R} \right\rbrace .
\end{equation}
In other words, \( \Sigma^2(x,y) \) is the \( (z,w) \)-plane at the fixed coordinates \( (x,y) \).

Within this plane, the third row on the right-hand side of Eq.~(\ref{eq:f III V}) can be interpreted as a quadratic function \( w(z;x,z_v) \) that expresses \( w \) in terms of \( z \), with parameters \( x \) and \( z_v \):  
\begin{equation}\label{eq:w in terms of z x z_v}
w(z;x,z_v) = -\frac{z^2}{c} + z + \frac{a_0}{c} - \frac{x+z_v}{c},
\end{equation}
which defines a parabola in \( \Sigma^2(x,y) \).

\begin{figure}[th]
        \centering
        \includegraphics[width=0.75\linewidth]{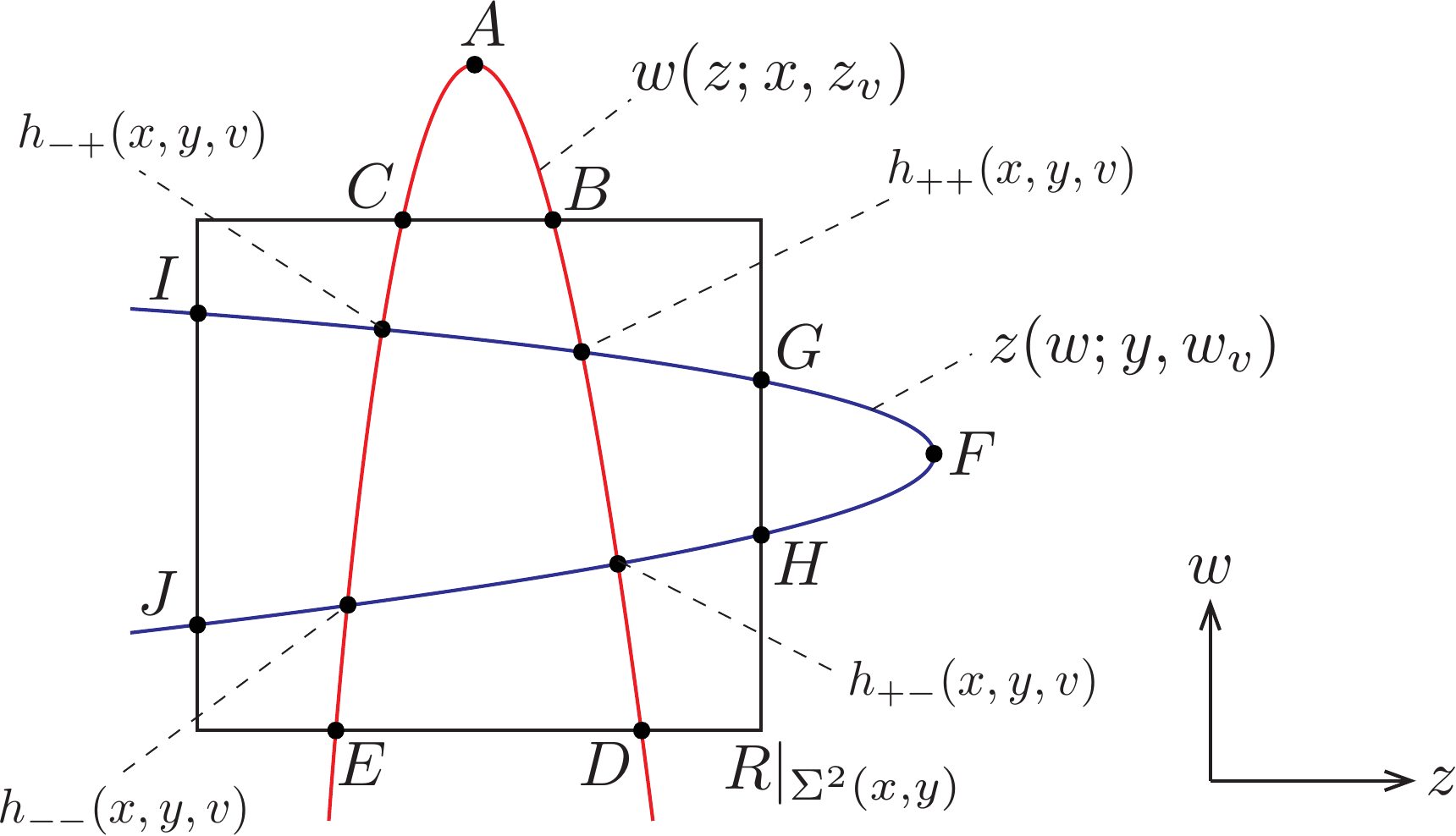} 
        \caption{(Schematic, color online) For every \( (x,y)\in I^u \) and \( v\in I^s \), the graph of \( w(z;x,z_v) \) (red parabola) intersects \( R|_{\Sigma^2(x,y)} \) in two disjoint segments—between \( B \) and \( D \), and between \( C \) and \( E \)—each spanning the entire \( w \)-direction of \( R|_{\Sigma^2(x,y)} \). Likewise, the graph of \( z(w;y,w_v) \) (blue parabola) intersects \( R|_{\Sigma^2(x,y)} \) in two disjoint segments—between \( G \) and \( I \), and between \( H \) and \( J \)—each spanning the entire \( z \)-direction of \( R|_{\Sigma^2(x,y)} \). The two parabolas therefore intersect at four isolated points, labeled \( h_{\pm\pm}(x,y,v) \).
} \label{fig:Two_parabolas_xy_plane}
\end{figure}

Similarly, the fourth row on the right-hand side of Eq.~(\ref{eq:f III V}) can be interpreted as a quadratic function \( z(w;y,w_v) \) that expresses \( z \) in terms of \( w \), with parameters \( y \) and \( w_v \):  
\begin{equation}\label{eq:z in terms of w y w_v}
z(w;y,w_v) = -\frac{w^2}{c} + w + \frac{a_1}{c} - \frac{y+w_v}{c},
\end{equation}
which defines another parabola in \( \Sigma^2(x,y) \).

We now analyze the shapes and positions of the graphs of \( w(z;x,z_v) \) and \( z(w;y,w_v) \) relative to the restriction of \( R \) onto \( \Sigma^2(x,y) \). Our approach follows the well‑known method of Devaney and Nitecki, originally introduced in~\cite{Devaney79}. More precisely, we show that, for every \( (x,y)\in I^u \) and every \( v\in I^s \), the following conditions hold:
\begin{enumerate}
\item[(a)] The parabola given by the graph of \( w(z;x,z_v) \) intersects \( R|_{\Sigma^2(x,y)} \) in two disjoint components, each spanning the entire \( w \)-direction of \( R|_{\Sigma^2(x,y)} \). These components are schematically indicated by the segments on the red curve between \( B \) and \( D \), and between \( C \) and \( E \) in Fig.~\ref{fig:Two_parabolas_xy_plane}.
\item[(b)] The parabola given by the graph of \( z(w;y,w_v) \) intersects \( R|_{\Sigma^2(x,y)} \) in two disjoint components, each spanning the entire \( z \)-direction of \( R|_{\Sigma^2(x,y)} \). These components are schematically indicated by the segments on the blue curve between \( G \) and \( I \), and between \( H \) and \( J \) in Fig.~\ref{fig:Two_parabolas_xy_plane}.
\end{enumerate}

We now proceed to verify conditions (a) and (b) separately.

Consider first the parabola given by the graph of \( w(z;x,z_v) \), whose vertex is denoted by \( A = (z_A, w_A) \) in Fig.~\ref{fig:Two_parabolas_xy_plane}.  
To establish condition (a), it suffices to show that, for every \( (x,y)\in I^u \) and every \( v\in I^s \), the following hold:
\begin{enumerate}
\item[(a.1)] The vertex \( A \) lies outside \( R|_{\Sigma^2(x,y)} \) on the upper side, i.e., \( w_A > r \), as illustrated in Fig.~\ref{fig:Two_parabolas_xy_plane}.
\item[(a.2)] The parabola intersects the lower edge of \( R|_{\Sigma^2(x,y)} \) at two points, labeled \( D \) and \( E \) in Fig.~\ref{fig:Two_parabolas_xy_plane}.
\end{enumerate}

To verify condition (a.1), we first compute  
\begin{equation}
w_A = \frac{c}{4} + \frac{a_0}{c} - \frac{x+z_v}{c}.
\end{equation}
Since \( x, z_v \in [-r,r] \), it follows that  
\begin{equation}
w_A \;\geq\; \frac{c}{4} + \frac{a_0}{c} - \frac{2r}{c}.
\end{equation}
Moreover, by Eq.~(\ref{eq:4D_Doubly_folded_horseshoe_topology_parameter_bound_1}), we have  
\begin{equation}
\frac{c}{4} + \frac{a_0}{c} - \frac{2r}{c} > r,
\end{equation}
and therefore \( w_A > r \).

To verify condition (a.2), it suffices to show that, for every \( x, z_v \in [-r,r] \),
\begin{equation}
w(z;x,z_v)\big|_{z=\pm r} \leq -r.
\end{equation}
Using the expression for \( w(z;x,z_v) \), we obtain
\begin{equation}
w(z;x,z_v)\big|_{z=\pm r} = -\frac{r^{2}}{c} \pm r + \frac{a_0}{c} - \frac{x+z_v}{c}.
\end{equation}
Since \( x, z_v \in [-r,r] \), this quantity is bounded above by
\begin{equation}
w(z;x,z_v)\big|_{z=\pm r} \leq -\frac{r^{2}}{c} + r + \frac{a_0}{c} + \frac{2r}{c}.
\end{equation}
Furthermore, by Eq.~(\ref{eq:4D_Doubly_folded_horseshoe_topology_parameter_bound_2}), we have
\begin{equation}
-\frac{r^{2}}{c} + r + \frac{a_0}{c} + \frac{2r}{c} \leq -r.
\end{equation}
Therefore,
\begin{equation}
w(z;x,z_v)\big|_{z=\pm r} \leq -r
\end{equation}
for all \( x, z_v \in [-r,r] \), and condition (a.2) is verified. Since both conditions (a.1) and (a.2) have been verified, condition (a) is therefore established.

A similar argument shows that condition (b) also holds, and the details are omitted for brevity.

An important consequence of conditions (a) and (b) is that, for every \( (x,y)\in I^u \) and every \( v\in I^s \), the graphs of \( w(z;x,z_v) \) and \( z(w;y,w_v) \) intersect at exactly four isolated points, denoted by \( h_{\pm\pm}(x,y,v) \) in Fig.~\ref{fig:Two_parabolas_xy_plane}.  
From Eq.~(\ref{eq:f III V intersect R}), these points coincide with the restriction of \( f_{\rm III}(d(v)) \cap R \) to the slice \( \Sigma^2(x,y) \):
\begin{eqnarray}
f_{\rm III}&(d(v)) \cap R \big|_{\Sigma^2(x,y)} \nonumber \\
&= h_{++}(x,y,v) \cup h_{+-}(x,y,v) \cup h_{-+}(x,y,v) \cup h_{--}(x,y,v).
\label{eq:f III R intersect R xy plane}
\end{eqnarray}

\begin{figure}[th]
        \centering
        \includegraphics[width=0.7\linewidth]{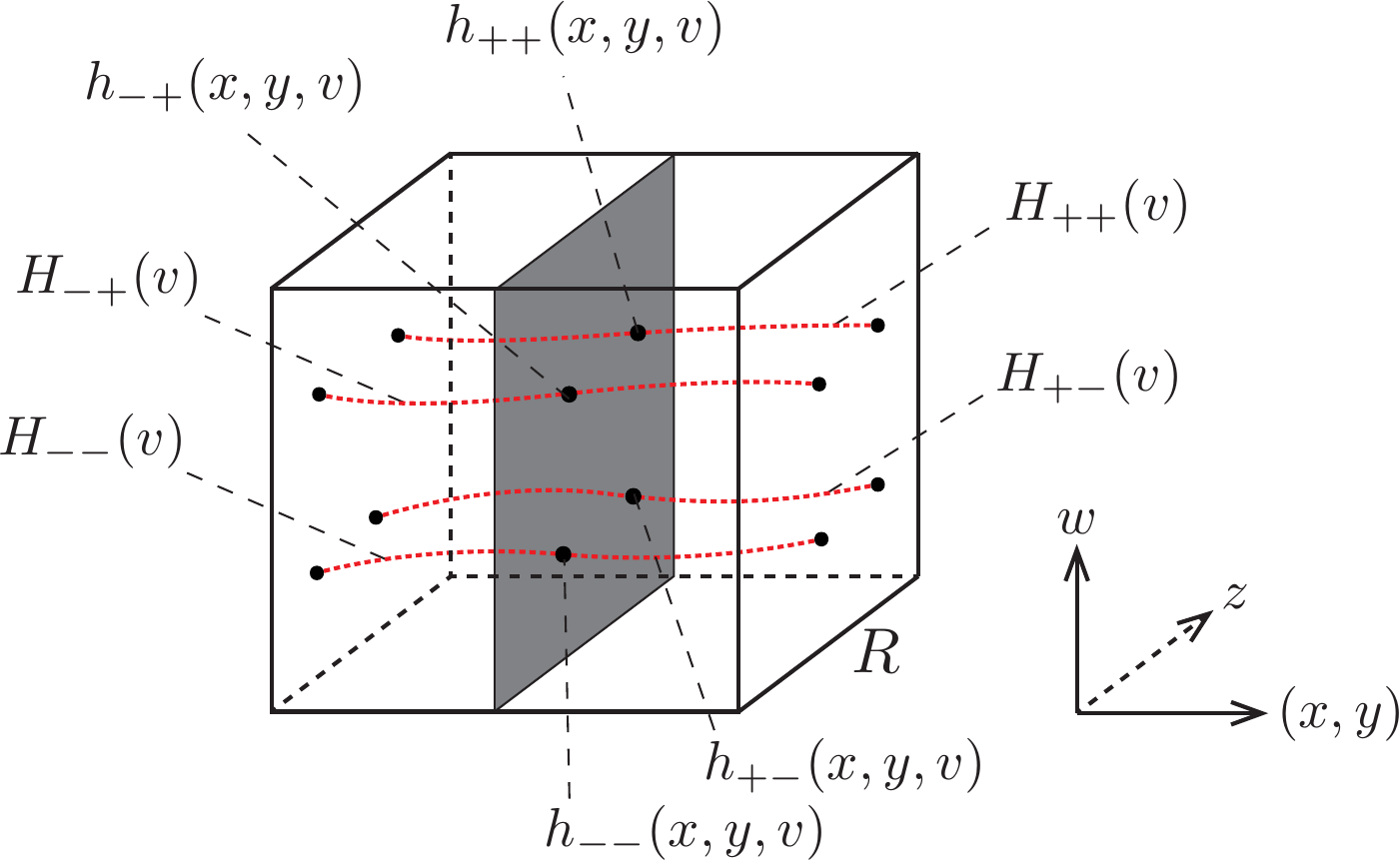} 
        \caption{(Schematic, color online) As expressed in Eq.~(\ref{eq:f III horizontal slices construction}), each \( H_{\pm\pm}(v) \) (shown as red curves) is the union of all points \( h_{\pm\pm}(x,y,v) \) taken over every slice \( \Sigma^2(x,y) \) (indicated in grey).  
Each \( H_{\pm\pm}(v) \) forms a horizontal slice of \( R \), and consequently \( f_{\rm III}(d(v)) \cap R \) is the disjoint union of four such horizontal slices.} \label{fig:f_III_four_horizontal_slices}
\end{figure}

The full intersection \( f_{\rm III}(d(v)) \cap R \) is obtained by taking the union of these restricted intersections over all \( (x,y)\in I^u \):
\begin{eqnarray}
&f_{\rm III}(d(v)) \cap R \nonumber \\
&~~\equiv \bigcup_{(x,y)\in I^u} \big[ h_{++}(x,y,v) \cup h_{+-}(x,y,v) \cup h_{-+}(x,y,v) \cup h_{--}(x,y,v) \big].
\label{eq:f III four horizontal slices}
\end{eqnarray}

Because \( f_{\rm III} \) is continuous, each point \( h_{\pm\pm}(x,y,v) \) depends continuously on \( (x,y) \) (and on \( v \) as well).  
Accordingly, by setting
\begin{equation}
H_{\pm\pm}(v) = \bigcup_{(x,y)\in I^u} h_{\pm\pm}(x,y,v), \label{eq:f III horizontal slices construction}
\end{equation}
we obtain that, for every \( v \in I^s \), each \( H_{\pm\pm}(v) \) constitutes a horizontal slice of \( R \), as depicted in Fig.~\ref{fig:f_III_four_horizontal_slices}. Note the distinction between Eq.~(\ref{eq:f III horizontal slices construction}) and Eq.~(\ref{eq:f III four horizontal slices direct prodcut definition}): owing to the nonzero coupling, the sets $H_{\pm\pm}(v)$ no longer exhibit a Cartesian product structure and must instead be constructed as the union of their intersection points with all $\Sigma^2(x,y)$ planes. 
Moreover,
\begin{equation}
f_{\rm III}(d(v)) \cap R = H_{++}(v) \cup H_{+-}(v) \cup H_{-+}(v) \cup H_{--}(v), \label{eq:f III four horizontal slices}
\end{equation}
that is, \( f_{\rm III}(d(v)) \cap R \) is the union of four mutually disjoint horizontal slices of \( R \), as illustrated in Fig.~\ref{fig:f_III_four_horizontal_slices}. 

\end{proof}

\begin{remark}
From Eq.~(\ref{eq:f III four horizontal slices}) and Proposition~\ref{Semi-conjugacy to the full shift on N symbols}, it follows immediately that there exists a nonempty compact invariant set \( \Lambda \subset \bigcap_{n \in \mathbb{Z}} f^n_{\rm III}(R) \) on which the restricted map \( f_{\rm III}|_{\Lambda} \) is semi-conjugate to the full shift on four symbols. An explicit statement of this result is given in \cite{Fujioka25}.
\end{remark}

\subsubsection{Type B: Singly Folded Horseshoe Described by Template \( \mathcal{F}^Z_{(\underline{X},Y)} \)}
\label{Type B}

In contrast, in the neighborhood of the Type~B anti-integrable limit—defined by \( a_0 = a_1 = a \to \infty \) while keeping the ratio \( c/\sqrt{a} = \gamma \) fixed with \( \gamma > 1 \)—the system produces a fundamentally different geometric configuration. Upon a suitable linear change of coordinates \( (x,y,z,w) \to (X,Y,Z,W) \), the folding operation collapses into a singly folded structure governed by the template \( \mathcal{F}^Z_{(\underline{X},Y)} \). Unlike the doubly folded horseshoe of Type~A, this configuration remains confined to a three-dimensional subspace of the full four-dimensional space and admits a simpler symbolic dynamics.

As shown in \cite{Fujioka25}, the Type‑B AI limit can be studied more conveniently after performing the following change of coordinates:
\begin{eqnarray}
	\left(\begin{array}{c}
	X\\
	Y\\
	Z\\
	W
	\end{array}\right)
	=\frac{1}{2}\left(\begin{array}{c}
	x+y\\
	x-y\\
	z+w\\
	z-w\\
	\end{array}\right).
	\label{eq:change_of_coordinate}
\end{eqnarray}
Under this transformation, Eq.~(\ref{eq:4D doubly folded Henon}) can be rewritten in the form
\begin{eqnarray}
\label{eq:map_tranformed}
	\left(\begin{array}{c}
    X_{n+1}\\
    Y_{n+1}\\
    Z_{n+1}\\
    W_{n+1}
  \end{array}\right)
  =F\left(\begin{array}{c}
    X_n\\
    Y_n\\
    Z_n\\
    W_n
  \end{array}\right)
  =\left(\begin{array}{c}
    A_0-(X_n^2+Y_n^2)-Z_n\\
    A_1-2X_nY_n-W_n+2cY_n\\
    X_n\\
    Y_n
  \end{array}\right)
\end{eqnarray}
where $$\displaystyle A_0=\frac{a_0+a_1}{2}, ~~~~~A_1=\frac{a_0-a_1}{2}. $$
The inverse map $F^{-1}$ is given by
\begin{eqnarray}
\label{eq:map_tranformed_inverse}
  \left(\begin{array}{c}
    X_{n-1}\\
    Y_{n-1}\\
    Z_{n-1}\\
    W_{n-1}
  \end{array}\right)
  =F^{-1}\left(\begin{array}{c}
    X_n\\
    Y_n\\
    Z_n\\
    W_n
  \end{array}\right)
  =\left(\begin{array}{c}
    Z_n\\
    W_n\\
    A_0-(Z_n^2+W_n^2)-X_n\\
   A_1-2Z_nW_n-Y_n+2cW_n\\
  \end{array}\right) .
\end{eqnarray}

To formalize the setting, we take the horizontal directions to be the \((X,Y)\)-subspace, which approximately align with the local expansion directions, and the vertical directions to be the \((Z,W)\)-subspace, which are roughly aligned with the local contraction directions. We define the threshold length \( R = 1 + \sqrt{1 + A_0} \), and introduce a square \( I^u = [-R,R]^2 \) in the horizontal \((X,Y)\)-subspace and a square \( I^s = [-R,R]^2 \) in the vertical \((Z,W)\)-subspace. The hypercube on which we model the action of \( F \) is then given by
\[
R_F = I^u \times I^s,
\]
and we describe its dynamics using paperfolding templates.

As in the Type‑A limit, we begin with a horizontal disk of \( R_F \), namely
\[
d(V) = I^u \times \{ V \},
\]
where \( V = (Z_V,W_V) \in I^s \), and study its image under the mapping \( F \).  
Using the identity \( F^{-1}(F(d(V))) = d(V) \), we obtain the following explicit representation:
\begin{equation}\label{eq:F d(V)}
F(d(V)) =
\left\lbrace (X,Y,Z,W) \,\middle|\, 
\begin{array}{l}
(Z,W) \in I^s,\\[4pt]
A_0 - (Z^2+W^2) - X = Z_V,\\[4pt]
A_1 - 2ZW - Y + 2cW = W_V
\end{array} \right\rbrace.
\end{equation}
From this expression, we see that \( F(d(V)) \) is a two‑dimensional surface parameterized by \((Z,W)\), with \((X,Y)\) determined uniquely from \((Z,W)\) by the second and third relations on the right-hand side of Eq.~(\ref{eq:F d(V)}).

The intersection of \( F(d(V)) \) with \( R_F \) is obtained by imposing additional bounds on \( X \) and \( Y \):
\begin{equation}\label{eq:F d(V) intersect R_F}
F(d(V))\cap R_F =
\left\lbrace (X,Y,Z,W) \,\middle|\, 
\begin{array}{l}
(X,Y,Z,W) \in R_F,\\[4pt]
A_0 - (Z^2 + W^2) - X = Z_V,\\[4pt]
A_1 - 2ZW - Y + 2cW = W_V
\end{array} \right\rbrace.
\end{equation}

Let \( I_{XZ} \) denote the square \( [-R, R]^2 \) in the \( (X,Z) \)-subspace, and \( I_{YW} \) the square \( [-R, R]^2 \) in the \( (Y,W) \)-subspace. By treating the second and third rows of the right-hand side of Eq.~(\ref{eq:F d(V) intersect R_F}) as independent constraints and taking their intersection, the set \( F(d(V)) \cap R_F \) can be expressed as the intersection of two sets, denoted by \( M(V) \) and \( N(V) \), which are defined as follows:
\begin{equation}\label{eq:M def short}
M(V) = \bigcup_{(X,Z)\in I_{XZ}} \left\lbrace (X,Z) \right\rbrace \times L(X,Z;V),
\end{equation}
\begin{equation}\label{eq:N def short}
N(V) = \bigcup_{(Y,W)\in I_{YW}} \left\lbrace (Y,W) \right\rbrace \times H(Y,W;V),
\end{equation}
where
\begin{equation}\label{eq:L}
L(X,Z;V) = \left\lbrace (Y,W) \,\middle|\, 
\begin{array}{l}
(Y,W) \in I_{YW},\\
A_1 - 2ZW - Y + 2cW = W_V
\end{array} \right\rbrace,
\end{equation}
and
\begin{equation}\label{eq:H}
H(Y,W;V) = \left\lbrace (X,Z) \,\middle|\, 
\begin{array}{l}
(X,Z) \in I_{XZ},\\
A_0 - (Z^2 + W^2) - X = Z_V
\end{array} \right\rbrace.
\end{equation}
Note that, by the definitions in Eqs.~(\ref{eq:M def short}) and~(\ref{eq:N def short}), we have the identity
\begin{eqnarray}
F(d(V)) \cap R_F \;&=\; M(V) \cap N(V)\nonumber \\ 
&=\left\lbrace (X,Y,Z,W) \in R_F \,\middle|\, 
\begin{array}{l}
(Y,W) \in L(X,Z;V),\\
(X,Z) \in H(Y,W;V)
\end{array} \right\rbrace. \label{eq:M intersect N}
\end{eqnarray}

Eq.~(\ref{eq:M intersect N}) makes it clear that the structure of \( F(d(V)) \cap R_F \) in the full four-dimensional space can be analyzed by studying the behavior of the sets \( L(X,Z;V) \) and \( H(Y,W;V) \) within their respective two-dimensional subspaces.

We now propose the following three bounds on the parameters \( A_0 \), \( A_1 \), and \( c \) of the map \( F \) (defined in Eq.~(\ref{eq:map_tranformed})). These bounds effectively specify a neighborhood of the Type-B AI limit in the parameter space \( (a_0, a_1, c) \):
\begin{eqnarray}
\label{eq:typeB_sufficient1}
&A_1 \le R < c,\\
\label{eq:typeB_sufficient2}
&R < A_0 - (W^*)^2 - R,\\
\label{eq:typeB_sufficient3}
&W^* \le R,
\end{eqnarray}
where
\[
W^* = \max\left( \left| \frac{2R - A_1}{2(c - R)} \right|,\ \left| \frac{-2R - A_1}{2(c - R)} \right| \right).
\]
We now show that, under these bounds, the image \( F(d(V)) \) exhibits a geometry that can be described by the paperfolding template \( \mathcal{F}^Z_{(X,\underline{Y})} \): a two-dimensional sheet lying in the \( (X,Y) \)-subspace, embedded in four dimensions, folded along the crease \( Y = 0 \), and stacked along the \( Z \)-direction. This structure becomes evident by analyzing the properties of the sets \( L(X,Z;V) \) and \( H(Y,W;V) \) within their respective subspaces. We begin with \( L(X,Z;V) \).

The second row of Eq.~(\ref{eq:L}) can be rewritten as
\begin{equation}\label{eq:W in terms of Y Z s}
W = \frac{Y - A_1 - s}{2(c - Z)},
\end{equation}
where \( s = -W_V \) is treated as a parameter constrained by \( |s| \leq R \). Condition~(\ref{eq:typeB_sufficient1}) ensures that the denominator remains nonzero for all $|Z|\leq R$. For clarity, define the linear function
\begin{equation}
\Gamma_W(Y;Z,s) = \frac{Y - A_1 - s}{2(c - Z)},
\end{equation}
where \( Z \) and \( s \) are regarded as parameters within bounds \( |Z|, |s| \leq R \). The set \( L(X,Z;V) \) is determined by the graph of \( \Gamma_W(Y;Z,s) \) in the \( (Y,W) \)-plane. Since \( c - Z > 0 \) for all $|Z|\leq R$ under the standing assumption~(\ref{eq:typeB_sufficient1}), the graph of \( \Gamma_W \) is a straight line with positive slope.

The values of \( \Gamma_W \) are uniformly bounded across all admissible triples \( (Y,Z,s) \in [-R,R]^3 \), with extremal values given by
\begin{eqnarray}
W^{\max} &=& \Gamma_W(Y;Z,s)\Big|_{(Y,Z,s) = (R,R,-R)} = \frac{2R - A_1}{2(c - R)},\\
W^{\min} &=& \Gamma_W(Y;Z,s)\Big|_{(Y,Z,s) = (-R,R,R)} = \frac{-2R - A_1}{2(c - R)}.
\end{eqnarray}
Hence,
\begin{equation}\label{eq:Y W uniform bounds}
W^{\min} \leq \Gamma_W(Y;Z,s) \leq W^{\max}
\end{equation}
for all \( (Y,Z,s) \in [-R,R]^3 \). 

\begin{figure}[th]
\centering
\includegraphics[width=0.5\linewidth]{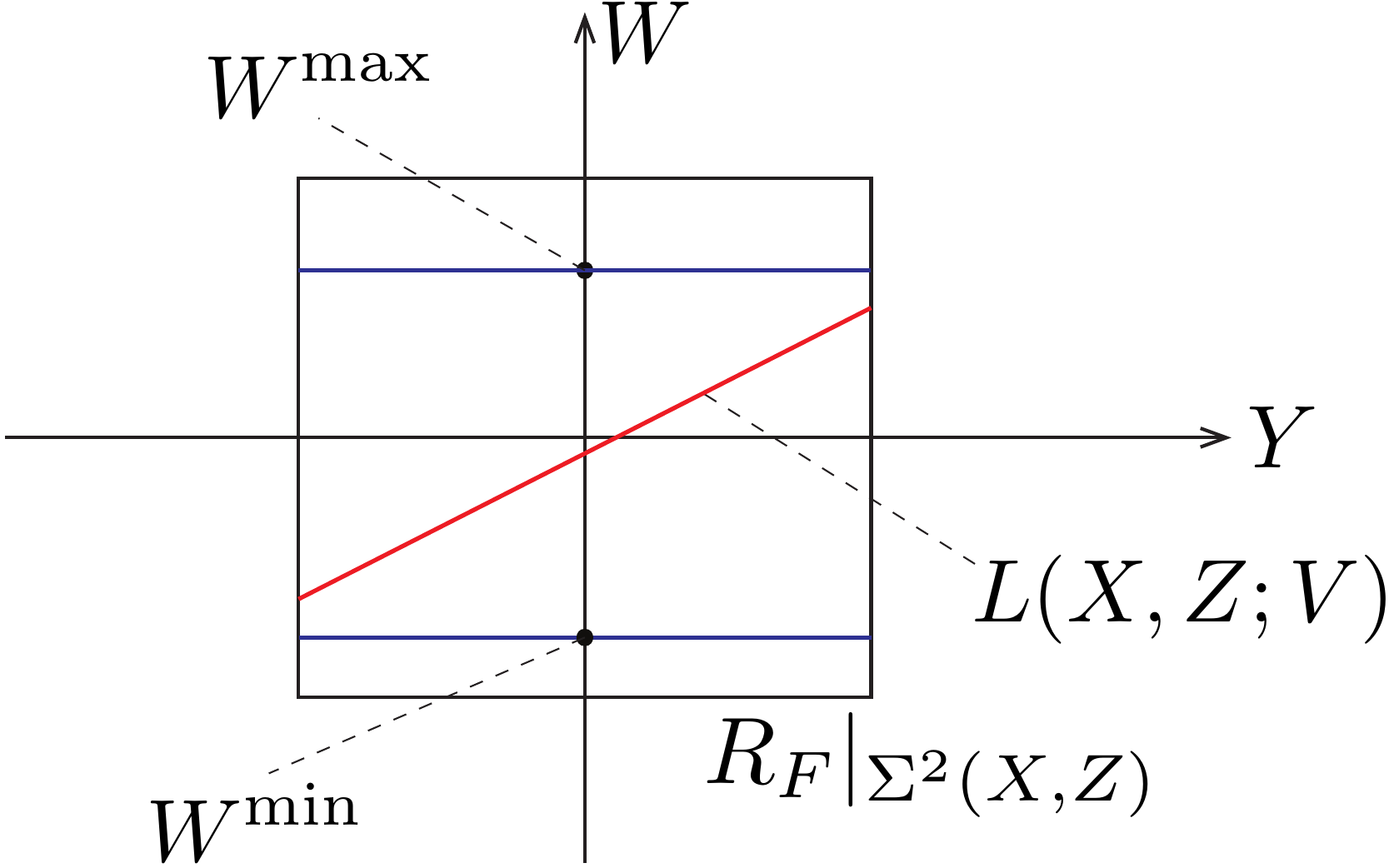}
\caption{(Schematic) For every \( (X,Z) \in [-R,R]^2 \) and every \( V \in I^s \), the set \( L(X,Z;V) \) (red) consists of a positively sloped straight-line segment bounded between two horizontal lines (blue) at heights \( W^{\max} \) and \( W^{\min} \), both of which satisfy \( |W^{\max}|, |W^{\min}| \leq R \). Consequently, \( L(X,Z;V) \) forms a horizontal slice of the square \( R_F|_{\Sigma^2(X,Z)} \).}
\label{fig:4D_singly_folded_horseshoe_Y_W}
\end{figure}

Condition~(\ref{eq:typeB_sufficient3}) implies that
\[
W^* = \max\left( |W^{\min}|, |W^{\max}| \right) \leq R,
\]
so that the bounds in Eq.~(\ref{eq:Y W uniform bounds}) can be further extended as
\begin{equation}\label{eq:Y W uniform bounds further}
-R \leq -W^* \leq W^{\min} \leq \Gamma_W(Y;Z,s) \leq W^{\max} \leq W^* \leq R.
\end{equation}
In other words, for every \( (X,Z) \in [-R,R]^2 \) and every \( s = -W_V \in [-R,R] \), the graph of \( \Gamma_W(Y;Z,s) \) intersects the vertical edges of the square \( R_F|_{\Sigma^2(X,Z)} \), forming a straight line across its entire \( Y \)-width. As shown in Fig.~\ref{fig:4D_singly_folded_horseshoe_Y_W}, the set \( L(X,Z;V) \) forms a horizontal slice of \( R_F|_{\Sigma^2(X,Z)} \) for each such \( (X,Z) \) and every \( V \in I^s \); that is, \( L(X,Z;V) \) spans the full \( Y \)-width of the square \( R_F|_{\Sigma^2(X,Z)} \). This property will be useful in later analysis.

Next, we analyze the geometric structure of the set \( H(Y,W;V) \), which, according to Eq.~(\ref{eq:F d(V) intersect R_F reexpress}), governs the behavior of \( F(d(V)) \cap R_F \) within the \( (X,Z) \)-subspace.

The second row of Eq.~(\ref{eq:H}) can be rewritten as
\begin{equation}\label{eq:X Z relation}
X = -Z^2 - W^2 + A_0 - Z_V.
\end{equation}
For clarity, define the function
\begin{equation}
\Gamma_X(Z;W,Z_V) = -Z^2 - W^2 + A_0 - Z_V,
\end{equation}
with \( W \) and \( Z_V \) considered as parameters satisfying \( |W|, |Z_V| \leq R \). The set \( H(Y,W;V) \) is thus described by the graph of \( \Gamma_X(Z;W,Z_V) \) in the \( (X,Z) \)-plane.

We now analyze the family of parabolas given by the graphs of \( \Gamma_X(Z;W,Z_V) \), parameterized by \( (W,Z_V) \in [-R,R]^2 \), and demonstrate that, under the constraints specified in Eqs.~(\ref{eq:typeB_sufficient1})--(\ref{eq:typeB_sufficient3}), each member of this family intersects the square \( R_F|_{\Sigma^2(Y,W)} \) in two disjoint horizontal slices.

\begin{figure}[th]
\centering
\includegraphics[width=0.7\linewidth]{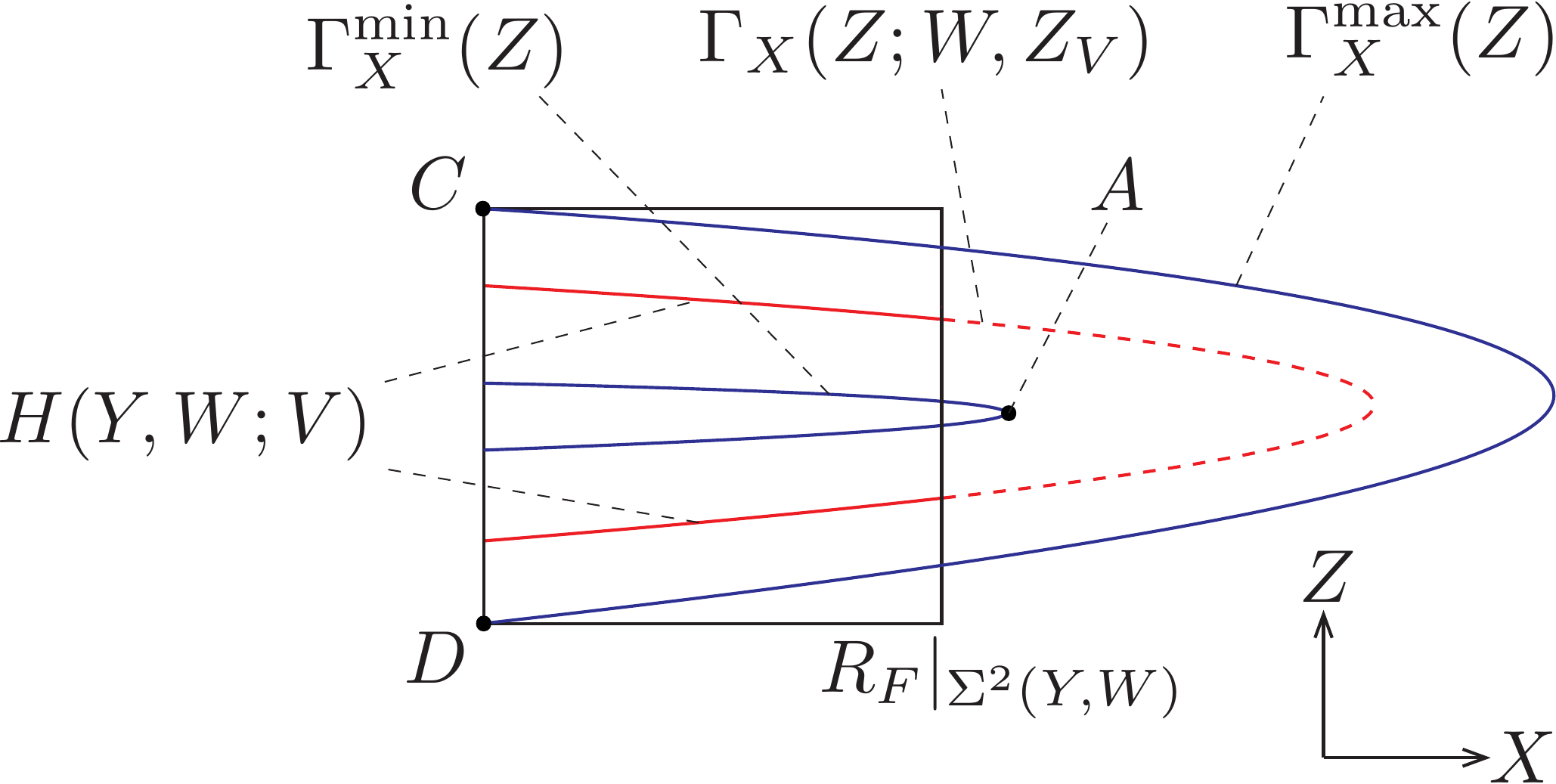}
\caption{(Schematic) The family of parabolas \( \Gamma_X(Z;W,Z_V) \) (solid and dashed red) is bounded on the left by \( \Gamma_X^{\min}(Z) \) (blue) and on the right by \( \Gamma_X^{\max}(Z) \) (blue). The constraints in Eqs.~(\ref{eq:typeB_sufficient1})--(\ref{eq:typeB_sufficient3}) ensure that the vertex \( A \) of \( \Gamma_X^{\min}(Z) \) lies outside the square \( R_F|_{\Sigma^2(Y,W)} \) on its right-hand side, while \( \Gamma_X^{\max}(Z) \) intersects the square at its two corners, labeled \( C \) and \( D \). As a result, each \( \Gamma_X(Z;W,Z_V) \) intersects \( R_F|_{\Sigma^2(Y,W)} \) in two disjoint horizontal slices, whose union forms the set \( H(Y,W;V) \) (solid red).}
\label{fig:4D_singly_folded_horseshoe_X_Z}
\end{figure}

The leftmost and rightmost parabolas in the family, denoted \( \Gamma_X^{\min} \) and \( \Gamma_X^{\max} \), respectively, are given by:
\begin{eqnarray}
\Gamma_X^{\min}(Z) &=& \Gamma_X(Z;W,Z_V)\big|_{(W,Z_V) = (W^*, R)} = -Z^2 - (W^*)^2 + A_0 - R, \label{eq:Gamma X min} \\
\Gamma_X^{\max}(Z) &=& \Gamma_X(Z;W,Z_V)\big|_{(W,Z_V) = (0, -R)} = -Z^2 + A_0 + R. \label{eq:Gamma X max}
\end{eqnarray}

We now establish that, under the constraints specified by Eqs.~(\ref{eq:typeB_sufficient1})--(\ref{eq:typeB_sufficient3}), the following two properties are satisfied:
\begin{enumerate}
\item[(a)] The vertex of \( \Gamma_X^{\min}(Z) \), labeled \( A \) in Fig.~\ref{fig:4D_singly_folded_horseshoe_X_Z}, lies outside the square \( R_F|_{\Sigma^2(Y,W)} \), on its right-hand side;
\item[(b)] The parabola \( \Gamma_X^{\max}(Z) \) intersects the left edge of \( R_F|_{\Sigma^2(Y,W)} \) at two distinct points, labeled \( C \) and \( D \) in Fig.~\ref{fig:4D_singly_folded_horseshoe_X_Z}.
\end{enumerate}

\textbf{Property (a):} The vertex \( A \) of \( \Gamma_X^{\min}(Z) \) is given by
\begin{equation}\label{eq:Vertex Gamma X min}
(X_A, Z_A) = \left( A_0 - (W^*)^2 - R,\ 0 \right).
\end{equation}
By Eq.~(\ref{eq:typeB_sufficient2}), we immediately find that \( X_A > R \), meaning that the vertex \( A \) lies outside the square \( R_F|_{\Sigma^2(Y,W)} \), on its right-hand side.

\textbf{Property (b):} This is equivalent to requiring that \( \Gamma_X^{\max}(Z = \pm R) \leq -R \). Indeed,
\begin{equation}\label{eq:Gamma X max S X - intersections}
\Gamma_X^{\max}(Z = \pm R) = -R^2 + A_0 + R = -R \leq -R,
\end{equation}
where the second equality follows from the identity \( R = 1 + \sqrt{1 + A_0} \). This confirms property (b). In fact, Eq.~(\ref{eq:Gamma X max S X - intersections}) shows that the parabola \( \Gamma_X^{\max}(Z) \) intersects \( R_F|_{\Sigma^2(Y,W)} \) precisely at its two corner points, labeled \( C \) and \( D \) in Fig.~\ref{fig:4D_singly_folded_horseshoe_X_Z}.

Therefore, properties (a) and (b) hold under the assumptions given in Eqs.~(\ref{eq:typeB_sufficient1})--(\ref{eq:typeB_sufficient3}). Since each member of the family \( \Gamma_X(Z;W,Z_V) \) lies between the bounding curves \( \Gamma_X^{\min}(Z) \) and \( \Gamma_X^{\max}(Z) \),
\begin{equation}
\Gamma_X^{\min}(Z) \leq \Gamma_X(Z;W,Z_V) \leq \Gamma_X^{\max}(Z),
\end{equation}
it follows from the geometry illustrated in Fig.~\ref{fig:4D_singly_folded_horseshoe_X_Z} that every \( \Gamma_X(Z;W,Z_V) \) intersects the square \( R_F|_{\Sigma^2(Y,W)} \) in two disjoint horizontal slices. The union of these slices forms the set \( H(Y,W;V) \).

We now synthesize the above analysis. Equation~(\ref{eq:F d(V) intersect R_F reexpress}) characterizes \( F(d(V)) \cap R_F \) as the set of all points \( (X,Y,Z,W) \in R_F \) satisfying the pair of constraints
\[
(Y,W) \in L(X,Z;V) \quad {\rm and} \quad (X,Z) \in H(Y,W;V),
\]
simultaneously. Geometrically, this means that each point in \( F(d(V)) \cap R_F \) lies at the intersection of a horizontal slice \( L(X,Z;V) \subset \Sigma^2(X,Z) \) and two disjoint horizontal slices \( H(Y,W;V) \subset \Sigma^2(Y,W) \) arising from a folded one-dimensional sheet, with each structure encoded in its respective two-dimensional subspace.

In particular, for each fixed \( (X,Z) \in I_{XZ} \), the set \( L(X,Z;V) \subset \Sigma^2(X,Z) \) is a straight segment spanning the entire \( Y \)-width of the square \( R_F|_{\Sigma^2(X,Z)} \), with no folding. Thus, the projection of \( F(d(V)) \) onto the \( (Y,W) \)-subspace yields a one-dimensional sheet aligned along the \( Y \)-axis, intersecting each \( R_F|_{\Sigma^2(X,Z)} \) in a single horizontal slice, as shown in Fig.~\ref{fig:4D_singly_folded_horseshoe_Y_W}. In contrast, for each fixed \( (Y,W) \in I_{YW} \), the set \( H(Y,W;V) \subset \Sigma^2(Y,W) \) consists of two disjoint horizontal slices of \( R_F|_{\Sigma^2(Y,W)} \), reflecting a fold along the \( X \)-axis, with stacking directed along the \( Z \)-axis, as shown schematically in Fig.~\ref{fig:4D_singly_folded_horseshoe_X_Z}.

Together, these observations imply that the four-dimensional image \( F(d(V)) \) is modeled pointwise by the Cartesian product of a straight segment in the \( Y \)-direction and a folded curve in the \( X \)-direction. This configuration is precisely described by the paperfolding template \( \mathcal{F}^Z_{(\underline{X},Y)} \), in which a two-dimensional sheet lying in the \( (X,Y) \)-subspace is embedded in four dimensions, folded along the crease \( X = 0 \), and stacked in the \( Z \)-direction. The coordinate \( Y \) spans the unfolded direction of the sheet, while the fourth coordinate \( W \) remains unaffected by the folding.

Therefore, under the parameter bounds~(\ref{eq:typeB_sufficient1})--(\ref{eq:typeB_sufficient3}), the image \( F(d(V)) \) is geometrically described by the template \( \mathcal{F}^Z_{(\underline{X},Y)} \), completing the characterization of the Type‑B singly folded horseshoe.

We can also infer properties of the symbolic dynamics from the paperfolding structure. The template \( \mathcal{F}^Z_{(\underline{X},Y)} \) indicates that the image \( F(d(V)) \) consists of two folded halves, each spanning the entire \( (X,Y) \)-directions of \( R_F \), and approximately stacked along the \( Z \)-direction. This implies that \( F(d(V)) \) intersects \( R_F \) in two disjoint horizontal slices of \( R_F \) that are approximately aligned along the \( (X,Y) \)-plane and approximately stacked in the \( Z \)-direction. We remind the reader that the horizontal direction in the full four-dimensional space \( (X,Y,Z,W) \) is defined to be the \( (X,Y) \)-subspace.

Finally, by invoking Proposition~\ref{Semi-conjugacy to the full shift on N symbols}, we conclude that there exists a nonempty compact invariant set \( \Lambda_F \subset \bigcap_{n\in \mathbb{Z}} F^n(R_F) \), on which the restricted map \( F|_{\Lambda_F} \) is semi-conjugate to the full shift on two symbols \cite{Fujioka25}.

\subsection{Map \( f_{\mathrm{IV}} \): triply folded three-dimensional sheet in four dimensions}
\label{f IV}

Finally, we introduce a four-dimensional Hénon-type map \( f_{\mathrm{IV}} \) that generates a highly nontrivial geometric structure: a three-dimensional sheet undergoing three successive foldings along orthogonal planes, all stacked along the \( w \)-direction. The resulting horseshoe configuration is described by the composition \( \mathcal{F}^w_{(x,y,\underline{z})} \circ \mathcal{F}^w_{(x,\underline{y},z)} \circ \mathcal{F}^w_{(\underline{x},y,z)} \), as introduced in Section~\ref{Folding a three-dimensional sheet in four dimensions}. This triply folded structure is intrinsically four-dimensional and exemplifies the kind of complex folding behavior made possible by additional phase-space dimensions.

$f_{\rm IV}$ is a H\'{e}non-type map in four dimensions defined by
\begin{equation}\label{eq:4D triply folded Henon}
\left( \begin{array}{ccc}
x_{n+1}\\
y_{n+1} \\
z_{n+1} \\
w_{n+1} \end{array} \right) = 
f_{\rm IV} \left( \begin{array}{ccc}
x_{n}\\
y_{n} \\
z_{n} \\
w_{n} \end{array} \right) =
\left( \begin{array}{ccc}
a_0 - x^2_n - w_n\\
a_1 - y^2_n - x_n\\
a_2 - z^2_n -y_n\\
z_n \end{array} \right)
\end{equation}
with parameters $a_0,a_1,a_2 > 5+2\sqrt{5}$. $f_{\rm IV}$ can be written as the compound mapping of three successive maps, denoted by $f_1$, $f_2$, and $f_3$, respectively:
\begin{equation}\label{eq:4D triply folded Henon compound}
f_{\rm IV} =f_3 \circ f_2 \circ f_1\ ,
\end{equation}
where $f_1$ is expressed as
\begin{equation}\label{eq:f_1}
\left( \begin{array}{ccc}
x^{\prime}\\
y^{\prime} \\
z^{\prime} \\
w^{\prime} \end{array} \right) = 
f_1 \left( \begin{array}{ccc}
x\\
y \\
z \\
w \end{array} \right) =
\left( \begin{array}{ccc}
a_0 - x^2 - w\\
y \\
z \\
x \end{array} \right)\ ,
\end{equation}
$f_2$ is expressed as
\begin{equation}\label{eq:f_2}
\left( \begin{array}{ccc}
x^{\prime}\\
y^{\prime} \\
z^{\prime} \\
w^{\prime} \end{array} \right) = 
f_2 \left( \begin{array}{ccc}
x\\
y \\
z \\
w \end{array} \right) =
\left( \begin{array}{ccc}
x\\
a_1 - y^2 - w\\
z \\
y \end{array} \right)\ ,
\end{equation}
and $f_3$ is expressed as
\begin{equation}\label{eq:f_3}
\left( \begin{array}{ccc}
x^{\prime}\\
y^{\prime} \\
z^{\prime} \\
w^{\prime} \end{array} \right) = 
f_3 \left( \begin{array}{ccc}
x\\
y \\
z \\
w \end{array} \right) =
\left( \begin{array}{ccc}
x\\
y\\
a_2 - z^2 - w\\
z \end{array} \right)\ .
\end{equation}

The inverse map, $f^{-1}_{\rm IV}$, is expressed as
\begin{eqnarray}\label{eq:4D triply folded Henon inverse}
\left( \begin{array}{ccc}
x_{n}\\
y_{n} \\
z_{n} \\
w_{n} \end{array} \right) &= 
f^{-1}_{\rm IV} \left( \begin{array}{ccc}
x_{n+1}\\
y_{n+1} \\
z_{n+1} \\
w_{n+1} \end{array} \right) \nonumber \\
&=
\left( \begin{array}{ccc}
a_1 - y_{n+1} - (a_2 - z_{n+1} - w^2_{n+1})^2\\
a_2 - z_{n+1} - w^2_{n+1}\\
w_{n+1}\\
a_0 - x_{n+1} - \left[ a_1 - y_{n+1} - (a_2 - z_{n+1} - w^2_{n+1})^2 \right]^2 \end{array} \right).
\end{eqnarray}

\begin{figure}[thbp]
\includegraphics[width=1\linewidth]{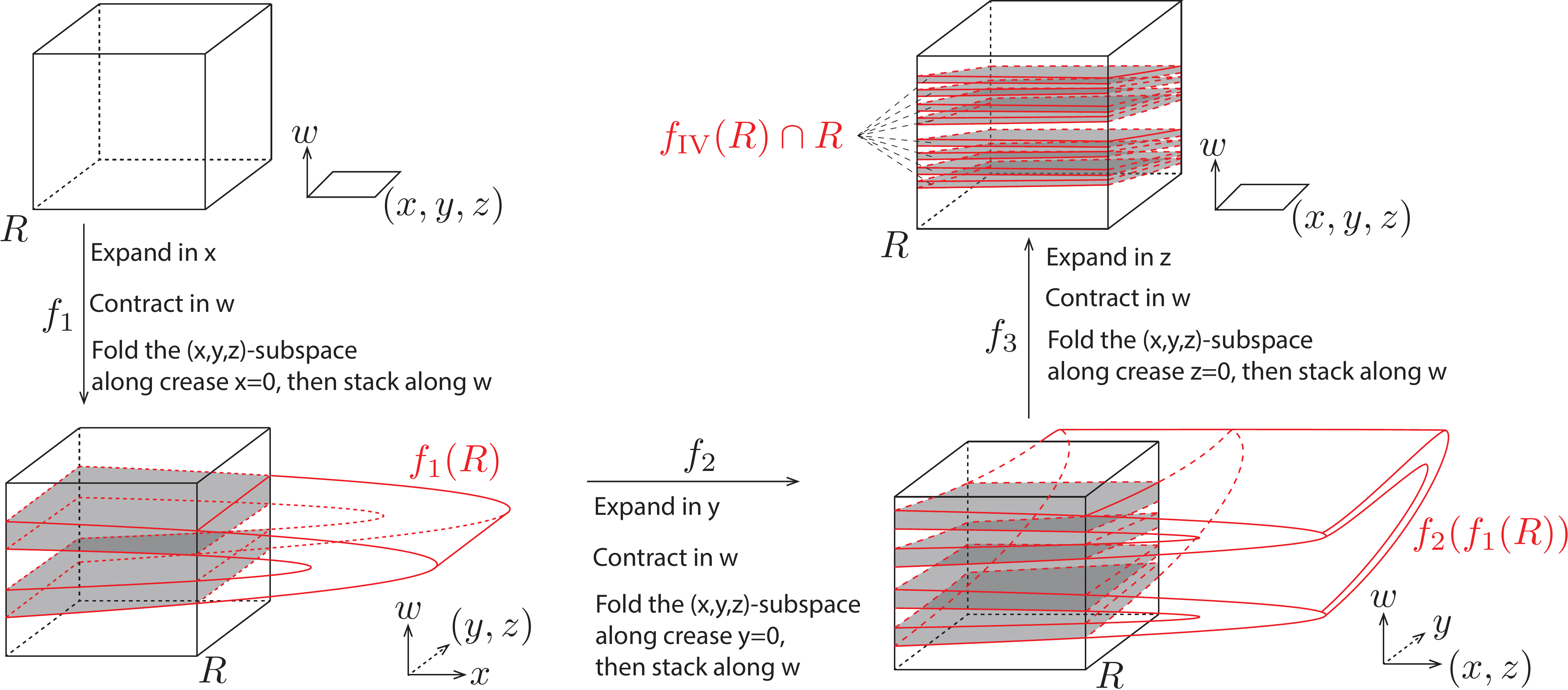} 
\caption{(Schematic, color online) \textbf{Upper left:} A specifically chosen hypercube \( R \) of initial conditions. \textbf{Lower left:} The map \( f_1 \) contracts \( R \) along the \( w \)-direction to ``flatten'' it into a quasi-three-dimensional sheet in the \((x, y, z)\)-subspace, expands the sheet along \( x \), and folds it along the crease \( x = 0 \) (the \((y, z)\)-plane), with the folded halves stacked along the \( w \)-direction. The intersection \( f_1(R) \cap R \) consists of two horizontal slabs (shaded). \textbf{Lower right:} The map \( f_2 \) contracts the resulting structure along \( w \), expands it along \( y \), and folds it along the crease \( y = 0 \) (the \((x, z)\)-plane), again stacking the folded halves along \( w \). The intersection \( f_2(f_1(R)) \cap R \) consists of four horizontal slabs (shaded). \textbf{Upper right:} The map \( f_3 \) contracts the structure along \( w \), expands it along \( z \), and folds it along the crease \( z = 0 \) (the \((x, y)\)-plane), with the folded halves stacked along \( w \). The final intersection \( f_{\rm IV}(R) \cap R \) consists of eight horizontal slabs (shaded).} \label{fig:f_IV_schematic}
\end{figure}

Evidently, the restrictions of $f_1$, $f_2$, and $f_3$ to the $(x,w)$-, $(y,w)$-, and $(z,w)$-subspaces, respectively, are just two-dimensional H\'{e}non maps well-studied in the literature. Fig.~\ref{fig:f_IV_schematic} qualitatively illustrates the action of the map $f_{\rm IV}=f_3\circ f_2\circ f_1$ on a four-dimensional hypercube $R$ (to be specified later). The map $f_1$ first expands $R$ along the $x$-direction and contracts it along the $w$-direction, effectively ``flattening" it into a quasi-three-dimensional sheet lying in the $(x,y,z)$-hyperplane and embedded in four-dimensional space. The $w$-axis thus serves as the normal or ``stacking" direction. Next, $f_1$ folds the flattened structure along the crease $x=0$ (i.e., the $(y,z)$-hyperplane), stacking the resulting halves along the $w$-axis while leaving the $(y,z)$-components unchanged. From the discussion in Sec.~\ref{Folding a three-dimensional sheet in four dimensions}, it is evident that the deformation of $R$ under $f_1$ can be described qualitatively by the paperfolding template $\mathcal{F}^w_{(\underline{x},y,z)}$. As illustrated in the lower-left portion of Fig.~\ref{fig:f_IV_schematic}, this operation creates two horizontal slabs of $R$ that span the $(x,y,z)$-subspace and are stacked vertically along the $w$-direction.

The map $f_2$ then acts on the folded structure: it expands it along $y$, contracts it along $w$, and folds it along the crease $y=0$ (the $(x,z)$-hyperplane), again stacking the folded halves along $w$ and leaving the $(x,z)$-components unchanged. From the discussion in Sec.~\ref{Folding a three-dimensional sheet in four dimensions}, it is evident that the deformation of $R$ under $f_2$ can be described qualitatively by the paperfolding template $\mathcal{F}^w_{(x,\underline{y},z)}$. As illustrated in the lower-right portion of Fig.~\ref{fig:f_IV_schematic}, $f_2$ doubles the number of horizontal slabs already present in $f_1(R)\cap R$, resulting in four horizontal slabs of $R$ that span the $(x,y,z)$-subspace and are stacked vertically along the $w$-direction.

Finally, $f_3$ expands the resulting structure along $z$, contracts it along $w$, and folds it along the crease $z=0$ (the $(x,y)$-hyperplane), stacking the halves along $w$ and preserving the $(x,y)$-components. The deformation of \( R \) under \( f_3 \) can be described qualitatively by the paperfolding template \( \mathcal{F}^w_{(x,y,\underline{z})} \), while the overall deformation of \( R \) under the composite map \( f_{\mathrm{IV}} = f_3 \circ f_2 \circ f_1 \) corresponds to the composition of paperfolding operations  
\begin{equation}
\mathcal{F}^w_{(x,y,\underline{z})} \circ \mathcal{F}^w_{(x,\underline{y},z)} \circ \mathcal{F}^w_{(\underline{x},y,z)}.
\end{equation}
As illustrated in the upper-right portion of Fig.~\ref{fig:f_IV_schematic}, the action of \( f_3 \) doubles the number of horizontal slabs already present in \( f_2(f_1(R)) \cap R \), producing eight disjoint horizontal slabs of \( R \) that span the \((x,y,z)\)-subspace and are stacked vertically along the \( w \)-direction. Here we emphasize that these eight disjoint horizontal slabs can be regarded as the result of the paperfolding operation \(\mathcal{F}^w_{(x,y,\underline{z})} \circ \mathcal{F}^w_{(x,\underline{y},z)} \circ \mathcal{F}^w_{(\underline{x},y,z)}\). Therefore, verifying their existence---at least in the strongly chaotic regime where the parameters satisfy
\[
a_0 = a_1 = a_2 = a > 5 + 2\sqrt{5},
\]
will provide strong evidence for the validity of the paperfolding template in this regime. This is precisely what we establish in the next theorem.

To formalize the setting, we take the horizontal directions to be the \((x,y,z)\)-subspace, which is approximately aligned with the local expansion directions, and the vertical direction to be the \(w\)-axis, which is roughly aligned with the local contraction direction.  
Let  
\[
r = 1 + \sqrt{1+a}
\]  
and define \(I^u = [-r,r]^3\) as the cube of side length \(2r\) in the \((x,y,z)\)-subspace, and \(I^s = [-r,r]\) as the interval of length \(2r\) in the \(w\)-subspace.  
The four-dimensional hypercube \(R = I^u \times I^s\) specifies the local region whose deformation under \( f_{\mathrm{IV}} \) is of interest.  

For \(v \in I^s\), let  
\[
d(v) = I^u \times \{ v \} \subset R
\]  
be a horizontal disk of \(R\).  
Geometrically, \(d(v)\) is a three-dimensional ``sheet'' embedded in four dimensions, and we examine how this sheet is deformed under \( f_{\mathrm{IV}} \).

\begin{theorem}\label{4D triply folded horseshoe topology}
In the parameter regime $a_0 = a_1 = a_2 = a > 5 + 2\sqrt{5}$, with $r$ and $R$ defined as above, the following holds: for every horizontal disk \( d(v) \subset R \) with \( v \in I^s \), the intersection \( f_{\mathrm{IV}}(d(v)) \cap R \) consists of eight disjoint horizontal slices of \( R \).
\end{theorem}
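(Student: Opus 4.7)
The plan is to extend the Devaney--Nitecki style argument from the three-dimensional doubly-folded case (Theorem~\ref{3D Doubly folded horseshoe topology}) to accommodate one additional fold. I would compute $f_{\rm IV}(d(v))\cap R$ via the preimage identity
\[
f_{\rm IV}(d(v))\cap R = f_{\rm IV}\big(d(v)\cap f_{\rm IV}^{-1}(R)\big),
\]
which reduces everything to analyzing constraints on the preimage coordinates $(x,y,z)\in[-r,r]^3$. Writing out $f_{\rm IV}(x,y,z,v)=(a-x^2-v,\,a-y^2-x,\,a-z^2-y,\,z)$, the condition that this image lies in $R$ becomes the triangular system
\[
|a-x^2-v|\le r,\qquad |a-y^2-x|\le r,\qquad |a-z^2-y|\le r.
\]

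By the classical Devaney--Nitecki argument---essentially the one-dimensional folding case of Sec.~\ref{Folding a one-dimensional sheet in two dimensions}---each inequality confines its variable to two disjoint closed sub-intervals of $[-r,r]$: the first restricts $x$, the second restricts $y$ given $x$, and the third restricts $z$ given $y$. This produces $2^3=8$ connected branches of admissible $(x,y,z)$-triples, each parameterized continuously by a sign triple $(\sigma_x,\sigma_y,\sigma_z)\in\{+,-\}^3$ that corresponds exactly to the three foldings of the template $\mathcal{F}^w_{(x,y,\underline{z})}\circ\mathcal{F}^w_{(x,\underline{y},z)}\circ\mathcal{F}^w_{(\underline{x},y,z)}$.

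Next, I would verify that on each branch the forward map $(x,y,z)\mapsto(X,Y,Z,W)=(a-x^2-v,\,a-y^2-x,\,a-z^2-y,\,z)$ sends the branch bijectively onto a horizontal slice of $R$. As $(x,y,z)$ varies over a branch, the three output coordinates $X,Y,Z$ sweep out all of $[-r,r]^3=I^u$, while $W=z$ becomes a continuous function $\psi_{\sigma_x\sigma_y\sigma_z}:I^u\to I^s$ obtained by sequentially solving the three quadratic relations with the chosen sign convention, i.e., by nested square roots. Pointwise distinctness of the eight functions $\psi_{\sigma_x\sigma_y\sigma_z}$ on $I^u$---different sign triples yield $W$-values differing in sign at the outermost square root, or else with distinct magnitudes propagated from an inner level---then yields global disjointness of their graphs, establishing $f_{\rm IV}(d(v))\cap R=\bigsqcup_{\sigma\in\{+,-\}^3}h_\sigma$.

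The main technical obstacle is the uniform non-degeneracy of all three nested sub-intervals, equivalently the strict positivity of the radicands $a-v-X$, $a-x-Y$ and $a-y-Z$ uniformly over their parameters in $[-r,r]$ (equivalently, the condition that the vertex of each parabola lies strictly outside the relevant square cross-section, as in properties (a) and (b) in the proof of Theorem~\ref{3D Doubly folded horseshoe topology}). All three reduce to the single uniform bound $a-2r>0$, which itself simplifies via the identity $r^2=a+2r$ to the numerical condition $a>8$; since $5+2\sqrt{5}>8$, the hypothesis indeed suffices. Once this uniform bound is in hand, the sign bookkeeping, the verification that each branch's image spans $I^u$, and the continuity of $\psi_{\sigma_x\sigma_y\sigma_z}$ all follow by straightforward iteration of the one-dimensional folding argument across the three folds.
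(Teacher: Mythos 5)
Your proposal is correct, but it takes a genuinely different route from the paper's. You work \emph{forwards}: using $f_{\rm IV}(d(v))\cap R=f_{\rm IV}\bigl(d(v)\cap f_{\rm IV}^{-1}(R)\bigr)$ you impose the triangular system $|a-x^2-v|\le r$, $|a-y^2-x|\le r$, $|a-z^2-y|\le r$ on the preimage coordinates, split the admissible set into $2^3$ branches indexed by sign triples, and exhibit each branch's image as the graph of the explicit nested-radical function $\psi_\sigma(X,Y,Z)=\sigma_z\sqrt{a-Z-\sigma_y\sqrt{a-Y-\sigma_x\sqrt{a-v-X}}}$ defined on all of $I^u$, with disjointness following from pointwise distinctness of these eight functions; everything rests on the single estimate $a-2r>0$ together with the identity $r^2=a+2r$. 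The paper instead works \emph{backwards} from $f_{\rm IV}^{-1}(f_{\rm IV}(d(v)))=d(v)$, writes the constraints in the image coordinates, restricts to two-dimensional $(z,w)$-sections $\Sigma^2(x,y)$, and performs three successive refinements by families of bounding parabolas ($\Gamma_z$, $\Lambda^{\pm}_z$, $\Theta^{\pm\pm}_z$) whose ordering and uniform envelopes carve out eight disjoint horizontal slabs in each section. Your version has the advantage that the ``horizontal slice'' property of Definition~\ref{Horizontal slice definition} is immediate---each component is literally a graph over $I^u$---and that the counting and disjointness reduce to transparent sign bookkeeping; the paper's version stays closer to the Devaney--Nitecki section-by-section picture already used for Theorem~\ref{3D Doubly folded horseshoe topology} and delivers, beyond bare disjointness, explicit separating slabs $h^{\pm\pm}_{1,2}$ that quantify how the eight slices are stacked along $z$. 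One detail you should state explicitly when writing this up: besides the positivity of the radicands you also need $\sqrt{a-v+r}\le\sqrt{a+2r}=r$ (and the analogous bounds at the two inner levels), so that the two admissible sub-intervals for each of $x$, $y$, $z$ lie entirely inside $[-r,r]$ and are not truncated by the box; this is precisely what guarantees that $X$, $Y$, $Z$ each sweep the full interval $[-r,r]$ and hence that every branch is a graph over the whole of $I^u$ rather than over a proper subset.
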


\begin{proof}
Let $v=(x_v,y_v,z_v,w_v)$ be an arbitrary point in $I^s$. Using the identify $f^{-1}_{\rm IV}(f_{\rm IV}(d(v)))=d(v)$, we obtain an expression for $f_{\rm IV}(d(v))$
\begin{equation}\label{eq:f IV V}
f_{\rm IV}(d(v)) = 
\left\lbrace (x,y,z,w) \middle| \begin{array}{ccc}
|w| \leq r \\
| a - z - w^2 | \leq r \\
| a - y - (a - z - w^2)^2 | \leq r \\
a - x - \left[ a - y - (a - z - w^2)^2 \right]^2  = w_v \end{array} \right\rbrace. 
\end{equation}
The expression for $f_{\rm IV}(d(v))\cap R$ is obtained by imposing the additional bounds on $x$, $y$, and $z$:
\begin{equation}\label{eq:V intersect f IV V}
f_{\rm IV}(d(v))\cap R = 
\left\lbrace (x,y,z,w) \middle| \begin{array}{ccc}
|x|,|y|,|z|,|w| \leq r \\
| a - z - w^2 | \leq r \\
| a - y - (a - z - w^2)^2 | \leq r \\
a - x - \left[ a - y - (a - z - w^2)^2 \right]^2  = w_v \end{array} \right\rbrace. 
\end{equation}

Let \(\Sigma^3(x)\) denote the \((y,z,w)\)-hyperplane at fixed \(x\), where the superscript “3” indicates the dimensionality of the hyperplane. Explicitly,  
\begin{equation}\label{eq: y z w slice for f_IV}
\Sigma^3(x) = \left\lbrace (x',y',z',w') \,\middle|\, y',z',w' \in \mathbb{R},\ x' = x \right\rbrace.
\end{equation}
Similarly, let \(\Sigma^2(x,y)\) denote the \((z,w)\)-plane at fixed \( (x,y) \),  
\begin{equation}\label{eq: z w slice for f_IV}
\Sigma^2(x,y) = \left\lbrace (x',y',z',w') \,\middle|\, z',w' \in \mathbb{R},\ (x',y') = (x,y) \right\rbrace.
\end{equation}

To prove Theorem~\ref{4D triply folded horseshoe topology}, it suffices to consider the intersection of $f_{\mathrm{IV}}(d(v)) \cap R$ with the section hyperplane $\Sigma^3(x)$ for $x \in [-r,r]$, and to show that, for each such $x$, the restriction $f_{\mathrm{IV}}(d(v)) \cap R|_{\Sigma^3(x)}$ consists of eight disjoint horizontal slices of $R|_{\Sigma^3(x)}$, where the horizontal directions within $\Sigma^3(x)$ are the $(y,z)$-directions. Furthermore, this reduces to verifying that, for every $(x,y) \in [-r,r]^2$, the restriction $f_{\mathrm{IV}}(d(v)) \cap R|_{\Sigma^2(x,y)}$ consists of eight disjoint horizontal slices of $R|_{\Sigma^2(x,y)}$, where the horizontal direction within $\Sigma^2(x,y)$ is the $z$-direction. We now proceed to establish this property.

The second row of Eq.~(\ref{eq:f IV V}) can be rewritten in the parameterized form
\begin{equation}\label{eq:z w parabola 1 f_V}
z = -w^2 + a + s, ~~~~{\rm where}~ |s| \leq r \ .
\end{equation}
Let $\Gamma_z(w,s)$ be the family of parabolas in $\Sigma^2(x,y)$
\begin{equation}\label{eq:Gamma z in terms of w s f_V}
\Gamma_z(w,s) = -w^2 + a + s
\end{equation}
where $s$ is viewed as a parameter within range $|s| \leq r$. It is obvious that $\Gamma_z(w,s)$ is bounded by
\begin{equation}\label{eq:Gamma z in terms of w s bounds}
\Gamma^{\min}_z(w) \leq \Gamma_z(w,s) \leq \Gamma^{\max}_z(w)
\end{equation}
with lower and upper bounds
\begin{eqnarray}
& \Gamma^{\min}_z(w) = \Gamma_z(w,s)|_{s=-r} = -w^2 + a -r  \label{eq:Gamma z w min} \\
& \Gamma^{\max}_z(w) =  \Gamma_z(w,s)|_{s=r} = -w^2 + a + r  \label{eq:Gamma z w max}\ .
\end{eqnarray}

In each $\Sigma^2(x,y)$ slice, the restriction $f_{\mathrm{IV}}(d(v)) \cap R|_{\Sigma^2(x,y)}$ is contained within the gap between the two parabolas $\Gamma^{\min}_z(w)$ and $\Gamma^{\max}_z(w)$.

\begin{figure}[th]
        \centering
        \includegraphics[width=0.6\linewidth]{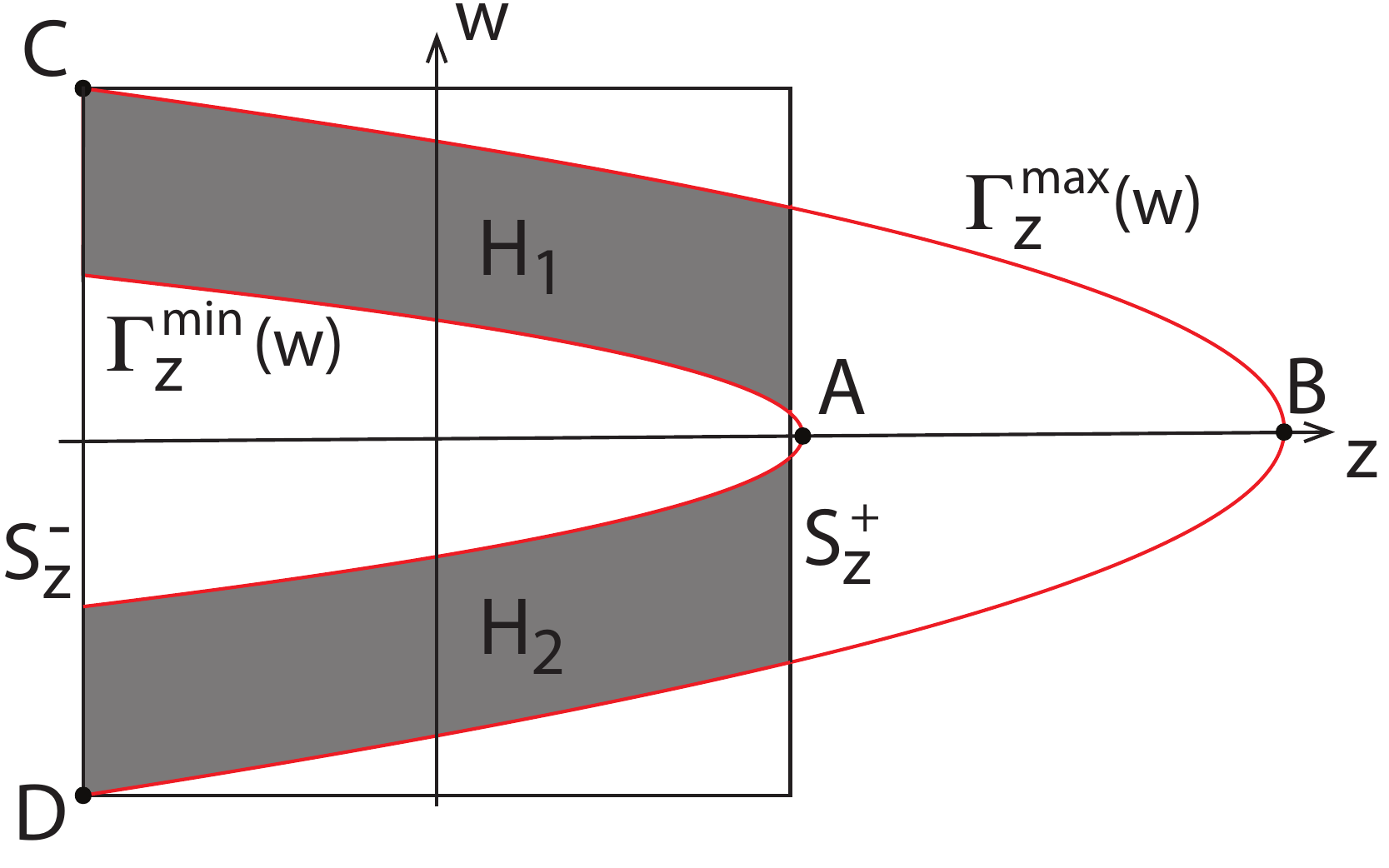} 
        \caption{(Schematic, color online) The gap bounded by $\Gamma^{\min}_z(w)$, $\Gamma^{\max}_z(w)$ , and $S^{\pm}_z$ consists of two disjoint horizontal strips $H_1$ and $H_2$.}   \label{fig:4D_triply_folded_horseshoe_z_w_slice_Gammas}
\end{figure}

Let $S^{\pm}_z = \lbrace (z,w) | z=\pm r, |w|\leq r \rbrace$. The possible location of the two parabolas can be further narrowed down by proving the following facts:
\begin{itemize}
\item[(a)] The vertex of $\Gamma^{\min}_z(w)$ is located on the right-hand side of $S^+_z$, as labeled by $A$ in Fig.~\ref{fig:4D_triply_folded_horseshoe_z_w_slice_Gammas};
\item[(b)] $\Gamma^{\max}_z(w)$ intersects $S^-_z$ at two points, as labeled by $C$ and $D$ in Fig.~\ref{fig:4D_triply_folded_horseshoe_z_w_slice_Gammas}. 
\end{itemize} 

To prove (a), let $A=(z_A,w_A)$. It can be solved easily that
\begin{equation}
w_A=0, ~~~~ z_A = \Gamma^{\min}_z(w_A=0)=a-r. 
\end{equation}
Using the assumption that $a>5+2\sqrt{5}$, it is straightforward to verify that $a-2r>0$, thus $z_A >r$, i.e., $A$ is on the right-hand side of $S^{+}_z$. 

To prove (b), notice that
\begin{equation}
\Gamma^{\max}_z(w=\pm r) = -r^2 +a +r = -r\ ,
\end{equation}
thus $C$ and $D$ are located at
\begin{equation}
C = (-r,r), ~~~~ D=(-r,-r)\ ,
\end{equation}
i.e., $C$ and $D$ are the upper-left and lower-left corners of $V$, respectively, as labeled in Fig.~\ref{fig:4D_triply_folded_horseshoe_z_w_slice_Gammas}. Therefore, $\Gamma^{\max}_z(w)$ intersects $S^-_z$ at its two endpoints. 

Combining (a) and (b), we know that the region bounded by $\Gamma^{\max}_z(w)$, $\Gamma^{\min}_z(w)$, and $S^{\pm}_z$ consists of two disjoint horizontal slabs, labeled by $H_1$ and $H_2$ in Fig.~\ref{fig:4D_triply_folded_horseshoe_z_w_slice_Gammas}. Strictly speaking, both $H_1$ and $H_2$ depend on $x$ and $y$, i.e., the position of $(z,w)$-slice in the $(x,y)$-subspace, therefore should be written as $H_1(x,y)$ and $H_2(x,y)$. However, since the $(x,y)$-dependence will not be used for the rest of the proof, we simply omit it and write the horizontal slabs without explicit $(x,y)$-dependence. When viewed in each $\Sigma^2(x,y)$, the restriction $f_{\rm IV}(d(v))\cap R|_{\Sigma^2(x,y)}$ lies within $H_1$ and $H_2$:
\begin{equation}\label{eq:V intersects f V V z w slice first bound}
f_{\rm IV}(d(v))\cap R \Big|_{\Sigma^2(x,y)} \subset H_1 \cup H_2 \ .
\end{equation}

At this point, let us notice that Eq.~(\ref{eq:V intersects f V V z w slice first bound}) only makes use of the second row of Eq.~(\ref{eq:f IV V}), thus only provides a crude bound for $f_{\rm IV}(d(v))\cap R|_{\Sigma^2(x,y)}$. Based upon Eq.~(\ref{eq:V intersects f V V z w slice first bound}), we now further refine the bound for $f_{\rm IV}(d(v))\cap R|_{\Sigma^2(x,y)}$ by imposing the third row of Eq.~(\ref{eq:f IV V}). 

The third row of Eq.~(\ref{eq:f IV V}) can be expressed in the parameterized form  
\begin{equation}\label{eq:z w parabola 2 f_V}
a - y - (a - z - w^2)^2 = -s, 
\qquad |s| \leq r,
\end{equation}
from which solving for \( z \) yields two branches:  
\begin{equation}\label{eq: z w parabola 2 f_V two branches}
z_{\pm}(w,y,s) = -w^2 + a \pm \sqrt{s - y + a}.
\end{equation}
We therefore define two families of parabolas in \(\Sigma^2(x,y)\), denoted \(\Lambda^{\pm}_z(w,y,s)\), by  
\begin{equation}\label{eq:Lambda z in terms of w y s}
\Lambda^{\pm}_z(w,y,s) = -w^2 + a \pm \sqrt{s - y + a},
\end{equation}
where \( y \) and \( s \) are treated as parameters satisfying \( |y|, |s| \leq r \).

In each \(\Sigma^2(x,y)\) slice, \(\Lambda^{+}_z(w,y,s)\) forms a family of parabolas parameterized by \(s\), bounded by  
\begin{equation}\label{eq:Lambda z + lower and upper bounds on each slice}
\Lambda^{+,1}_z(w,y) \leq \Lambda^{+}_z(w,y,s) \leq \Lambda^{+,2}_z(w,y),
\end{equation}  
where the lower and upper bounds are attained, respectively, at  
\begin{eqnarray}
& \Lambda^{+,1}_z(w,y) = \Lambda^+_z(w,y,s)\big|_{s=-r} = -w^2 + a + \sqrt{a-y-r}, \label{eq:Lambda z + 1} \\
& \Lambda^{+,2}_z(w,y) = \Lambda^+_z(w,y,s)\big|_{s=r}  = -w^2 + a + \sqrt{a-y+r}. \label{eq:Lambda z + 2}
\end{eqnarray}  

Similarly, in each \(\Sigma^2(x,y)\) slice, \(\Lambda^{-}_z(w,y,s)\) constitutes a family of parabolas parameterized by \(s\), bounded by  
\begin{equation}\label{eq:Lambda z - lower and upper bounds on each slice}
\Lambda^{-,1}_z(w,y) \leq \Lambda^{-}_z(w,y,s) \leq \Lambda^{-,2}_z(w,y),
\end{equation}  
with the lower and upper bounds attained, respectively, at  
\begin{eqnarray}
& \Lambda^{-,1}_z(w,y) = \Lambda^-_z(w,y,s)\big|_{s=r}  = -w^2 + a - \sqrt{a-y+r}, \label{eq:Lambda z - 1} \\
& \Lambda^{-,2}_z(w,y) = \Lambda^-_z(w,y,s)\big|_{s=-r} = -w^2 + a - \sqrt{a-y-r}. \label{eq:Lambda z - 2}
\end{eqnarray}  

\begin{figure}
        \centering
        \includegraphics[width=0.7\linewidth]{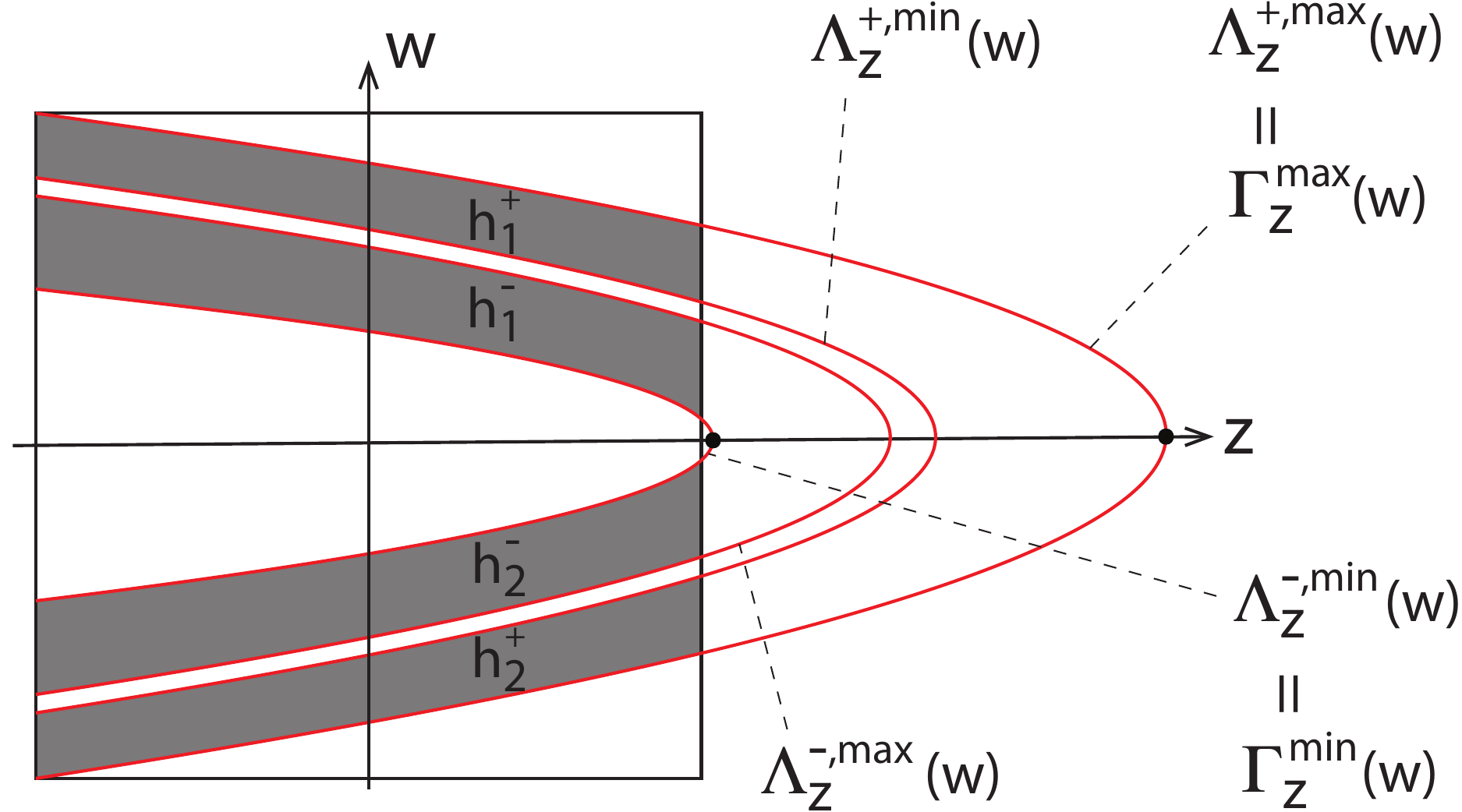} 
        \caption{(Schematic, color online) In each $\Sigma^2(x,y)$ slice, the gap between $\Lambda^{+,\max}_z(w)$ and $\Lambda^{+,\min}_z(w)$ intersects $R|_{\Sigma^2(x,y)}$ in two disjoint horizontal slabs ($h^+_1$ and $h^+_2$). Likewise, the gap between $\Lambda^{-,\max}_z(w)$ and $\Lambda^{-,\min}_z(w)$ produces another two disjoint horizontal slabs ($h^-_1$ and $h^-_2$). Note the two pairs of identical parabolas: $\Lambda^{-,\min}_z(w) = \Gamma^{\min}_z(w)$ and $\Lambda^{+,\max}_z(w) = \Gamma^{\max}_z(w)$.}
        \label{fig:4D_triply_folded_horseshoe_z_w_slice_Lambdas}
\end{figure}

It is convenient to remove the $y$-dependence from Eqs.~(\ref{eq:Lambda z + lower and upper bounds on each slice}) and (\ref{eq:Lambda z - lower and upper bounds on each slice}) by introducing uniform bounds for $\Lambda^{\pm}_z(w,y,s)$ over all $(y,s)$. A straightforward calculation yields
\begin{equation}\label{eq:Gamma z +- uniform bounds}
\Lambda^{\pm,\min}_z(w) \leq \Lambda^{\pm}_z(w,y,s) \leq \Lambda^{\pm,\max}_z(w),
\end{equation}
where the extremal values occur at
\begin{eqnarray}
& \Lambda^{+,\min}_z(w) = \Lambda^+_z(w,y,s)\big|_{(y,s)=(r,-r)} = -w^2 + a + \sqrt{a-2r}, \label{eq:Lambda z + min} \\
& \Lambda^{+,\max}_z(w) = \Lambda^+_z(w,y,s)\big|_{(y,s)=(-r,r)} = -w^2 + a + \sqrt{a+2r}, \label{eq:Lambda z + max} \\
& \Lambda^{-,\min}_z(w) = \Lambda^-_z(w,y,s)\big|_{(y,s)=(-r,r)} = -w^2 + a - \sqrt{a+2r}, \label{eq:Lambda z - min} \\
& \Lambda^{-,\max}_z(w) = \Lambda^-_z(w,y,s)\big|_{(y,s)=(r,-r)} = -w^2 + a - \sqrt{a-2r}. \label{eq:Lambda z - max}
\end{eqnarray}

The four parabolas in Eqs.~(\ref{eq:Lambda z + min})–(\ref{eq:Lambda z - max}) are shown schematically in Fig.~\ref{fig:4D_triply_folded_horseshoe_z_w_slice_Lambdas}.  
Since $a > 5 + 2\sqrt{5}$, we have
\begin{equation}\label{eq:a-2r}
a - 2r > 0,
\end{equation}
ensuring that the square roots in Eqs.~(\ref{eq:Lambda z + min}) and (\ref{eq:Lambda z - max}) are real.  
Moreover, because $r = \sqrt{a+2r}$, it follows that
\begin{eqnarray}
& \Lambda^{+,\max}_z(w) = \Gamma^{\max}_z(w), \label{eq:Identical parabola bounds 2 max} \\
& \Lambda^{-,\min}_z(w) = \Gamma^{\min}_z(w), \label{eq:Identical parabola bounds 2 min}
\end{eqnarray}
as indicated in Fig.~\ref{fig:4D_triply_folded_horseshoe_z_w_slice_Lambdas}.  
Thus, conditions (a) and (b) apply directly to $\Lambda^{-,\min}_z(w)$ and $\Lambda^{+,\max}_z(w)$, respectively.  
This guarantees that the gap between $\Lambda^{+,\max}_z(w)$ and $\Lambda^{+,\min}_z(w)$ intersects $R|_{\Sigma^2(x,y)}$ in two disjoint horizontal slabs ($h^+_1$ and $h^+_2$), and the gap between $\Lambda^{-,\max}_z(w)$ and $\Lambda^{-,\min}_z(w)$ produces another two disjoint horizontal slabs ($h^-_1$ and $h^-_2$). These four slabs are mutually disjoint.

Finally, Eqs.~(\ref{eq:Identical parabola bounds 2 max})–(\ref{eq:Identical parabola bounds 2 min}) imply that $h^{\pm}_1 \subset H_1$ and $h^{\pm}_2 \subset H_2$.  
Therefore, in each $\Sigma^2(x,y)$ slice (see Fig.~\ref{fig:4D_triply_folded_horseshoe_z_w_slice_Lambdas}),
\begin{equation}\label{eq:V intersects f_V V z w slice second bound}
f_{\rm IV}(d(v)) \cap R|_{\Sigma^2(x,y)} 
\subset h^{+}_1 \cup h^{-}_1 \cup h^{+}_2 \cup h^{-}_2 
\subset H_1 \cup H_2.
\end{equation}

At this point, let us observe that Eq.~(\ref{eq:V intersects f_V V z w slice second bound}) relies solely on the first three rows of Eq.~(\ref{eq:f IV V}), and thus provides yet again a crude estimate of $f_{\rm IV}(d(v))\cap R|_{\Sigma^2(x,y)}$. In the following, we demonstrate that by additionally imposing the fourth row of Eq.~(\ref{eq:f IV V}), the location of $f_{\rm IV}(d(v))\cap R|_{\Sigma^2(x,y)}$ can be verified---that is, it consists of eight disjoint horizontal slices of $R$, as asserted by the theorem, each of which can be shown to lie within one of eight disjoint horizontal slabs, as indicated by the eight shaded strips in Fig.~\ref{fig:4D_triply_folded_horseshoe_z_w_slice_Thetas}. 

The fourth row of Eq.~(\ref{eq:f IV V}) can be rewritten in the parameterized form  
\begin{equation}\label{eq:z w parabola 3 f_V}
a - x - [\,a - y - (a - z - w^2)^2\,]^2 = -s,
\end{equation}  
where \( s = -w_v \in [-r,r] \) is treated as a bounded parameter determined by the initial choice of \( v \in I^s \).

Solving for \( z \) yields four solution branches:  
\begin{eqnarray}\label{eq:z w parabola 3 f_V four branches}
z_{++} &= -w^2 + a + \sqrt{a-y + \sqrt{s-x+a}}, \label{eq:z w parabola 3 f_V four branches ++} \\
z_{+-} &= -w^2 + a + \sqrt{a-y - \sqrt{s-x+a}}, \label{eq:z w parabola 3 f_V four branches +-} \\
z_{-+} &= -w^2 + a - \sqrt{a-y + \sqrt{s-x+a}}, \label{eq:z w parabola 3 f_V four branches -+} \\
z_{--} &= -w^2 + a - \sqrt{a-y - \sqrt{s-x+a}}. \label{eq:z w parabola 3 f_V four branches --}
\end{eqnarray}
Since \( |s|, |x|, |y| \leq r \) and \( a - 2r > 0 \), it follows that \( s - x + a > 0 \) and \( a - y \pm \sqrt{s-x+a} > 0 \).  
Thus, all square roots in Eqs.~(\ref{eq:z w parabola 3 f_V four branches ++})--(\ref{eq:z w parabola 3 f_V four branches --}) are real-valued.

Accordingly, we define four families of parabolas in \( \Sigma^2(x,y) \), denoted by \( \Theta_z^{\pm\pm}(w,x,y,s) \):  
\begin{eqnarray}
\Theta_z^{++}(w,x,y,s) &= -w^2 + a + \sqrt{a-y + \sqrt{s-x+a}}, \label{eq:Theta z in terms of w x y s ++} \\
\Theta_z^{+-}(w,x,y,s) &= -w^2 + a + \sqrt{a-y - \sqrt{s-x+a}}, \label{eq:Theta z in terms of w x y s +-} \\
\Theta_z^{-+}(w,x,y,s) &= -w^2 + a - \sqrt{a-y + \sqrt{s-x+a}}, \label{eq:Theta z in terms of w x y s -+} \\
\Theta_z^{--}(w,x,y,s) &= -w^2 + a - \sqrt{a-y - \sqrt{s-x+a}}, \label{eq:Theta z in terms of w x y s --}
\end{eqnarray}
where \( x \), \( y \), and \( s = -w_v \) are treated as parameters satisfying \( |x| \leq r \), \( |y| \leq r \), and \( |s| \leq r \).

For a given \( (x,y) \), the lower and upper bounds of \( \Theta^{\pm\pm}_z(w,x,y,s) \) with respect to variations in \( s \) are denoted by \( \Theta^{\pm\pm,1}_z(w,x,y) \) and \( \Theta^{\pm\pm,2}_z(w,x,y) \), respectively.  
They satisfy  
\begin{equation}\label{eq:lower upper bounds Theta in z w slice}
\Theta^{\pm\pm,1}_z(w,x,y) \leq \Theta^{\pm\pm}_z(w,x,y,s) \leq \Theta^{\pm\pm,2}_z(w,x,y),
\end{equation}
and are given explicitly by  
\begin{eqnarray}
\Theta_z^{++,1}(w,x,y) &= -w^2 + a + \sqrt{a-y + \sqrt{-r-x+a}}, \label{eq:Theta z ++ 1} \\
\Theta_z^{++,2}(w,x,y) &= -w^2 + a + \sqrt{a-y + \sqrt{r-x+a}}, \label{eq:Theta z ++ 2} \\
\Theta_z^{+-,1}(w,x,y) &= -w^2 + a + \sqrt{a-y - \sqrt{r-x+a}}, \label{eq:Theta z +- 1} \\
\Theta_z^{+-,2}(w,x,y) &= -w^2 + a + \sqrt{a-y - \sqrt{-r-x+a}}, \label{eq:Theta z +- 2} \\
\Theta_z^{-+,1}(w,x,y) &= -w^2 + a - \sqrt{a-y + \sqrt{r-x+a}}, \label{eq:Theta z -+ 1} \\
\Theta_z^{-+,2}(w,x,y) &= -w^2 + a - \sqrt{a-y + \sqrt{-r-x+a}}, \label{eq:Theta z -+ 2} \\
\Theta_z^{--,1}(w,x,y) &= -w^2 + a - \sqrt{a-y - \sqrt{-r-x+a}}, \label{eq:Theta z -- 1} \\
\Theta_z^{--,2}(w,x,y) &= -w^2 + a - \sqrt{a-y - \sqrt{r-x+a}}. \label{eq:Theta z -- 2}
\end{eqnarray}

\begin{figure}
        \centering
        \includegraphics[width=0.8\linewidth]{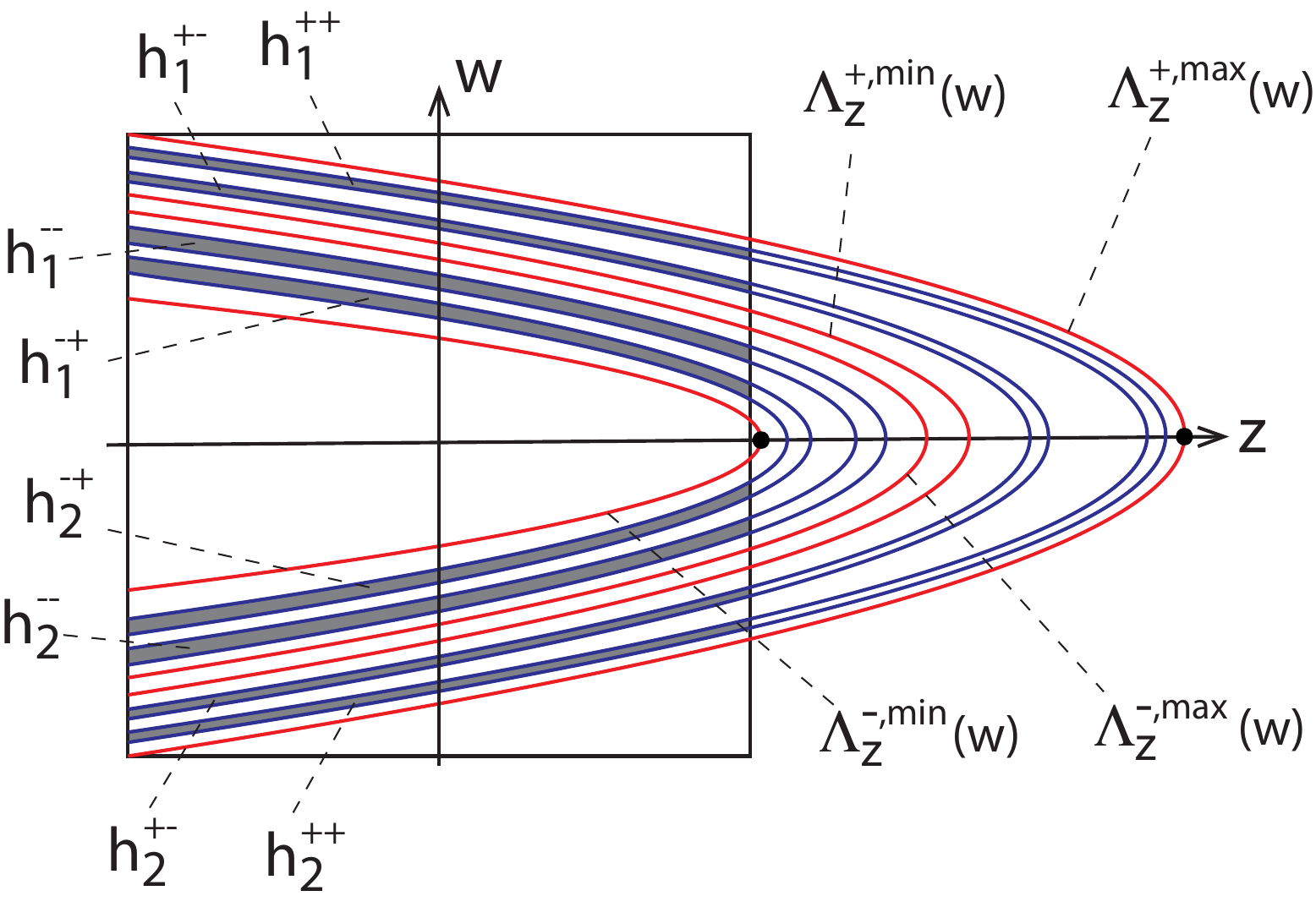} 
        \caption{(Schematic, color online) The eight parabolas in Eq.~(\ref{eq:Theta z pm pm 1 2 ordering}) (blue curves) are arranged from right to left. The gap between each successive pair—(\(1^{\rm st}\), \(2^{\rm nd}\)), (\(3^{\rm rd}\), \(4^{\rm th}\)), (\(5^{\rm th}\), \(6^{\rm th}\)), and (\(7^{\rm th}\), \(8^{\rm th}\))—intersects \( R|_{\Sigma^2(x,y)} \) in two disjoint horizontal slabs, yielding eight mutually disjoint slabs (shaded) labeled \( h^{\pm\pm}_1 \) and \( h^{\pm\pm}_2 \). The restriction \( f_{\rm IV}(d(v)) \cap R|_{\Sigma^2(x,y)} \) therefore consists of eight disjoint horizontal slices, each contained in one of these slabs.
}   \label{fig:4D_triply_folded_horseshoe_z_w_slice_Thetas}
\end{figure}

For any fixed \( (w,x,y) \), the ordering of the eight parabolas is  
\begin{eqnarray}
&\Theta^{++,2}_z(w,x,y) > \Theta^{++,1}_z(w,x,y) > \Theta^{+-,2}_z(w,x,y) > \Theta^{+-,1}_z(w,x,y) \nonumber \\
&> \Theta^{--,2}_z(w,x,y) > \Theta^{--,1}_z(w,x,y) > \Theta^{-+,2}_z(w,x,y) > \Theta^{-+,1}_z(w,x,y) \label{eq:Theta z pm pm 1 2 ordering}
\end{eqnarray}
showing that they are positioned from right to left without intersections, as illustrated in Fig.~\ref{fig:4D_triply_folded_horseshoe_z_w_slice_Thetas}.

Having determined the relative ordering of the eight parabolas, we next bound their positions.  
In particular, we will show that the four parabolas in the first row of Eq.~(\ref{eq:Theta z pm pm 1 2 ordering}) lie between \( \Lambda^{+,\max}_z(w) \) and \( \Lambda^{+,\min}_z(w) \), while those in the second row lie between \( \Lambda^{-,\max}_z(w) \) and \( \Lambda^{-,\min}_z(w) \), as shown in Fig.~\ref{fig:4D_triply_folded_horseshoe_z_w_slice_Thetas}.

To prove this, we derive uniform lower and upper bounds for each of Eqs.~(\ref{eq:Theta z in terms of w x y s ++})–(\ref{eq:Theta z in terms of w x y s --}) as the parameters \( x \), \( y \), and \( s \) vary:  
\begin{equation}\label{eq:Thetas pm uniform bounds}
\Theta^{\pm\pm,\min}_z(w) \leq \Theta^{\pm\pm}_z(w,x,y,s) \leq \Theta^{\pm\pm,\max}_z(w),
\end{equation} 
where the bounds are given explicitly by  
\begin{eqnarray}
\Theta^{++,\min}_z(w) &= \Theta^{++}_z(w,x,y,s)|_{(x,y,s)=(r,r,-r)} \nonumber \\
&= -w^2 + a + \sqrt{a-r + \sqrt{a-2r}}, \label{eq:Theta z in terms of w x y s ++ min} \\
\Theta^{++,\max}_z(w) &= \Theta^{++}_z(w,x,y,s)|_{(x,y,s)=(-r,-r,r)} \nonumber \\
&= -w^2 + a + \sqrt{a+r + \sqrt{a+2r}} = \Lambda^{+,\max}_z(w), \label{eq:Theta z in terms of w x y s ++ max} \\
\Theta^{+-,\min}_z(w) &= \Theta^{+-}_z(w,x,y,s)|_{(x,y,s)=(-r,r,r)} \nonumber \\
&= -w^2 + a + \sqrt{a-r - \sqrt{a+2r}} = \Lambda^{+,\min}_z(w), \label{eq:Theta z in terms of w x y s +- min} \\
\Theta^{+-,\max}_z(w) &= \Theta^{+-}_z(w,x,y,s)|_{(x,y,s)=(r,-r,-r)} \nonumber \\
&= -w^2 + a + \sqrt{a+r - \sqrt{a-2r}}, \label{eq:Theta z in terms of w x y s +- max} \\
\Theta^{-+,\min}_z(w) &= \Theta^{-+}_z(w,x,y,s)|_{(x,y,s)=(-r,-r,r)} \nonumber \\
&= -w^2 + a - \sqrt{a+r + \sqrt{a+2r}} = \Lambda^{-,\min}_z(w), \label{eq:Theta z in terms of w x y s -+ min} \\
\Theta^{-+,\max}_z(w) &= \Theta^{-+}_z(w,x,y,s)|_{(x,y,s)=(r,r,-r)} \nonumber \\
&= -w^2 + a - \sqrt{a-r + \sqrt{a-2r}}, \label{eq:Theta z in terms of w x y s -+ max} \\
\Theta^{--,\min}_z(w) &= \Theta^{--}_z(w,x,y,s)|_{(x,y,s)=(r,-r,-r)} \nonumber \\
&= -w^2 + a - \sqrt{a+r - \sqrt{a-2r}}, \label{eq:Theta z in terms of w x y s -- min} \\
\Theta^{--,\max}_z(w) &= \Theta^{--}_z(w,x,y,s)|_{(x,y,s)=(-r,r,r)} \nonumber \\
&= -w^2 + a - \sqrt{a-r - \sqrt{a+2r}} = \Lambda^{-,\max}_z(w), \label{eq:Theta z in terms of w x y s -- max}
\end{eqnarray}
where the final equalities in Eqs.~(\ref{eq:Theta z in terms of w x y s ++ max}), (\ref{eq:Theta z in terms of w x y s +- min}), (\ref{eq:Theta z in terms of w x y s -+ min}), and (\ref{eq:Theta z in terms of w x y s -- max}) follow from \( \sqrt{a+2r} = r \).  

Combining Eqs.~(\ref{eq:Theta z pm pm 1 2 ordering}), (\ref{eq:Thetas pm uniform bounds}), (\ref{eq:Theta z in terms of w x y s ++ max}), (\ref{eq:Theta z in terms of w x y s +- min}), (\ref{eq:Theta z in terms of w x y s -+ min}), and (\ref{eq:Theta z in terms of w x y s -- max}) yields  
\begin{eqnarray}
&\Lambda^{+,\max}_z(w) \geq \Theta^{++,2}_z(w,x,y), \label{eq:Theta z final bounds 1} \\
&\Theta^{+-,1}_z(w,x,y) \geq \Lambda^{+,\min}_z(w), \label{eq:Theta z final bounds 2} \\
&\Lambda^{-,\max}(w) \geq \Theta^{--,2}_z(w,x,y), \label{eq:Theta z final bounds 3} \\
&\Theta^{-+,1}_z(w,x,y) \geq \Lambda^{-,\min}_z(w). \label{eq:Theta z final bounds 4}
\end{eqnarray}
These relations confirm that the four parabolas in the first row of Eq.~(\ref{eq:Theta z pm pm 1 2 ordering}) lie between \( \Lambda^{+,\max}_z(w) \) and \( \Lambda^{+,\min}_z(w) \), while the four in the second row lie between \( \Lambda^{-,\max}_z(w) \) and \( \Lambda^{-,\min}_z(w) \), as illustrated in Fig.~\ref{fig:4D_triply_folded_horseshoe_z_w_slice_Thetas}.

Therefore, the eight parabolas in Eq.~(\ref{eq:Theta z pm pm 1 2 ordering}) are arranged so that the gap between each successive pair—(first, second), (third, fourth), (fifth, sixth), and (seventh, eighth)—intersects \( R|_{\Sigma^2(x,y)} \) in two disjoint horizontal slabs.  
This yields a total of eight mutually disjoint horizontal slabs, labeled \( h^{\pm\pm}_1 \) and \( h^{\pm\pm}_2 \) in Fig.~\ref{fig:4D_triply_folded_horseshoe_z_w_slice_Thetas} (shaded).  
Hence, in each \( \Sigma^2(x,y) \) slice, \( f_{\rm IV}(d(v)) \cap R|_{\Sigma^2(x,y)} \) consists of eight disjoint horizontal slices of \( R|_{\Sigma^2(x,y)} \), each lying within one of these slabs:  
\begin{eqnarray}\label{eq:V intersects f_V V z w slice final bound}
f_{\rm IV}(d(v))\cap R \Big|_{\Sigma^2(x,y)} 
&\subset \bigcup_{i\in \{1,2\}} \left( h^{++}_i \cup h^{+-}_i \cup h^{-+}_i \cup h^{--}_i \right) \nonumber \\
&\subset h^{+}_1 \cup h^{-}_1 \cup h^{+}_2 \cup h^{-}_2 \subset H_1 \cup H_2 \, .
\end{eqnarray}
The theorem is thus proved.

\end{proof}

\begin{remark}
By Theorem~\ref{4D triply folded horseshoe topology} and Proposition~\ref{Semi-conjugacy to the full shift on N symbols}, there exists a nonempty compact invariant set 
\(\Lambda \subset \bigcap_{n\in\mathbb{Z}} f_{\rm IV}^n(R)\) 
such that the restricted map \( f_{\rm IV}|_{\Lambda} \) is semi-conjugate to the full shift on eight symbols.
\end{remark}

\section{Conclusion}\label{Conclusion}

We have constructed explicit three- and four-dimensional H\'{e}non-type maps that realize horseshoe structures with folding topologies unattainable in two dimensions. By tailoring each map to implement specific combinations of crease orientations and stacking directions, we obtained a variety of qualitatively distinct configurations, each naturally captured by an associated paperfolding template. 

These examples represent only a small subset of the folding–stacking arrangements possible in higher-dimensional phase spaces. Many additional topologies could emerge by varying the dimensionality of the folded sheet, the embedding space, and the orientations of creases and stacking directions. A natural next step is to determine which of these configurations occur generically in broad classes of multidimensional maps and to assess their implications for the corresponding symbolic dynamics.

In the symplectic setting (map~$f_{\rm III}$), our earlier work~\cite{Fujioka25} demonstrated that parameter regions associated with the Type-A and Type-B AI limits support both a topological horseshoe and uniform hyperbolicity, ensuring conjugacy to a full shift. Extending such results to the other maps introduced here remains an open problem. Computer-assisted methods~\cite{arai2007hyperbolic} offer a promising means to rigorously delineate parameter regions of uniform hyperbolicity and to investigate how distinct horseshoe topologies coexist and bifurcate within parameter space. Such studies would advance the systematic classification of chaotic structures in multidimensional dynamical systems.

\section*{Acknowledgement}
J.L. thanks Steven Tomsovic for many insightful and inspiring discussions.
J.L. and A.S. acknowledge support from the Japan Society for the Promotion of Science (JSPS) through the JSPS Postdoctoral Fellowships for Research in Japan (Standard).
This work was also supported by JSPS KAKENHI (Grant Nos.~17K05583, 19F19315, and 22H01146) and by the Japan Science and Technology Agency (JST) under the program “Establishment of University Fellowships toward the Creation of Science and Technology Innovation” (Grant No.~JPMJFS2139).

\section*{References}
\bibliographystyle{iopart-num}
\bibliography{classicalchaos,paperfolding}
\end{document}